\DeclareMathOperator{\lcm}{lcm}
\newcommand{\figureLarge}[2][BILD]{
\begin{figure*}
\includegraphics[width=\textwidth]{#2}
\caption{#1}
\label{#2}
\end{figure*}
}
\lstdefinestyle{customc}{
  belowcaptionskip=1\baselineskip,
  breaklines=true,
  captionpos=b,
  language=C,
  showstringspaces=false,
  basicstyle=\footnotesize\ttfamily,,
}
\begin{document}

\title{HeRTA: Heaviside Real-Time Analysis}
\subtitle{A Unified Scheduling Theory for the Analysis of Real-Time Systems}
\author{Frank Slomka and Mohammadreza Sadeghi}
\institute{Frank Slomka* and Mohammadreza Sadeghi \at
              Embedded Systems/Real-Time Systems, Ulm University \\
              Tel.: +49-731-5024180\\
              Fax: +49-731-5024181\\
              \email{frank.slomka@uni-ulm}  
 }

\journalname{arXiv.org}
\date{}

\maketitle

\begin{abstract}
We investigate the mathematical properties of event bound functions as they are used in the worst-case response time analysis and utilization tests. We figure out the differences and similarities between the two approaches. Based on this analysis, we derive a more general form do describe events and event bounds. This new unified approach gives clear new insights in the investigation of real-time systems, simplifies the models and will support algebraic proofs in future work. In the end, we present a unified analysis which allows the algebraic definition of any scheduler. Introducing such functions to the real-time scheduling theory will lead two a more systematic way to integrate new concepts and applications to the theory. Last but not least, we show how the response time analysis in dynamic scheduling can be improved.
\end{abstract}

\keywordname{ Scheduling Theory; Feasibility Test; Response Time Analysis; Static Scheduling; Dynamic Scheduling; Unification of Scheduling Theory; Dirac Delta; Heaviside Function}	

\begin{acknowledgement}
This work was supported by the Deutsche Forschungsgemeinschaft (DFG) under grand SL 47/17-1. Credits to Iwan Feras Fattohi for his critical comments and Kilian Kempf for proofreading. Finally, the paper is dedicated to Ulrich Slomka and Ulrich Herzog.
\end{acknowledgement}

\section{Introduction}
\label{scrivauto:2}

If we have a careful review of existing work in real-time scheduling theory, we found two different approaches to satisfy the real-time capability of an embedded system: the feasibility test or in more general the utilization based approach and the response time analysis. The feasibility tests compute the utilization of a hardware resource as the response time analysis focus on the behaviour of tasks. In system analysis, both approaches are useful. However, looking to related work, the two approaches are different in one small detail: while the utilization based tests computations are built on the floor operator, the response time analysis uses the ceiling operator. Nevertheless, if we look closer to previous work, this leads to problems in formulating a utilization based test for static scheduling and a response time analysis for dynamic scheduling. However, in the practical use of the scheduling theory, the analysis of static scheduling prefers the response time analysis while the analysis in dynamic scheduling prefers the utilization based test. The reason for this observation is that the mathematical expressiveness of both functions is limited: the floor and the ceiling operator does not support algebraic properties such as distributivity and commutativity. Besides, these operators are not analytical in the sense that calculus is not well supported. The work of \cite{biondi:2014} and \cite{Stigge:2013} shows the limitations on floor and ceil operators in the context of the real-time scheduling theory. Both papers postulate new analysis techniques if an event count with more mathematical expressiveness is known. From a practical perspective, real-time analysis work always covers one concrete problem, and the algorithms published are solving just this particular problem. Combining different ideas is difficult because the task models often change. Sometimes different algorithms are used to address different problems in the application of the theory. \cite{anssi2013gap} discusses different real-time analysis methods to compute task response times to cover multiple issues in the automotive industry and find that different approaches are necessary to cover all aspects needed.

 This paper presents an approach to address both problems directly. If we look at another domain in science and engineering, the problem of discrete and continuous behaviour was already addressed. In digital signal processing and digital control theory both worlds, the discrete and the continuous nature of systems are combined. The idea of this paper is to adapt mathematical models used in physics, signal- and control theory to the problem of real-time scheduling analysis. As a result, we present

\begin{itemize}
\item a new universal mathematical framework, which allows replacing geometric proofs given by diagrams and known from previous work with new algebraic and analytical methods,
\item a new generic approach to formulate interfering tasks in different scheduling policies
\item and therefore a unified formulation of the feasibility- and the response time analysis in static and dynamic scheduling based on just one equation.
\item Additionally we adopt assumptions of the analysis of arbitrary deadlines to the analysis of response times in dynamic scheduled systems and found a deterministic and tighter analysis as in previous work.
\end{itemize} 

\section{Related work}
\label{scrivauto:6}

Real-time systems are computer systems whose software must complete calculations within fixed deadlines. For this purpose, individual tasks of the program are divided into individual and independently executable tasks. To ensure a response of tasks to given deadlines, a real-time system requires an operating system that generates predictable execution sequences. If an operating system delivers predictable schedules a mathematical model can be derived and deadline compliance can be calculated. During the Apollo missions to the moon, the first today-like real-time computer was used for guidance and navigation (Apollo Guidance Computer, AGC, \cite{mindell2008digital}, p.221 ff) . During this time software engineers expect a task utilization of 80\% will guarantee a correct real-time behavior of the AGC. However, on the 20th of July 1969 during the first manned landing, the computer of the lunar module Eagle gave a program alarm at decent to the surface of the moon. During the whole landing the computer had to be reseted three times and the mission was short before abort. A later analysis at NASA figured out, that a wrong real-time behavior and the missing of the deadline of a flight critical task led to the problem and the 1202 program alarm of the AGC. Later on, a mathematical analysis of \cite{Liu:1973} showed that the assumption a utilization of 80\% on static real-time scheduling (rate monotonic scheduling, RMS) resulted in missing deadlines. \cite{Liu:1973} showed that the utilization limit of a static real-time task set is dependent on the number of tasks, and in the limit on a large number of tasks is only 69\%. However, while this limit is only necessary and not sufficient, it was necessary to develop further real-time tests. While \cite{Liu:1973} considered the utilization of a task set in static and dynamic scheduling, other researchers followed an different approach, computing the response times of all tasks of a task set, as given by \cite{Joseph:1986}. Since both, \cite{Liu:1973} as well as \cite{Joseph:1986} assumed implied deadlines defined by the period of events, \cite{leung1982complexity} showed that deadline monotonic scheduling DMS) was the optimal priority assignment when the deadline is smaller than the period and \cite{Lehoczky:1990} introduced a schedulability test on given checkpoints to DMS. 

This first work in real-time scheduling theory were limited to uni-processor systems. An extention to distributed systems gives \cite{Tindell:1994} by introducing a jitter based periodic event model. Later on, the response time analysis was generalized by \cite{richter05} to integrate more complex event models. The response time analysis as given by \cite{Lehoczky:1990} is limited to systems with static priorities. The extension for dynamically scheduled (earliest deadline first, EDF) real-time systems \cite{Pal03} and \cite{Pal05} needs to distinguish between different dynamic cases during analysis. This makes the approach complex. The real-time analysis distinguishes between load analysis (processor load) \cite {Liu:1973} and response time analysis \cite{Lehoczky:1990}. Therefore, both directions are discussed independently in literature. The utilization based approach was extended and improved by \cite{BaruahHR93}, who introduced shorter deadlines to the analysis to dynamic scheduling. However, this work supports only the periodic event model. A more general approach to model different and complex worst-case event patterns was first introduced by \cite{gresser1993b}. This event stream model could be very easy combined with Baruahs approach [\cite{AlbersS04}]. Because the analysis algorithm has a bad run-time complexity some approximations are introduced by [\cite{AlbersS04}] and [\cite{AlbersS05}] for dynamic scheduling and by \cite{fisher05b} for static scheduling. While the work of \cite{gresser1993b} does not model event bursts in an appropriate way, [\cite{AlbersBS06}] introduce hierarchical event streams. Additionally, \cite{Guan14} use Baruahs utilization based scheduling test to design a novel response time analysis for dynamic scheduling. Other extension are the multiframe- \cite{Mok:1997}, the generalized multiframe \cite{Baruah:1999} and the reccurring real-time task model \cite{Baruah:2003}. These techniques allow the modeling of periodic task sequences with jobs with different execution times and extend real-time scheduling theory to the domain of stream processing systems \cite{Baruah:2010}, \cite{Moyo:2010} and with the most powerful model of \cite{Stigge:2011}.    

In addition to these works, which can be assigned to the classical theory of real-time systems (scheduling theory), the real-time behavior of task systems can also be verified with the real-time calculus (RTC). The real-time calculus is based on the network calculus \cite{Cruz:1991:1}, \cite{Cruz:1991:2}, \cite{LeBoudec:1998} which describes a mathematical framework for analyzing the flow of data in networks. \cite{Naedele:1998} and \cite{Thiele:2000} introducing the real-time calculus and apply their work \cite{Thiele:2001}, \cite{Thiele:2002} and \cite{Chakraborty:2003} to the analysis of network processors. It was shown that the classical methods can be replaced by the real time calculus. In contrast to prior work, the real-time calculus allows the calculation of systems with many different scheduling strategies as static- (DMS) and dynamic scheduling (EDF), time-division multiplex access (TDMA) and others. While the approach is modular it also allows hierarchical scheduling. Finally, by \cite{Kuenzli:2007} and \cite{richter05} response time analysis as given by the classical theory were combined with real-time calculus to build an analysis that highlights the strengths of each technique. The disadvantage of this work is that the modelling is not generic and must be redefined for each system to be modelled.

However, the existing work is split in utilization based techniques, response time analysis and the real-time calculus. Each approach has its advantages and disadvantages. Sometimes authors like to combine the different work but often they are missing event bound functions with different properties as given by the established theory. The need for new approaches is given in \cite{Stigge:2013}, \cite{biondi:2014} and \cite{Guan14}. Other authors prefer an analysis technique independent from the application structure \cite{Kuenzli:2007} and \cite{richter05}. 

The goal of the presented work is to combine all different techniques in one single framework. Because the approach uses advanced techniques given by theoretical physics and signal theory it is more compact and expressive than previous work in the real-time domain. Because of its expressiveness it allows the formulation of a closed algebraic method which is open to different problems in real-time analysis. This leads to an easy formulation of utilization based and response-time based analysis in static as well as in dynamic scheduling. The approach allows an easy combination of both scheduling techniques without the overhead to formulate different equations and algorithms. It combines different event models and gives a new approach to the response time analysis of dynamic task systems. For the first time in literature we present an approach which allows the formulation of an explicit function to describe different schedulers.

\section{Model of computation}
\label{scrivauto:8}

Different computational models to analyze real-time systems exist. In this work, we consider the bounded execution time model. We are assuming that the execution flow in real-time systems separates into different tasks. A task is a kind of programming function assigned to an external or internal interrupt - an event - of the system. The tasks are periodically time- or event-controlled. Each event requests a task, and the concrete instance which occurs is called a job. Each job must be executed in a limited time interval: the deadline. In the bounded execution time model, tasks are preempted by higher priority tasks. The priority of the execution of a job can be assigned statically or dynamically. Bounding jobs of a task to a deadline allows any scheduling permutation without any sophisticated scheduling algorithm. In static scheduling, like rate monotonic/deadline monotonic (RMS/DMS) scheduling, the priorities are assigned statically to each task depending on the request rate of the triggering events. In dynamic scheduling, like the earliest deadline first (EDF) policy, the priority of each job depends on the next approaching deadline. Therefore, the scheduling priority is not strictly assigned to tasks. In the classical scheduling theory by \cite{Liu:1973}, the feasibility of a task set in the sense if deadlines met, is proved by computing the utilization of resources like processors or the maximal response time of any worst-case job.

\subsection{Events}
\label{scrivauto:10}

A timing relationship between events is needed to compute a task set's utilization or the response time of the worst-case job or all other jobs as well. The established model defines a sequence of periodic events and the distance in time between events is denoted by a single value: the period $\periodSym \in \fieldRpz$. Because each task has different periods, a function $\period{\task}:=\periodSym(\task)$ may always return the period of the considered task. This event model has been extended to the sporadic event model where the period interprets as minimal inter-task arrival time. The periodic event model with jitter allows considering distributed systems in holistic real-time analysis \cite{Tindell:1994} and \cite{Tindell:1994:Jitter}. This model was extended to include task offsets \cite{Tindell:1994:Offset} and arbitrary deadlines to the response time analysis \cite{tindell1994extendible}. However, a more general model on events was first introduced by \cite{gresser1993b}, has limitations to express bursty event patterns. Hierarchical event streams give a shorthand formulation to solve this problem. In this paper, we consider the periodic or sporadic event model and the event stream model in parallel. The periodic model in this work is used to give the reader a simple link to previous work, while the event stream model is more general and includes all derivates like the sporadic, the bursty, or the periodic model with jitter.

\begin{definition} [Event stream] An event stream is an array of event tuples or an event list:
\label{def: EventList}
\begin{equation}
\eventListSym = \eventList{\tuple{\period{\event}}{\minDistance{\event}}}
\end{equation}
The event stream must be valid, which means the order of the time intervals $\minDistance{\event}$ must be subadditive or superadditive. If the event list does not fulfil the requirement of subadditivity, we call it an event sequence.
\end{definition}

 An event tuple consists of the period $\periodSym$ of an event and a minimal distance $\minDistanceSym$ to another event. The position of the event tuple in the stream array has a meaning: The first tuple initializes the stream. It always has $\minDistanceSym=0$. The second tuple describes the minimal distance between two events, the third between three events and so on. Therefore each tuple represents the minimal distance of the related number of events and its periodical repetition. In this work, each event tuple is indexed by $\event$. Therefore, $\period{\event}$ denotes the period of event $\event$ and $\minDistance{\event}$ the minimal distance $\minDistanceSym$ between $\event$ events. Note, that in this model sporadic events can be described easily: an event which occurs only once has an infinite period.

\begin{example} [Event model: periodic] Assume an event which occurs periodically every $\periodSym$ time:
\begin{equation}
\eventListSym_{periodic} = \eventList{\tuple{\periodSym}{0}}
\end{equation}
The minimal distance of the first initial event is $\minDistanceSym=0$. The event recurs with the period $\periodSym$.
\end{example}

\begin{example} [Event model: periodic with jitter] Assume a sequence of events which are not exactly periodic. If each event jitters around a given period, the worst-case behaviour is given by the following event pattern:
\begin{equation}
\eventListSym_{jitter} = \eventList{{\tuple{\infty}{0}}{\tuple{\periodSym}{\periodSym-2j}}}
\end{equation}
The first tuple describes just the occurrence of one event. Therefore the minimal distance is set to $0$. Note that this tuple is only needed to mark position one in the event stream. In the worst-case, an event of an event sequence has a maximal positive jitter (occurs at $\jitterSym^+$ after $0$) and the next following event has a maximal negative jitter - occurs at $\jitterSym^-$ before $\periodSym$. Then the minimal distance between two events is $\minDistanceSym = \periodSym-\jitterSym^{-}-\jitterSym^+ = \periodSym-2\jitterSym$, if $\jitterSym^-=\jitterSym^+$. From now, in the worst-case, each following event can only occur with a maximal negative jitter. Any other behaviour leads to a relaxed event sequence, and therefore, the considered case gives the densest occurrence of events. In the worst-case, the release of all other events is bounded by $\periodSym$.
\end{example}

However, if events occur bursty, the event stream model becomes complex. The reason is that each event in a burst has to be described explicitly. \cite{AlbersBS06} give a more compact model: the hierarchical event stream:

\begin{definition} [Hierarchical event stream] A hierarchical event stream is an array of event quadruple:
\begin{equation}
\eventSpectrumSym = \eventList{\quadruple{\period{\event}}{n_\event}{\minDistance{\event}}{\eventListSym_\event}}
\end{equation}
\end{definition}

The first tuple in the quadruple is the same as defined originally for event streams. The second tuple additionally defines the hierarchical embedded event stream with a bound $n$. The bound $n$ defines how many events of a second event stream count from the embedded event stream. In this notation, a burst is described by embedding an event stream with a short period inside an event stream with a longer period. If the long or outer period is greater than the shorter or inner period multiplied by the bound, a non-overlapping burst occurs. In \cite{AlbersBS06} and \cite{albers2008advanced}, different conditions and normalizations on hierarchical event streams are discussed. This result expresses the bursts described in \cite{tindell1994extendible} very compact. However, the formulation of a request bound function for bursty event streams is complex in both approaches.

\subsection{Tasks}
\label{scrivauto:21}

The inter-arrival pattern of events only describes the occurrence of events. At each event, an independent part of a program is executed by the operating system. Such an execution unit is called a task $\task$. A real-time application separates into several tasks. Therefore each task is an element of a task set: $\taskSet :=\{\task_1, \task_2,\ldots, \task_n\}$. All tasks must schedule on the given processor in a way that all deadlines met. A scheduler is optimal if no algorithm exists, which produces a better valid schedule. In \cite{Liu:1973} was proven that RMS is optimal for static, and EDF is optimal for dynamic scheduling. Therefore an execution time must be added to the model. Because the execution of a task's job varies and we are only interested in worst-case bounds \cite{Liu:1973}. In the real-time analysis, a task is defined by an inter-arrival pattern of events and the two execution times. In the bounded execution model, the relative deadline specifies the time a task has to finish after being requested. If all tasks are independent, it is not necessary to consider the best case execution time. This parameter is only needed if tasks with data dependencies are running on different processors \cite{graham1976bounds}.

\begin{definition} [Execution time] The execution time of a task is the time the execution of the task needs if a processor exclusively executes the task with no interruption by other tasks. The execution time may depend on data attributes given to the task. Therefore we distinguish between the worst-case or maximal ($\wcetSym$, WCET) and best-case or minimal execution time ($\bcetSym$, BCET).
\end{definition}

As we consider the bounded execution model, a deadline must be assigned to each task. The deadline is a time interval in which the execution of a task must finish. It is distinguished between a relative deadline ($\relativeDeadlineSym$) and an absolute deadline ($\absoluteDeadlineSym$).

\begin{definition} [Relative Deadline] The relative deadline $\relativeDeadline{\task}$ of a task bounds the execution of any job related to the request time $\requestTime{}$ of this job.   
\end{definition}

\begin{definition} [Absolute Deadline] The absolute deadline $\absoluteDeadline{\task,\event}$ of the n'th job is related to $t=0$. Therefore the n'th absolute deadline of the job is 
\begin{equation}
\absoluteDeadline{\task,\event}	 = \minDistance{\task, \event} + n\period{\task, \event} + \relativeDeadline{\task, \event} 
\end{equation}
\end{definition}

During the execution of the task set, the operating system has to schedule jobs of the task set. The operating system determines the execution order of the jobs based on the relative or absolute deadline assigned to each job. In some cases, fixed priority numbers given by the programmer replacing deadline-based scheduling.

\begin{definition} [Static priority] Let $\prioritySym \in \fieldN$ and assume two independent tasks $\task$ and $\task'$, a task $\task'$ has a higher assigned priority than task $\task$, if $\priority{\task'} > \priority{\task}$ and assume a task with higher priority preempts tasks with lower priority. The set of all higher priority tasks of task $\task$ is
\label{StaticPriority}
\begin{equation}
\taskSetHigher := \setDef{\task' \in \taskSet}{\priority{\task'} > \priority{\task}} 
\end{equation}
\end{definition}

Therefore, a task can be specified formally:

\begin{definition} [Task] A task $\task \in \taskSet$ is a quadruple including the inter-arrival pattern of events $\eventListSym$, the worst-case and best-case execution time of a task and a relative deadline $\relativeDeadlineSym$ by which the task execution bounds: 
\begin{equation} 
\task := \{\eventListSym,\wcetSym, \bcetSym, \relativeDeadlineSym, \prioritySym\} 
\end{equation}
\end{definition}
Note, that the relative deadline can be replaced or amended by a static priority $\prioritySym$. Access to the data structure of a task can be granted by task dependent functions: $\period{\task} = \periodSym(\task)$, $\wcet{\task} = \wcetSym(\task)$\footnote{More general: $f_{k,l} = f(k,l)$}, etc..

In some work, to each job of a task different execution times assigned. In such a case, the execution times of a task specified by a vector. Job-related execution times are introduced by the multi-frame task model \cite{Mok:1997}. If job-related deadlines added, this is called the generalized multi-frame model \cite{Baruah:1999}. Therefore jobs must introduced in the task model:

\begin{definition} [Job] A job is the instance of a task $\task_{\event} \in \task$  triggered by any event of the event stream related to a task.
\end{definition}

\subsubsection{Problem formulation}
\label{scrivauto:90}

\figureLarge[Demand-Bound- vs. Busy-Window-Test  ]{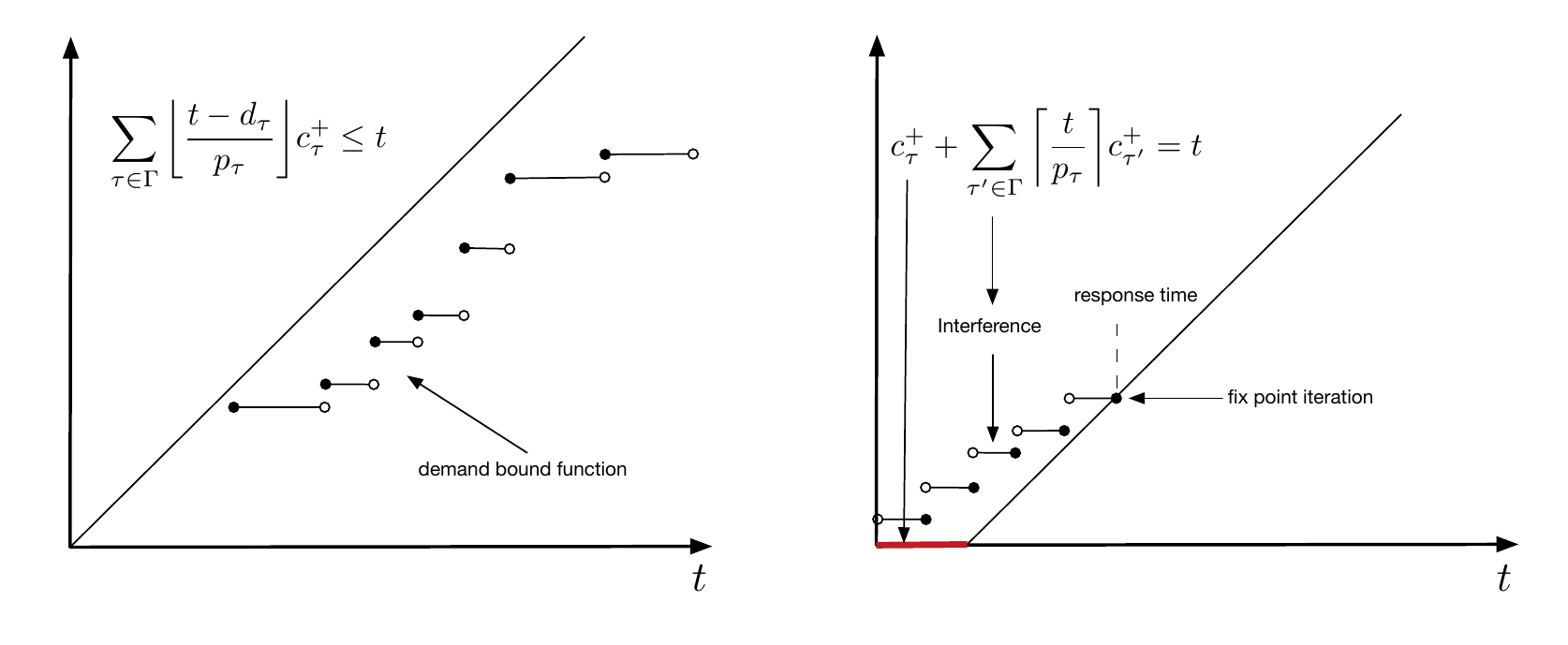}

Why the mathematical formulation of event requests is different in the demand bound test and the response-time analysis as shown in figure \ref{AnalysisOverview}? The demand bound test has to check left points of the demand bound while the busy window approach looks for an intersection on the right side of the request bound. The initial value of the response time analysis gives the worst-case execution time of the considered task. In contrast to the demand bound test, the time point  $t=0$ does not matter because the minimal response time is always equal to the best- or worst-case execution time. As this time interval is the starting point of the fixed-point iteration, $0$ never occur in the equation. However, to find the intersection with the resource function, the bound must be left-continuous. Therefore, at the end of the busy interval, an event should not count. If a task finishes its execution and at the same moment a new task requests, this request is superfluous.

It is obvious that the event bound $\classicEbfRhs{\taskSet}{t}$ is not equivalent to $\classicEbfLhs{\taskSet}{t}$, while $\floor{\frac{0}{\period{\task}}+ 1} = 1 \neq 0 = \ceil{\frac{0}{\period{\task}}}$: The right-continuous event bound and the left-continuous event bound differ in all-time points $t_n=np_\tau$.

Is it necessary two use two different functions? The goal of this work is to find a function which is right-continuous in all $t_n=np_\tau$ except the last one which should be left-continuous. Therefore, the utilization test and the response time analysis should use the same function except at the end of the considered timing interval. Remember, the demand bound test evaluates all event requests until the hyper-period, and the response-time analysis counts all events until the result of the last iteration. If we extend the event bound function $\ebfSym_\task: \fieldR^2 \to \fieldR$ we can specify a bound time interval $\interval_{a,b} := [t_a,t_b)=[a,b)$ which restricts the time in which the event bound counts. If we integrate the hyper-period as bounding restriction to the demand bound function, it is possible to formulate a general unified event bound. Let us discuss this idea in more general:

\begin{problem} [Unified event bound function or unified event bound, ueb] 
Investigate if a function which counts all events in the time interval $\interval_{a,b} = [t_a, t_b)$ at each time point $\forall n \in \fieldNz: t_n := n\period{\task}+t'$. Such a function is called the unified event bound 
\end{problem}

Such a function is equivalent to the right-continuous event bound except at $t=t_b$. Note that for the response time analysis only this point in time is relevant and it is not necessary that all other points of the function are left-continuous. Therefore this function can be used for feasibility tests as well as for response time analysis. A function with these properties  are postulated in \cite{Stigge:2013} and \cite{biondi:2014}. Because in any related work no uniform bound is given, both papers accepted an over-approximation by using the right-continuous request bound.

\begin{problem} [Postulated demand bound test] Assume the existence of a unified event bound function and a hyper-period $\hyperPeriodSym = \lcm_{\event \in \eventListSym}{p_\event}$ of the tasks periods . Then the demand bound test in the periodic event model can be written as:
\label{postulatedDemandBoundTest}
\begin{eqnarray} 
\dbf{\taskSet}{t}{\hyperPeriodSym} \leq t \\
\summ{\task \in \taskSet}{\rbf{\task}{t-\relativeDeadline{\task}}{\hyperPeriodSym}} \leq t \\
\summ{\task \in \taskSet}{\ebf{\task}{t-\relativeDeadline{\task}}{\hyperPeriodSym}\wcet{\task}} \leq t
\end{eqnarray}
\end{problem}

\begin{problem} [Postulated Response Time Analysis] If a unified event bound function exists, the response time in the periodic event model can be written as: 
\label{prob:postulatedResponseTimeAnalysis}  
\begin{equation}
\maxResponseTimeF{\task}{n}  := \wcet{\task} + \summ{\task' \in \taskSetHigher}{\ebf{{\task'}}{\maxResponseTimeF{\task}{n-1}}{\maxResponseTimeF{\task}{n-1}}} \wcet{\task'}
\end{equation}
\begin{equation}
\wcet{\task} + \summ{\task' \in \taskSetHigher}{\ebf{{\task'}}{t}{t}} \wcet{\task'}-t=0
\end{equation}
Assume the following definition for interfering tasks: $\tau' \in \tau \cup \taskSetHigher$. Then the response time analysis can be reformulated as the well known fixed-point iteration. In other words, the response time analysis will become a root-finding problem:
\begin{equation}
\summ{\task' \in \tau \cup \taskSetHigher} {\ebf{{\task'}}{t}{t}\wcet{\task'}} - t = 0
\end{equation}
\end{problem}

The paper is organized as follows: First, we derive a unified event bound function using methods from calculus and distribution theory. Second, we will show how hierarchical event streams can be easily described and computed by using the Dirac delta function. Based on this idea, we develop a unified real-time scheduling theory considering static and dynamic priorities in one holistic approach for feasibility and response time analysis as well. For the first time, we derive both analysis techniques from only one axiom, the average load of a processor. As a special treat, we can develop a tighter response time analysis as given in related work for dynamic scheduling at the end by just adding the same assumption to dynamic scheduling as already done to model task with arbitrary deadlines already done in static scheduling. A running example of a full utilized task set illustrates each idea if necessary. In the end, we will compute the response times of some interesting tasks set in static, dynamic and hierarchical \footnote{In this work hierarchical scheduling means a mixed scheduling policy where any dynamic or static scheduler may embed a scheduler of any lower hierarchy. Therefore we follow the definition of \cite{wandeler06b} or \cite{lipari2005methodology}} scheduling. The paper ends with an appendix concluding the used mathematical symbols and explaining special notations borough from theoretical physics.

\subsection{The unified event bound function}
\label{scrivauto:100}

During the next section, we develop a strict formal view to events as known in signal theory. The idea is to express all needed mathematical properties in the model implicitly without any informal or hidden assumptions. First, we introduce events, and then we show how they can be count in an alternative way compared to the floor and ceil operation. We discuss the mathematical properties and will show how the new method is related to previous work.

\subsection{A mathematical view on events and tasks}
\label{scrivauto:102}

In real-time systems analysis or scheduling theory, events and jobs introduced semi-formal. Tasks or better jobs were often given as geometrical objects such as rectangles in Gantt charts. Then the length of the rectangle models execution demand of the job and the place of the rectangle determines by its position in time. The hight of the rectangles does not matter and is most often given to $1$ as seen in figure \ref{TaskModeling}a.. The goal of the following section is to formalize release times and time durations appreciatively.  The goal is to transform informal geometric proofs to analytical descriptions which are computed algebraically.

\subsubsection{Modeling jobs}
\label{scrivauto:104}

In each computer system, a computational activity has a duration or in other words, an execution time.  The time between the release of a job and its non-preempted execution end starts at a defined point in time $t_a$ and ends later at a second point in time $t_b$. If the job is not interrupted by any other activity this time is called the worst-case execution time $\wcetSym$. However, if we assume independent tasks on a unique processor, we can concentrate on $\wcetSym$. Calling $t_a$ the request time, each job of a task ends after $\wcetSym$ if no other job interrupts the execution. Therefore the job finishes at $t_b=t_a + \wcetSym$. Figure \ref{TaskModeling}a. shows such a simple behaviour as it is described in most of the previous work by a Gantt-Chart. Therefore, during the execution of a job, the processor is busy and has a utilization of one. In contrast to related work, we first look for an algebraic formulation of this behaviour. Formally the geometric Gantt-Chart description of a job can be replaced by a composition of Heaviside functions.

\figureLarge[Algebraic task modeling]{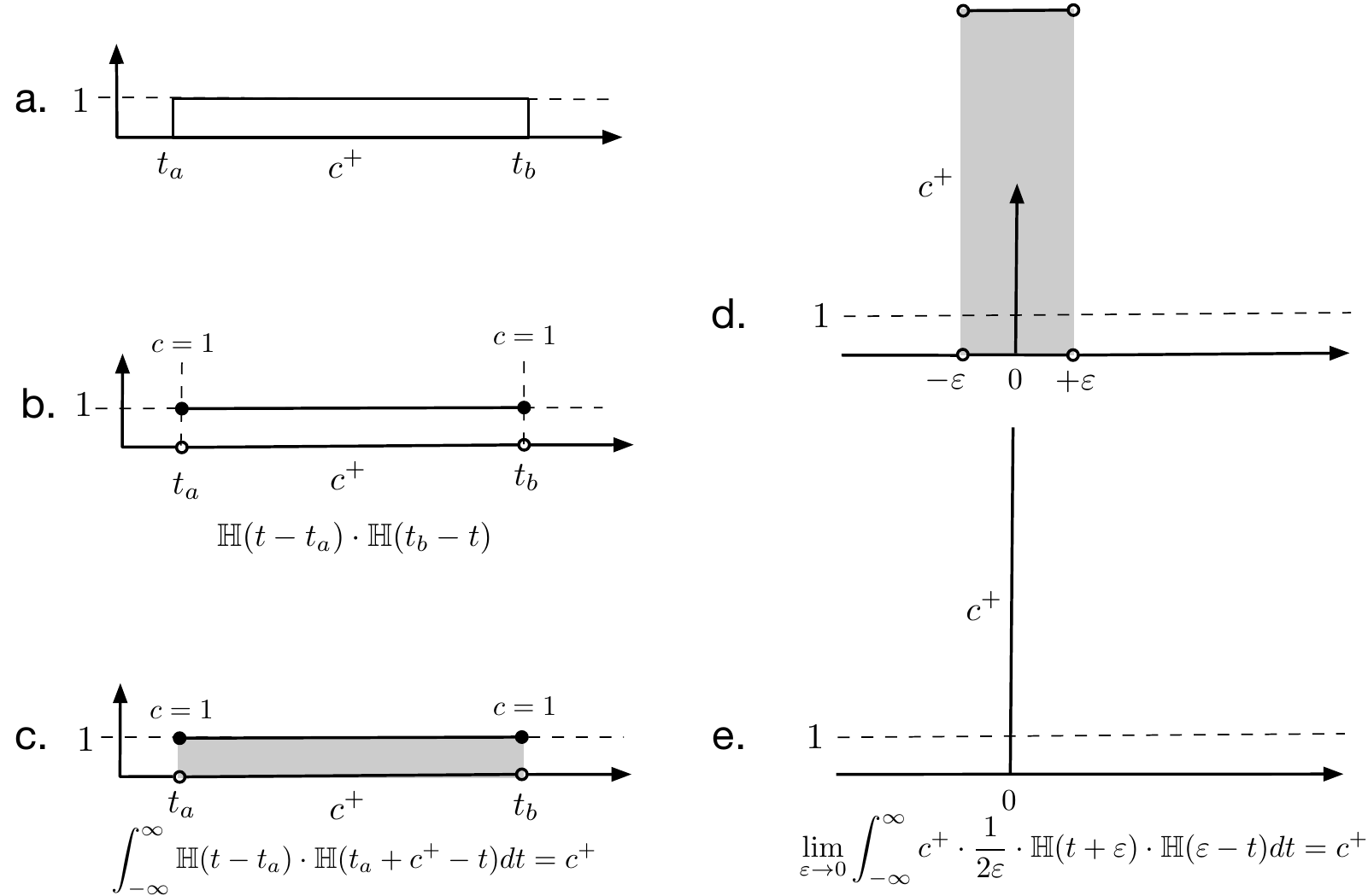}

\begin{definition} [Heaviside function] Assume $x \in [0,1]$. The Heaviside or step function $\heavisideSym: \fieldR \to \{0,1\}$ is defined\footnote{Note that different definitions of the Heaviside function exist. The above definition supports the requirements needed in this work best. In some cases $x \in \emptyset$. However, this is not important in this paper.} as
\label{def:heavisideFunction}   
\begin{equation}
\heavisideC[x]{t}
\end{equation}
\end{definition}

Based on this definition, it is easy to introduce the concept of the Dirac delta function or shortly the delta function, which becomes our base to define events formally:

\begin{definition} [Dirac delta] The Dirac delta function is given by:
\label{def:DiracDelta}
\begin{equation}
\heavisideSym(t-s) = \integrateL{-\infty}{t}{\diracDelta{t'-s}}{t'} 
\end{equation}
\end{definition}

This equation does not define a function in a traditional, well-known way. Therefore it is also called a distribution. It was first introduced by Paul Dirac in the early 1930s and is a well established mathematical tool in theoretical physics and signal theory \cite{bracewell2000fourier}. As we will see later, the idea of Paul Dirac can be applied to find and define the unified event bound. It is very important to have in mind that $\diracDelta{t} = 0$ for all $t \neq 0$ which directly follows from the definition.

Let us next consider how any job of task $\task$ with execution time $\wcet{\task}$ requested at time $t_r$ can be modeled. Let us first assume that all jobs has the same execution demand. Therefore we call the task homogenous.

\begin{lemma} [Dirac job] A job requested at time $t_r$ is described by
\label{theo:DiracJob}
\begin{equation}
\wcet{\task}(t_r) = \integrateL{-\infty}{\infty}{\diracDelta{t'-t_r}\cdot\wcet{\task}}{t'}
\end{equation}     
\end{lemma}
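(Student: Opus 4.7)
The plan is to apply the sifting property of the Dirac delta that is implicit in Definition \ref{def:DiracDelta}, together with linearity of the improper integral. Since the task is assumed homogeneous, $\wcet{\task}$ does not depend on the integration variable $t'$ — it is a property of the task alone, not of the moment at which the job is released — so by linearity it can be pulled outside the integral. The right-hand side of the claimed identity then reduces to $\wcet{\task}\cdot\int_{-\infty}^{\infty}\diracDelta{t'-t_r}\,dt'$, and the whole problem collapses to evaluating this single integral.

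Next I would evaluate that integral directly from Definition \ref{def:DiracDelta}. Setting $s=t_r$ in the definition gives $\heavisideSym(t-t_r)=\int_{-\infty}^{t}\diracDelta{t'-t_r}\,dt'$. Letting the upper limit $t\to+\infty$, and using $\heavisideSym(+\infty-t_r)=1$ by Definition \ref{def:heavisideFunction} (valid because $t_r$ is finite), yields $\int_{-\infty}^{\infty}\diracDelta{t'-t_r}\,dt'=1$. Hence the right-hand side equals $\wcet{\task}$, which by convention is exactly the execution demand $\wcet{\task}(t_r)$ associated to the single job released at time $t_r$.

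The analytical content here is essentially trivial once the delta is accepted as the distributional derivative of the Heaviside step; the argument above is just the textbook sifting identity specialised to a constant integrand. The real point that needs care is interpretational rather than technical: one has to make explicit that the notation $\wcet{\task}(t_r)$ on the left-hand side denotes the total demand of the particular job requested at $t_r$ — a constant numerically equal to $\wcet{\task}$ — whereas the integrand $\diracDelta{t'-t_r}\cdot\wcet{\task}$ on the right is to be read as the distributional density $\wcet{\task}\cdot\diracDelta{\cdot-t_r}$ that models this job as a point load on the time axis. With that reading the lemma is immediate, and it prepares the ground for aggregating several jobs into a Dirac-comb-style sum later in the section.
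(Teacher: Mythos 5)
Your proof is correct, but it takes a genuinely different route from the paper's. You treat the identity as something to \emph{verify}: pull the constant $\wcet{\task}$ out of the integral, evaluate $\integrateL{-\infty}{\infty}{\diracDelta{t'-t_r}}{t'}=1$ by letting $t\to\infty$ in Definition \ref{def:DiracDelta}, and observe that the left-hand side is by convention the same constant. The paper instead \emph{derives} the representation: it starts from the Gantt-chart picture of a job as a product of two Heaviside functions, $\heaviside{t-t_a}\cdot\heaviside{t_a+\wcet{\task}-t}$, rewrites the execution demand as the area of that rectangle, then collapses the rectangle to a box of width $2\varepsilon$ and height $\wcet{\task}/(2\varepsilon)$ and takes $\varepsilon\to 0$, identifying the limit of $\frac{1}{2\varepsilon}\heaviside{t-\varepsilon}\cdot\heaviside{\varepsilon-t}$ with the Dirac delta. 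Your argument is the cleaner and more rigorous verification once the delta is accepted as given, and your closing remark correctly isolates the only nontrivial content (the interpretation of $\wcet{\task}(t_r)$ as a point load of weight $\wcet{\task}$). What the paper's longer construction buys is motivation rather than logic: it explains \emph{why} the point-load model is the right replacement for the rectangle model (area is preserved while the temporal extent vanishes), which is the conceptual step the rest of the section builds on when it sums such peaks into event densities and argues that preemption can now be handled. Neither approach has a gap; yours proves the stated equation, the paper's justifies the modelling choice behind it.
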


\begin{proof} A non-preemptive real-time task instance or job needs two Heaviside functions for its algebraic description: one to represent the request $\heaviside{t-t_a}$ and one to model the completion of the task $\heaviside{t_b-t}$. Consider figure \ref{TaskModeling}b. Multiplying both functions builds a rectangle of hight one defined by the execution function $c: \fieldR \to \{0,1\}$:
\begin{equation}
\compLoad{\task}(t,t_a) = \heaviside{t-t_a}\cdot\heaviside{t_b-t} = \heaviside{t-t_a}\cdot\heaviside{t_a +\wcet{\task}-t}
\end{equation}
Alternativ it is possible to use
\begin{equation}
\compLoad{\task}(t,t_a) = \heaviside{t-t_a} - \heaviside{t-t_b} = \heaviside{t-t_a}-\heaviside{t -[t_a+\wcet{\task}]}
\end{equation}
It is important to note that such a description does not consider preemption, and therefore, it does not support the bounded execution model completely. As a consequence, it is necessary to model the behaviour of interfering computational loads such as interrupts and higher priority jobs explicitly. The idea of the following is to describe the occurrence frequency of jobs and their requested load concerning the available computation time in a given time interval $[t_a,t]$. Let us first rewrite the equation for the computational load without changing anything\footnote{The integral looks a little bit oversized because both Heaviside functions return only $1$. However, introducing the integral is crucial as we will see later.}:
\begin{equation}
\compLoad{\task}(t,t_a) = \integrateL{-\infty}{t}{\heaviside{t'-t_a} \cdot \heaviside{t_a+\wcet{\task}-t'}}{t'}
\end{equation}
The complete non-preempted execution starting at $t_a$ is given by 
\begin{equation}
\wcet{\task}(t_a) = \lim\limits_{t \to \infty} \integrateL{-\infty}{t}{\heaviside{t'-t_a} \cdot \heaviside{t_a+\wcet{\task}-t'}}{t'} = \integrateL{-\infty}{\infty}{\heaviside{t'-t_a} \cdot \heaviside{t_a+\wcet{\task}-t'}}{t'}
\end{equation}
The release of each job needs a context switch at the beginning and end of execution. This context switch can be modelled by some time, $\varepsilon$. If we assume that this time should not be added to the execution of the job as assumed in real-time scheduling theory, we can write\footnote{If scheduling overhead should be modelled assume a separate task describing the overhead of the operating system.} 
\begin{equation}
\wcet{\task}(t_a) = \integrateL{-\infty}{\infty}{\frac{1}{2 \varepsilon}\heaviside{t'-(t_a+\varepsilon)} \cdot \heaviside{(t_a+\varepsilon)+\wcet{\task}-t'}}{t'}
\end{equation}
For our first event we are not interested when it starts so let us move it to the origin $t_a=0$:
\begin{equation}
\wcet{\task}(0) = \integrateL{-\infty}{\infty}{\frac{1}{2 \varepsilon} \heaviside{t'- \varepsilon)} \cdot \heaviside{\varepsilon +\wcet{\task}-t'}}{t'}
\label{eq:compLoad1}
\end{equation}
Let us now modify equation \ref{eq:compLoad1} by describing the computational load not by its horizontal time:
\begin{equation}
\wcet{\task}(0) = \integrateL{-\infty}{\infty}{\frac{\wcet{\task}}{2 \varepsilon} \cdot \heaviside{t'-\varepsilon} \cdot\heaviside{\varepsilon-t'}}{t'}
\end{equation} 
In real-time analysis, we are interested only to the load given by the real-time tasks itself. Such an assumption is permissible because the execution time of a job is much longer than the interruption time by the operating system, and then it should be ignored. Therefore we look what happens if the operating systems overhead approaches to $0$. Mathematically the request span can be eliminated under the assumption that $\integrateL{-\infty}{\infty}{\frac{1}{2 \varepsilon} \heaviside{t-\varepsilon}\cdot\heaviside{\varepsilon-t}}{t}=1$. The obvious solution eliminates the time $2\varepsilon$ by setting $\varepsilon=0$ does not work in general. If we now assume a Heaviside function with $x=0$ then $\heaviside{t-\varepsilon}\heaviside{\varepsilon-t}=0$ and not $1$. Therefore, we set $\varepsilon$ in a way, that all possible Heaviside functions $x \in [0,1]$ will be supported as well. Mathematically we apply limit value analysis to the problem: If the term addressed by the integral is divided by $2 \varepsilon$ and $\varepsilon \to 0$ we get   
\begin{equation}
\lim\limits_{\varepsilon \to 0} \integrateL{-\infty}{\infty}{\frac{1}{2 \varepsilon} \cdot \heaviside{t-\varepsilon} \cdot \heaviside{\varepsilon-t}}{t} = \integrateL{-\infty}{\infty}{\diracDelta{0}}{t'} = 1
\end{equation}
Assume the substitution $\diracDelta{0} = \lim\limits_{\varepsilon \to 0}\frac{1}{2 \varepsilon} \cdot \heaviside{t-\varepsilon} \cdot \heaviside{\varepsilon-t}$. Consider figure \ref{TaskModeling}c. for illustration. Therefore,
\begin{equation}
\wcet{\task}(0) = \integrateL{-\infty}{\infty}{\diracDelta{0}\cdot\wcet{\task}}{t'}
\end{equation}
And if we like to consider a job requested at time $t_r$:
\begin{equation}
\wcet{\task}(t_r) = \integrateL{-\infty}{\infty}{\diracDelta{t'-t_r}\cdot\wcet{\task}}{t'}
\end{equation}
\end{proof}\qed

\subsubsection{Events as Dirac delta}
\label{scrivauto:113}

Let us now apply the delta function by defining events as needed in real-time systems analysis in a strictly formal way:

\begin{definition} [Event] An event $\event: \fieldR \to [0,1]$ is a request at a point in time $t_{\event} \in \fieldR$ with infinitely thin time span:
\label{def:event}
\begin{equation}
\event(t_{\event}) = \piecewise{\integrateL{-\infty}{t}{\diracDelta{t'-t_{\event}}}{t'} = 1 &  t = t_{\event}} {\diracDelta{t-t_{\event}} = 0 & t \neq t_{\event}}
\end{equation}
The time point $t_{\event}$ calls the request time of the event.
\end{definition}
In other words, an event ist a timeless state change in any system.

Computing only the area bounded by a given Heaviside function does not allow to consider preemption as needed in the bounded execution model. Multiplying a Dirac delta with any given WCET results in a peak with the amplitude of the execution time at the request time of the event, as shown in figure \ref{PreemptiveTasks}a.. Running overtime $t$ the value of these peaks is reduced exactly by $t$ in the interval $t$. Because the model considers the release time of events, we can add a peak of execution time at any time an interfering job of higher priority interrupts the execution of the considered job. Consider figure \ref{PreemptiveTasks} which illustrates the idea. At time $t=0$ task $\task_1$ and $\task_2$ are requested. After the specified period $\period{1}$ task $\task_1$ is requested again. Figure \ref{PreemptiveTasks}b. shows the behaviour of the resulting function. Note, that such a saw-function is equal to the well-known request bound function of these two tasks subtracting $t$. Changing the point of view transforms the established fixed-point iteration of the busy window approach to find the roots of the equivalent sawtooth-wave.

\figureLarge[Modeling preemptive jobs]{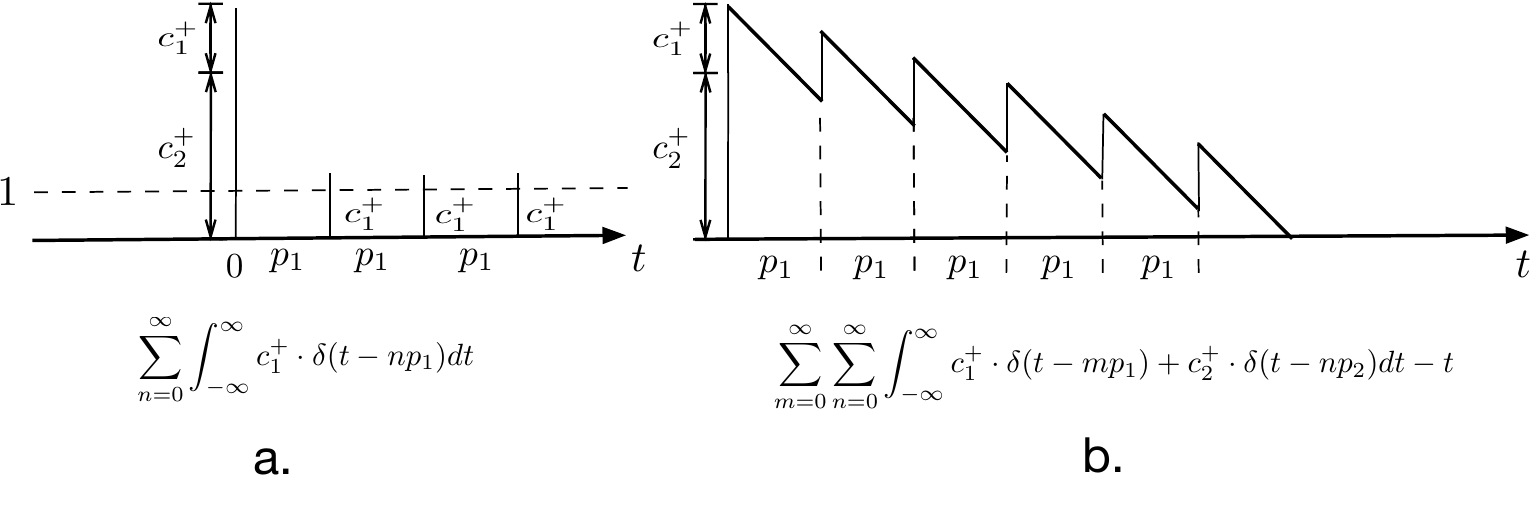}

\subsubsection{Event models}
\label{scrivauto:118}

The definition of only one event does not support the modelling of tasks as a sequence of jobs.  Therefore a formal description of a series or sequence of events is required to model sequential jobs. Mathematically this is expressed by a series of Dirac deltas called a Dirac comb in the case all events are strictly periodic. However, the general way to describe any sequence of requesting events is to describe event streams. An event stream can be described by a Dirac comb as well:

\begin{definition} [Event density] A sequence of $k$ events is given by:
\label{def:eventStream}
\begin{equation} 
\eventStream{k}{t}{\task} = \eventStreamSum{t}
\end{equation} 
calling $\eventStream{k}{t}{\task}$ an event stream or event density\footnote{This work introduces the term 'event density'. As we will see later, this is a more intuitive term than the name event stream as used in previous work.}.
Therefore an event sequence\footnote{We distinguish between the event density as a sum of Dirac deltas and the event tuple describing the parameters of the event density.} specifying $k$ events can be written with the event tuple: 
\begin{equation}
\event = <\periodSym, \minDistanceSym>_k
\end{equation}
Moreover, as a short form notation a set of corresponding event tuples defines the event density formally:
\begin{equation}
\eventListSym = \{<\periodSym, \minDistanceSym>_k\}
\end{equation}
We choose the notation $<a,b>$ to distinguish the new approach clearly from the event stream notation.
\end{definition}

This definition introduces a new perspective and insight into event streams. A mathematical equation now describes an event stream with precisely defined mathematical properties instead of only writing a weak set of tuples. To describe event densities which model valid event streams, we assume a maximal event density $\eventDensityMax$ and a minimal event density $\eventDensityMin$. Both are event densities which have the mathematical property of sub- or super-additivity. Additionally, it is very easy to bound the number of events. Instead of previous models, the term event density allows specifying a fixed number of events as a sporadic or bursty event stream.

\begin{example} [Periodic event model] The periodic event model describes an infinite number of periodic events. Assume $\minDistance{\event}=0$, therefore $k=\infty$ and the sequence of events is given by
\begin{equation}  
\eventStream{\infty}{t}{\task} = \suml{n=0}{\infty}{\diracDelta{t-n\period{\task}}} = \eventModelPeriodic{t}
\end{equation}
\end{example}

Assume a sporadic event which occurs only once. In the event stream model, the definition of the event bound requires to set the period of the given event tuple of the sporadic event to $\infty$. Now the period is zero if an event occurs only once and the sum limits the occurrence:

\begin{example} [Sporadic event] A event which is sporadic and which occurs only at $t_\event= 10\,\si{ms}$ is described by
\begin{equation}
\eventListSym = \{<0, 10>_1 \,\si{ms} \} = \{<0, 10> \,\si{ms} \}
\hspace{2mm} \textrm{in contrast to}  \hspace{2mm}\eventListSym = \left\{\left(
\begin{array}{c}
\infty \\ 10 \\
\end{array}
\right)\,\si{ms} \right\}
\end{equation}
as originally defined by \cite{gresser1993b}. 
\end{example}

\begin{example} [Periodic event model with jitter] First assume the established periodic event model with jitter:
\begin{equation}
\eventListSym_{Jitter} = \{<0, 0>_1, <p, p-2j>_{\infty} \} = \{<0, 0>, <p, p-2j> \}
\end{equation}
Now consider we only like to describe four events in this model:
\begin{equation}
\eventListSym_{Jitter} = \{<0, 0>_1, <p, p-2j>_3 \}
\end{equation}
\end{example}
Such a description is natural, easier to understand, and more potent than the original form.

We use a computer algebra system (CAS) \cite{cohen2003computer} to validate the approach. The CAS allows us to verify the algebraic structure of the work. Additionally, it is possible to consider numeric examples as well. Therefore a sample numeric task set is defined:

\begin{example} [Example task set] 
\label{Ex:exampleTaskSet}
Table \ref{Tab:exampleTaskSet} gives a task set used as a running example in the rest of the paper. Just for simplification, we only consider periodic tasks. Therefore we specify three tasks by their period, their worst-case execution time and the relative deadline which is given by the deadline as well. Additionally, the last column of the table states the input for the computer algebra system, as mentioned earlier. Note that this task set has a utilization equal to one. Therefore it is schedulable by dynamic scheduling and not by static scheduling as shown later in figure \ref{TaskSet}. Considering a utilization of one is essential to investigate the differences in static and dynamic scheduling and the tightness of a response-time analysis as seen later. 

\begin{table}
     \centering
\begin{tabular}{|c|c|c|c|l|} \arrayrulecolor{black} 
\hline
&&&&\\
Task & Period  & Wcet & Relative deadline & CAS input\\
& $\periodSym \: [t.u.]$ & $\wcetSym \: [t.u.]$ & $\relativeDeadlineSym \: [t.u.]$ & \\
&&&&\\ 
 \hline
&&&&\\
$\task_1$ & 8 & 2 & 8 & $\{\{\{\{\{0, 8\}, Infinity\}\}, 2\}\}$ \\ 
$\task_2$ & 16 & 4 & 16 & $\{\{\{\{\{0, 16\}, Infinity\}\}, 4\}\}$ \\
$\task_3$ & 24 & 12 & 24 & $\{\{\{\{\{0, 24\}, Infinity\}\}, 12\}\}$ \\ 
&&&&\\
 \hline
 \end{tabular}\\
     \caption{Example task set}
     \label{Tab:exampleTaskSet}
 \end{table}

First we defined a function $EventDensity$ following definition \ref{def:eventStream} to build the algebraic equation from a given nested list as task description:\\

\noindent
\includegraphics[width=1\textwidth,left]{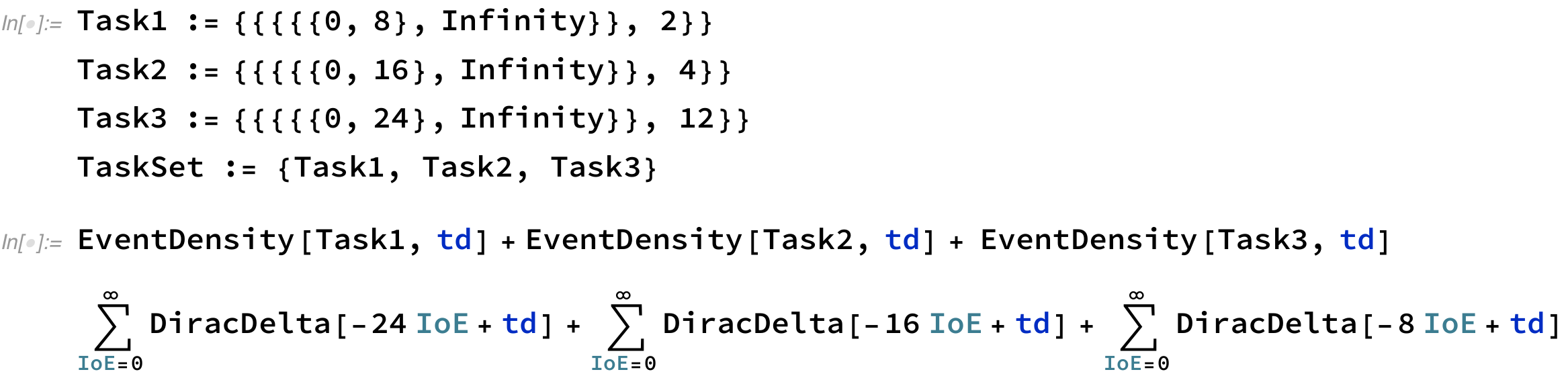}

The variable $IoE$ (Instance of Event) denotes the number of the considered job and the function $EventDensity$ computes the Dirac comp as discussed earlier. Replacing it by $n$ or any other counting variable leads to the formal notation given earlier. Note that the sequence of variables given in the CAS output follows the rules defined in computer algebra. Therefore we do not change outputs of the CAS to be compatible with the equations defined.
\end{example}

\subsubsection{To count or not to count}
\label{scrivauto:128}

After defining the event density, we have to consider how to count the events. As we have seen in lemma \ref{theo:DiracJob}, the execution time of a job is computed by integrating a couple of Dirac deltas. Therefore, we will find the number of events by integrating over a series of Dirac deltas or events which we called the event density. However, the integral gives us the freedom two mask given time intervals from event densities. As we will see, this is a significant advantage compared to the counting of events in related work.

\subsubsection{Counting by integrating dirac deltas}
\label{scrivauto:130}

To compute the execution demand of a processor, events and a series of events must be counted during a given time. This number of events then is multiplied by the specified execution time of the task. By changing the event model to Dirac deltas, we have to count the number of deltas in a given timing interval. Integrating the series of Dirac deltas results in the number of events given in the time bounded by the limits of the integral:

\begin{lemma} [Finite event bound]  Assume any time interval $\interval_{a,b} := [t_a,t_b] \in \fieldR$. The number of events then can be counted by a function $\ebfSym: \taskSet \times \fieldR \times \fieldN \to \fieldN$ 
\label{lem:FiniteEventBound}
\begin{equation}
\classicEbf{\task}{\interval_{a,b}}_k = \integrate{\interval_{a,b}}{\eventStreamSum{t'}}{t'} =\integrateL{t_a}{t_b}{\eventStream{k}{t'}{\task}}{t'}
\label{eq:eventBoundFinite} 
\end{equation}
Also, in the particular case of the periodic event model with countless events, this simplifies to:
\begin{equation}
\classicEbf{\task}{\interval_{a,b}} = \integrateL{t_a}{t_b} {\eventModelPeriodic{t'}}{t'}        
\end{equation}
\end{lemma}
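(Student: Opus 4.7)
The plan is to reduce the claim to the defining property of the Dirac delta (Definition \ref{def:DiracDelta}) together with the linearity of the integral. First I would substitute the event density from Definition \ref{def:eventStream} into the integrand, so that the left-hand side of (\ref{eq:eventBoundFinite}) becomes
\begin{equation}
\int_{t_a}^{t_b}\sum_{n=0}^{k-1}\diracDelta{t'-t_n}\,dt',
\end{equation}
where $t_n$ abbreviates the request time of the $n$-th event of $\task$ prescribed by the tuple data $\langle \periodSym,\minDistanceSym\rangle_k$. Since only finitely many $t_n$ lie in any bounded interval (and the event list is finite or, in the periodic case, locally finite), I may exchange sum and integral without subtlety.

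Next I would apply Definition \ref{def:DiracDelta} term by term. Writing the definite integral as a difference of the antiderivatives already given by the Heaviside function, each summand contributes
\begin{equation}
\int_{t_a}^{t_b}\diracDelta{t'-t_n}\,dt' \;=\; \heaviside{t_b-t_n}-\heaviside{t_a-t_n},
\end{equation}
which by Definition \ref{def:heavisideFunction} equals $1$ if $t_n$ lies inside the interval and $0$ otherwise. Summing over $n$ therefore yields exactly the number of event request times that fall in $\interval_{a,b}$, which is what the function $\ebfSym$ is meant to compute. The periodic special case follows by the same argument applied to the Dirac comb $\sum_{n\ge 0}\diracDelta{t'-n\period{\task}}$, and the upper bound of summation simply becomes countable rather than finite.

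The main obstacle I foresee is the boundary behaviour at $t_a$ and $t_b$: because the Heaviside function is only specified up to the value of $H(0)\in[0,1]$ in Definition \ref{def:heavisideFunction}, an event occurring exactly at an endpoint of $\interval_{a,b}$ can be counted as $0$, $1$ or anything in between. I would handle this by explicitly choosing the convention consistent with the subsequent unified event bound discussion, namely right-continuity at $t_a$ and left-continuity at $t_b$ (so $\interval_{a,b}=[t_a,t_b)$ as stated earlier in the text, despite the closed notation used in the lemma). With this convention fixed, the counting is unambiguous, and the two expressions in (\ref{eq:eventBoundFinite}) agree since the integration variable is a dummy. No further calculation is required.
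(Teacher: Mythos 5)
Your proposal is correct and follows essentially the same route as the paper: both arguments reduce to exchanging the (absolutely convergent, indeed locally finite) sum with the integral and then invoking the defining property of the Dirac delta to count the request times falling in $\interval_{a,b}$. Your extra remark about the endpoint ambiguity of $\heaviside{0}$ is a sound observation, but note that the paper deliberately leaves this lemma on the closed interval $[t_a,t_b]$ and resolves the boundary convention only later, in Theorem \ref{theo:UnifiedEventBoundFunction}, via the upper and lower Heaviside masks.
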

\begin{proof} We have to sum all events of an event density:
\begin{equation}
\summ{\event \in \eventListSym_{\task}}{\event(t_{\event})}
\end{equation}
According to the definition of an event $\event(t_{\event}) = \integrateL{-\infty}{\infty}{\diracDelta{t'-t_{\event}}}{t'}$, we get for any event density in $\interval_{a,b}$ 
\begin{equation}
\label{eq:SumOfPulse}
\classicEbf{\task}{\interval_{a,b}}_k = \summ{\event \in \eventListSym_{\task}}{\suml{n=0}{k-1}{\integrateL{t_a}{t_b}{\diracDelta{t'-\minDistance{\event}-n\period{\event}}}{t'}}}
\end{equation}
By definition the number of event tuples is limited and the series given by equation \ref{eq:SumOfPulse} converges absolute for $k \in [0,\infty)$, because of the bound $\interval_{a,b}$ and $n \in \fieldN$. In the case of a finite $k$ convergence of the sum is trivial. Therefore, the integral and the sum can be switched:
\begin{equation}
\classicEbf{\task}{\interval_{a,b}}_k = \integrateL{t_a}{t_b}{\eventStreamSum{t'}}{t'} 
\end{equation}
Note, that the first assumption does not holt if $\interval_{a,b} \in (-\infty,\infty)$. However, as we will see later, the sum or event density is always limited. 
\end{proof}\qed

By definition of the unified event bound, we do not want to count events at $t_b$. However, by definition in calculus, the Riemann integral is bounded by the interval $[t_a, t_b]$ and this is not the postulated interval $\Delta_a^b \in [t_a, t_b)$. Transforming the limits of the integral in a right-open interval can be done by masking the desired interval with the help of Heaviside functions. In this case, two variants of the infinite set of Heaviside functions as defined in definition \ref{def:heavisideFunction} are needed:

\begin{definition} [Upper Heaviside function] Assume the general Heaviside function with $x \in [0,1]$ and let $x = 1$. Then the upper Heaviside function $\heavisideUSym: \fieldR \to \{0,1\}$ is
\label{def:upperHeavisideFunction}
\begin{equation}
\heavisideUp{t}
\end{equation}
\end{definition}

\begin{definition} [Lower Heaviside function] Assume the general Heaviside function with $x \in [0,1]$ and let $x = 0$. Then the lower Heaviside function $\heavisideDSym: \fieldR \to \{0,1\}$ is
\label{def:lowerHeavisideFunction}
\begin{equation}
\heavisideDown{t}
\end{equation}
\end{definition}

 By applying definition \ref{def:upperHeavisideFunction} and definition \ref{def:lowerHeavisideFunction} to lemma \ref{lem:FiniteEventBound} we can formulate the unified event bound as illustrated in figure \ref{UnifiedEventBound}:

\begin{theorem} [Unified event bound function, ueb] Assume $\ebfSym: \taskSet \times \fieldR^2 \times \fieldN \to \fieldN$, then the number of events in any bounded interval $\interval_{a,b} = [t_a, t_b)$ can be counted by
\label{theo:UnifiedEventBoundFunction}
\begin{eqnarray} 
\ebf{\task}{t}{\interval_{a,b}}_k &=& \integrateL{-\infty}{t}{\eventStreamSum{t'} \cdot \heavisideU{t'-t_a} \cdot \heavisideD{t_b-t'}}{t'}\\ &=& \integrateL{t_a}{t}{\eventStreamSum{t'} \cdot \heavisideD{t_b-t'}}{t'} 
\label{eq:eventBound}
\end{eqnarray}
in the special case of a periodic event model this becomes 
\begin{eqnarray} 
\ebf{\task}{t}{\interval_{a,b}}_k &=& \integrateL{-\infty}{t}{\eventModelPeriodic{t'} \cdot \heavisideU{t'-t_a} \cdot \heavisideD{t_b-t'}}{t'} \\ &=& \integrateL{t_a}{t}{\eventModelPeriodic{t'} \cdot \heavisideD{t_b-t'}}{t'}
\label{eq:eventBoundPeriodic}
\end{eqnarray}
\end{theorem}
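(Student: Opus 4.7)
The plan is to reduce the theorem to Lemma \ref{lem:FiniteEventBound} and then encode the half-open interval $\interval_{a,b} = [t_a, t_b)$ as a product of two Heaviside masks inside the integrand, so that a single integral over $(-\infty, t]$ produces the desired count.

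First I would verify that $\heavisideU{t'-t_a} \cdot \heavisideD{t_b-t'}$ is precisely the indicator of $[t_a, t_b)$. By Definition \ref{def:upperHeavisideFunction} the first factor equals $1$ on $[t_a, \infty)$ (including the endpoint, since the upper variant takes $x = 1$ at the jump) and equals $0$ strictly to the left of $t_a$; by Definition \ref{def:lowerHeavisideFunction} the second factor equals $1$ on $(-\infty, t_b)$ and equals $0$ at $t' = t_b$ (since the lower variant takes $x = 0$ at the jump). Multiplying the event density $\eventStreamSum{t'}$ by this product and integrating from $-\infty$ to $t$ therefore restricts the effective domain of integration to $[t_a, \min(t,t_b))$, which agrees with the postulated count over $\interval_{a,b}$ whenever $t \geq t_b$ and yields a partial count otherwise.

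Next I would hand off to Lemma \ref{lem:FiniteEventBound}, whose sifting argument already shows that integrating $\eventStreamSum{t'}$ against a finite interval counts one unit for each request time $\minDistance{\event} + n\period{\event}$ falling in the interval. The only new ingredient is the right boundary: any event that happens to sit exactly at $t' = t_b$ is annihilated by the factor $\heavisideD{t_b - t'}$, yielding the open right endpoint that distinguishes the unified event bound from the closed-interval version in the lemma. The second, cleaner form of the theorem then follows by absorbing $\heavisideU{t'-t_a}$ into the lower limit of integration, since it vanishes on $(-\infty, t_a)$ and equals $1$ on $[t_a, \infty)$. The periodic special case drops out by substituting the Dirac comb $\eventModelPeriodic{t'}$ for the general event density in both representations.

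The hard part will be giving rigorous meaning to $\diracDelta{t'-t_b} \cdot \heavisideD{t_b-t'}$ at the single point $t' = t_b$, since as a distribution the Dirac delta has no classical pointwise value there. I would have to work inside the same regularization that was used in Lemma \ref{theo:DiracJob}, where $\diracDelta{0}$ arose as $\lim_{\varepsilon \to 0}\frac{1}{2\varepsilon}\heaviside{t-\varepsilon}\heaviside{\varepsilon-t}$, and verify that the choice $x = 0$ in the lower Heaviside makes the corresponding regularized sifting integral at the right endpoint equal to $0$ rather than to $\tfrac{1}{2}$ or to $1$. The asymmetric use of the upper variant at $t_a$ and the lower variant at $t_b$ is exactly what encodes the half-open structure of $\interval_{a,b}$ in a way that is consistent with the distributional interpretation of events adopted earlier.
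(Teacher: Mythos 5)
Your proposal is correct and follows essentially the same route as the paper's own proof: rewrite the finite event bound of Lemma \ref{lem:FiniteEventBound} as an integral over the whole line masked by the product of an upper Heaviside at $t_a$ and a lower Heaviside at $t_b$, observe that the lower variant annihilates any event sitting exactly at $t_b$, then absorb the left mask into the lower integration limit and truncate the upper limit at $t$, with the periodic case following by substituting the Dirac comb. You in fact go one step further than the paper by explicitly flagging that the product of a Dirac delta with a Heaviside function jumping at the delta's support point has no canonical distributional value and must be pinned down by the same regularization used in Lemma \ref{theo:DiracJob}; the paper's proof simply asserts that switching to the lower Heaviside excludes $t_b$ from the bound without addressing this, so your added care closes a gap rather than introducing one.
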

\begin{proof} Assume we count the events in a bounded interval $[t_a,t_b]$. Instead of the infinite interval given in equation \ref{eq:eventBoundFinite}, we bound the integral by its limits: 
\begin{equation} 
\ebf{\task}{[t_a}{,t_b]}_k = \integrateL{t_a}{t_b}{\eventStreamSum{t'}}{t'}
\label{eq:intervalBounded}
\end{equation}
The limits of the integration include $t_a$ and $t_b$ by definition. While $\heavisideU{t'-t_a} \cdot \heavisideU{t_b-t'}=1$ only if $t_a \leq t \leq t_b$ and $0$ in all other cases, equation \eqref{eq:intervalBounded} can be rewritten as:
\begin{eqnarray}
\ebf{\task}{[t_a}{t_b]}_k 
&=&\integrateL{t_a}{t_b}{\eventStreamSum{t'}}{t'} \\ 
&=&\integrateL{-\infty}{\infty}{\eventStreamSum{t'} \cdot \heavisideU{t'-t_a} \cdot \heavisideU{t_b-t'}}{t'}
\end{eqnarray}
Changing the term $\heavisideU{t_b-t'}$ to $\heavisideD{t_b-t'}$ excludes $t_b$ from the bound. Therefore, the integration over $\interval_{a,b} \in [t_a,t_b)$ can be formulated as
\begin{equation} 
\ebf{\task}{[t_a,}{t_b)}_k = \integrateL{-\infty}{\infty}{\eventStreamSum{t'} \cdot \heavisideU{t'-t_a} \cdot \heavisideD{t_b-t'}}{t'}
\end{equation}
We observe that this integral is not only bounded by $\interval_{a,b}$. Assume $t_a \leq t < t_b$, then this function is also bounded by t, and therefore we can write\footnote{The integral $\integrateL{a}{t}{t'}{t'}$ is defined on the interval $[a,t]$. Therefore the above simplification holds.} 
\begin{eqnarray} 
\ebf{\task}{t}{\interval_{a,b}}_k &=& 
\integrateL{-\infty}{t}{\eventStreamSum{t'} \cdot \heavisideU{t'-t_a} \cdot \heavisideD{t_b-t'}}{t'} \\
&=&\integrateL{t_a}{t}{\eventStreamSum{t'} \cdot \heavisideD{t_b-t'}}{t'}
\end{eqnarray}
The proof for the periodic or sporadic model is obvious.
\end{proof}\qed

\figureLarge[To count or not to count]{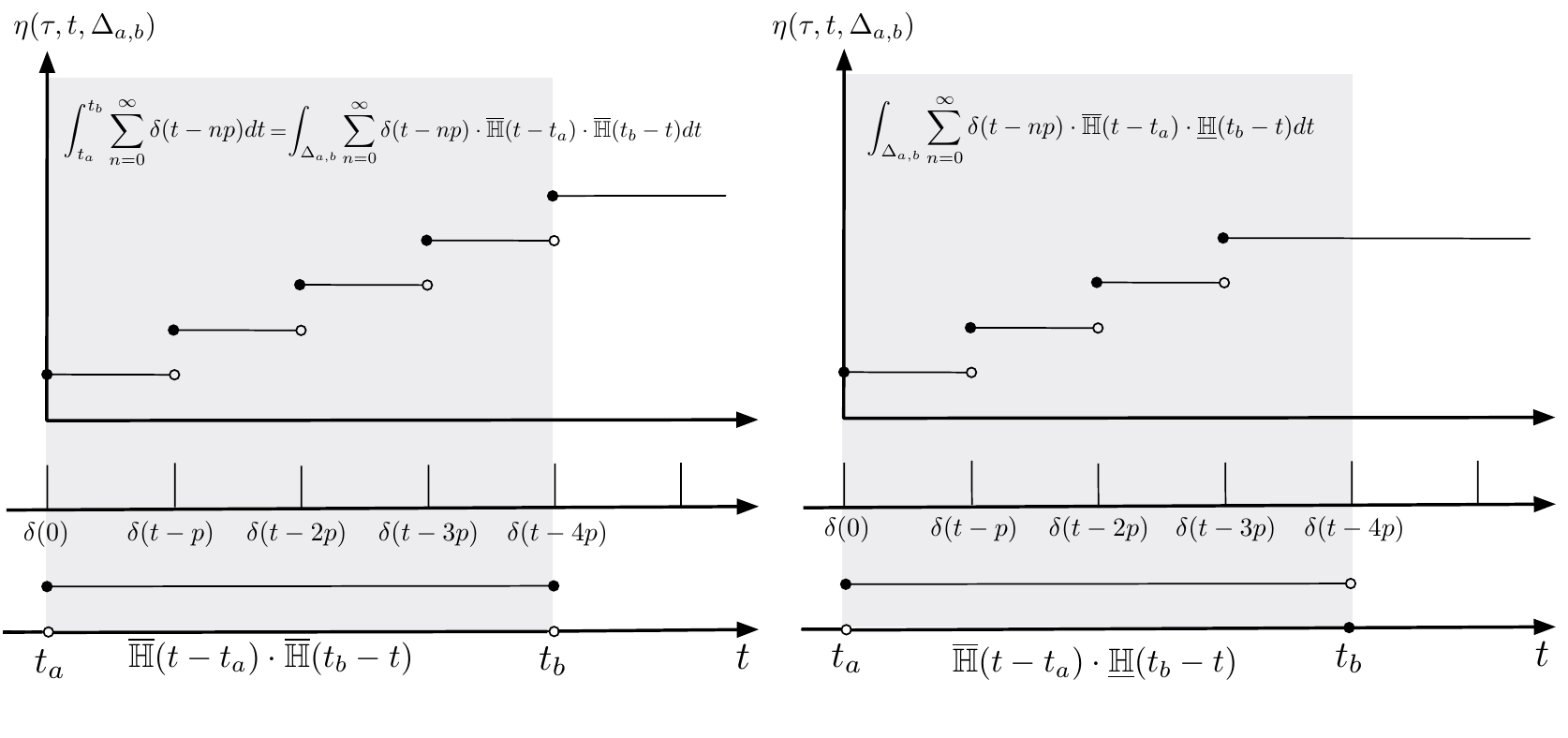}

In real-time scheduling theory, the starting time of the analysis interval is implicitly set to $t_a = 0$ by definition. Therefore, $\interval_{0,b} = [0,t_b)$. Then the unified event bound function can be written as
\begin{equation} 
\ebf{\task}{t}{t_b}_k = \integrateL{0}{t}{\eventStreamSum{t'} \cdot \heavisideD{t_b-t'}}{t'}
\end{equation}
This equation shows one of the most critical limitations of the established scheduling theory: In contrast to previous work, the unified event bound allows computing the number of events in any time interval. The computation of the bound can be moved to any time point $t \in\fieldR$. Therefore, the unified event bound is invariant in time. However, Theorem \ref{theo:UnifiedEventBoundFunction} has additional properties useful in real-time scheduling analysis: Defining the event bound by integrating over Dirac pulses and limiting this integral by two different Heaviside functions, the upper and lower Heaviside function, we find several and different bounds if we combine different descriptions for integral limits. Therefore, the number of events bounds by four different cases:
\begin{itemize}
\item[i]The bound of the number of events k.
\item[ii]The lower timing bound $t_a$ defined by, the lower Heaviside mask or, the lower limit of the integral.
\item[iii]The timing bound $t$ given by the limitation of the Dirac comb or the upper limit of the integral.
\item[iv]The above timing bound $t_b$ as defined by the upper Heaviside mask.
\end{itemize}
This first result shows that previous work defined different event bound functions because not considering the limits of time intervals like in calculus. We have proven that a unified function has to consider the limits of a well-defined integration problem as we can see in figure \ref{UnifiedEventBound}. Additionally, we know from distribution theory, that the Dirac delta is the derive of the Heaviside function \cite{bracewell2000fourier}. As a consequence, it is obvious to call an event stream an event density: The event count in the real-time analysis is an integral over a dense series of Dirac deltas.

\begin{definition} [Heaviside mask] The pair of Heaviside functions limits the integration interval by masking bounds:   
\begin{equation} 
\HeavisideMask{t}{\interval_{a,b}} := \heaviside{t-t_a} \cdot \heaviside{t_b-t}
\end{equation}
With $\heaviside{t-t_a}$ the left or early mask and $\heaviside{t_b-t}$ the right or late mask. Note, that both Heaviside functions can be upper or lower Heaviside functions. Therefore four different masks exist: $\HeavisideMaskUU{t}{\interval}{a}{b}$, $\HeavisideMaskUL{t}{\interval}{a}{b}$, $\HeavisideMaskLU{t}{\interval}{a}{b}$ and $\HeavisideMaskLL{t}{\interval}{a}{b}$.
\end{definition}

\begin{example} [Example task set] Let us consider the task set given in example \ref{Ex:exampleTaskSet}. Assume we defined a function $DiracCount$ to count events. Then the CAS gives the following output if we like to count the events in the interval $[0,T)$:\\

\noindent
\includegraphics[width=1\textwidth,left]{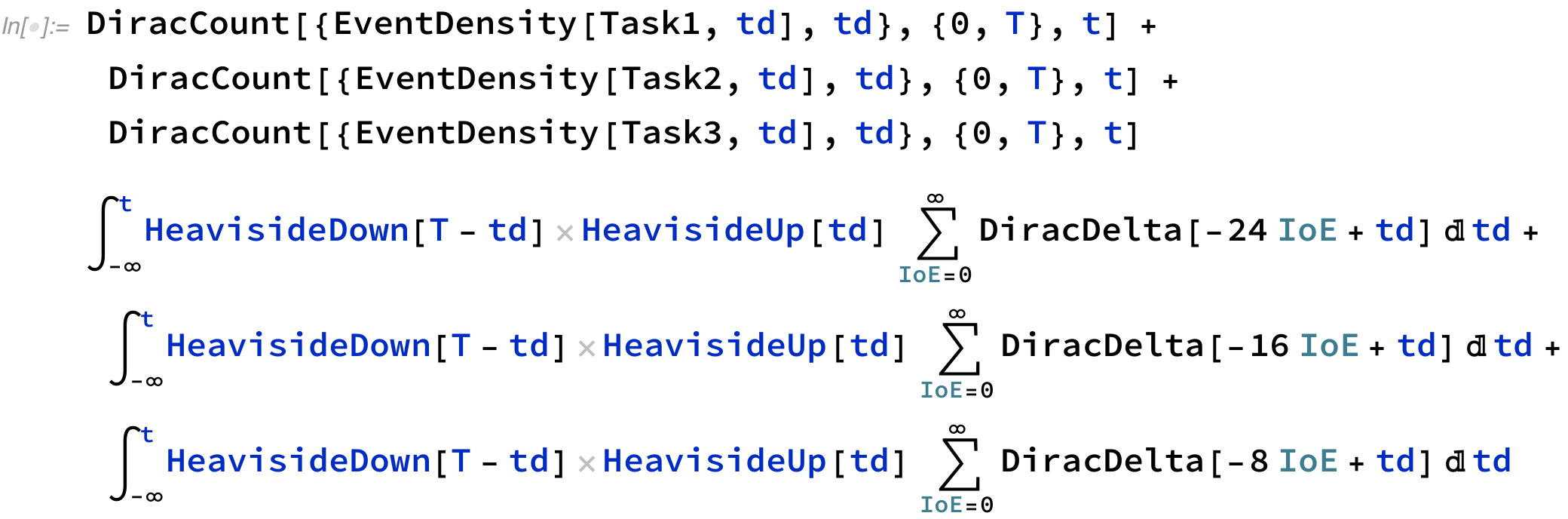}

\end{example}

\renewcommand{\arraystretch}{3}
\begin{table}
\label{tab:eventCounts}
     \centering
\begin{tabular}{|c|c c c|} \arrayrulecolor{black}
 \hline 
  & Traditional model & & Unified model \\
\hline  
$\classicEbfRhs{\task}{t}$ & $\ceil{\frac{t}{\period{\task}}}$ & = & $\integrateL{-\infty}{\infty}{
                  \eventModelPeriodic{t} \cdot \heavisideU{t'} \cdot \heavisideD{t-t'}
}{t'}$ \\
F & $\floor{\frac{t}{p}}$ & = & $\integrateL{-\infty}{\infty}{
                  \eventModelPeriodic{t} \cdot \heavisideD{t'} \cdot \heavisideD{t-t'}
}{t'}$ \\
$\classicEbfLhs{\task}{t}$ & $\floor{\frac{t}{\period{\task}}+1}$ & = & $\integrateL{-\infty}{t}{
                  \eventModelPeriodic{t} \cdot \heavisideU{t'} \cdot \heavisideU{t-t'}
}{t'}$ \\
$\ebf{\task}{t}{\mskInterval{a}{b}}$ & $\floor{\frac{t}{\period{\task}}+\cdot \heavisideD{b-t'}}$ & = & $\integrateL{-\infty}{t}{
                  \eventModelPeriodic{t} \cdot \heavisideU{t'-a} \cdot \heavisideD{b-t'}
}{t'}$ \\
\hline
\end{tabular} \\
     \caption{Relations between traditional \cite{Liu:1973}, \cite{Baruah90} and unified theory in the interval $[0,t]$ }
     \label{tab:eventBoundRelation}
\end{table}

\section{Unified analysis of real-time systems}
\label{scrivauto:151}

In this section, we consider how the unified event bound can be used to solve real-time analysis problems. Furthermore, it is possible to model additional conditions on task scheduling without modification of the structure of our analysis equation. Therefore it is easy to derivate variants to model bursty or hierarchical event patterns or hierarchical schedulers \footnote{Such schedulers are called hierarchical in real-time calculus. However, \cite{Liu:1973} call it mixed schedulers. Therefore term also differs from \cite{Baruah2011mixed}, \cite{zhu2011response} and \cite{ittershagen2013hierarchical}. In this paper, we mention a scheduler which schedules all jobs with the same priority according to their dynamic deadlines.}.

\subsection{A general event model: The event spectrum}
\label{scrivauto:153}

Simple event models become complex in bursty events. Different solutions address this problem \cite{tindell1994extendible} and \cite{albers2008advanced}. However, both approaches are not intuitive and require different models to describe the synchronization of events. The two papers solve the problem in different ways but lack to give mathematical or formal approaches to their solutions. Applying now the mathematical toolset developed in section \ref{scrivauto:100} a hierarchical event stream as described by \cite{albers2008advanced} can be derived mathematically. Assume two independent event densities: One event density with a small period and a second one with a much larger one. Both densities together form a new bursty event stream if they are synchronized.  The convolution of Dirac combs computes the composition of two event densities. Therefore, synchronization in real-time scheduling can be modelled by

\begin{theorem} [Hierarchical event density composition] Any hierarchical event stream is a composition of two flat event densities and can be computed by the convolution of the two event densities:
\label{theo:eventSpectrum}  
\begin{equation}
\eventSpectrum = \Sha^k_{\eventListSym_1} \ast \Sha^l_{\eventListSym_2}
\end{equation} 
\end{theorem}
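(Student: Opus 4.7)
The plan is to treat both event densities as tempered distributions (finite sums of shifted Dirac deltas) and use the sifting property of $\delta$ to expand the convolution into a double sum of shifted deltas, which can then be matched term-by-term against the defining parameters of a hierarchical event stream as given in the quadruple $\quadruple{\period{\event}}{n_\event}{\minDistance{\event}}{\eventListSym_\event}$.

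First I would write the two event densities in the explicit form supplied by Definition~\ref{def:eventStream}, namely
\begin{equation}
\Sha^k_{\eventListSym_1}(t) = \sum_{\event \in \eventListSym_1}\sum_{n=0}^{k-1}\diracDelta{t-\minDistance{\event}-n\period{\event}},
\qquad
\Sha^l_{\eventListSym_2}(t) = \sum_{\event' \in \eventListSym_2}\sum_{m=0}^{l-1}\diracDelta{t-\minDistance{\event'}-m\period{\event'}}.
\end{equation}
Then I would plug these into the convolution integral $(\Sha^k_{\eventListSym_1} \ast \Sha^l_{\eventListSym_2})(t) = \integrateL{-\infty}{\infty}{\Sha^k_{\eventListSym_1}(t')\,\Sha^l_{\eventListSym_2}(t-t')}{t'}$. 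Since both sums are finite (they range over finitely many event tuples and finitely many indices $n,m$), linearity of the integral permits swapping integration with the four summations, reducing the problem to evaluating $\integrateL{-\infty}{\infty}{\diracDelta{t'-a}\diracDelta{(t-t')-b}}{t'}$ for arbitrary shifts $a,b$.

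Next I would invoke the sifting property of the Dirac delta (a direct consequence of Definition~\ref{def:DiracDelta}): each such integral collapses to $\diracDelta{t-a-b}$. Collecting the surviving terms yields
\begin{equation}
(\Sha^k_{\eventListSym_1} \ast \Sha^l_{\eventListSym_2})(t) = \sum_{\event \in \eventListSym_1}\sum_{\event' \in \eventListSym_2}\sum_{n=0}^{k-1}\sum_{m=0}^{l-1}\diracDelta{t-[\minDistance{\event}+\minDistance{\event'}]-n\period{\event}-m\period{\event'}}.
\end{equation}
Interpreting this sum, each fixed choice of the outer indices $\event, n$ fixes a shift $\minDistance{\event}+n\period{\event}$, around which the inner density $\Sha^l_{\eventListSym_2}$ is replicated in full. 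That is exactly the semantics attached to the quadruple $\quadruple{\period{\event}}{n_\event}{\minDistance{\event}}{\eventListSym_\event}$ introduced after the definition of the hierarchical event stream: the outer stream marks release instants, and at every such instant a copy of the inner stream (bounded in length by its parameter) is embedded.

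The main obstacle is not the algebra but the interpretive step: I have to argue that the double-sum above is in fact the Dirac-comb realization of $\eventSpectrum$, i.e.\ that the informal hierarchical event stream of \cite{AlbersBS06} coincides term-by-term with the convolution. I would handle this by showing both directions: (i) every delta appearing in the convolution corresponds to a legitimate hierarchical release time (outer shift plus inner shift), and (ii) every release of the hierarchical stream appears exactly once in the double sum. Commutativity of convolution, which here follows from swapping the two outer summations, then justifies that the choice of which density is ``outer'' and which is ``inner'' is immaterial at the mathematical level, though modelling-wise the density with the larger period is usually taken as the outer one and the bound $n_\event$ of the quadruple is identified with $l$. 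I would close by remarking that in the limit $k\to\infty$ (the typical case of an infinitely repeating outer pattern) the argument is unchanged, since absolute convergence of the delta sums in any bounded observation interval was already established in the proof of Lemma~\ref{lem:FiniteEventBound}. \qed
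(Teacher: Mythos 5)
Your proposal is correct and follows essentially the same route as the paper's own proof: expand the convolution of the two Dirac combs, apply the sifting property of the delta (which the paper carries out by the substitution $\xi=\tau-t_n$ and the observation that only $\xi=0$ contributes), and collect the result into the double sum $\sum_{n}\sum_{m}\diracDelta{t-[\minDistance{\event_1}+\minDistance{\event_2}+n\period{\event_1}+m\period{\event_2}]}$. Your added care about interchanging sums with the integral and the explicit term-by-term identification with the hierarchical quadruple only tightens steps the paper leaves implicit.
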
  
\begin{proof} To make the proof easy to follow, we assume $\diracDelta{\tau-t}_{\event} = \summ{\event \in \eventListSym} {\hspace{2mm} \suml{n=0}{k-1}{\diracDelta{t - \minDistance{{\event}} - n\period{{\event}}}}$ and $t_n=\minDistance{{\event_1}} + n\period{{\event_1}}$ and $t_m=\minDistance{{\event_2}} + m\period{{\event_2}}}$:
\begin{eqnarray}
\Sha^k_{\eventListSym_1} \ast \Sha^l_{\eventListSym_2}	 &=& \integrateL{-\infty}{\infty}{\Sha^k_{\eventListSym_1} \cdot \Sha^l_{\eventListSym_2}}{\tau} \\
      &=& \integrateL{-\infty}{\infty}{\diracDelta{\tau-t_n}_{\event_1} \cdot \diracDelta{t-\tau-t_m}_{\event_2}}{\tau} \\
      &=& \integrateL{-\infty}{\infty}{\diracDelta{\tau-t_n}_{\event_1} \cdot\diracDelta{t-[\tau+t_m]}_{\event_2}}{\tau}
\end{eqnarray}
Substitute $\xi = \tau-t_n$ and $d\tau = d\xi$:
\begin{eqnarray}
\Sha^n_{t,k} \ast \Sha^m_{t,l}	 &=&
\integrateL{-\infty}{\infty}{\diracDelta{\xi}_{\event_1} \cdot\diracDelta{t-[\xi + t_n + t_m]}_{\event_2}}{\xi} \\
&=& \integrateL{-\infty}{\infty}{ \diracDelta{\xi}_{\event_1}  \cdot \diracDelta{t-[t_n + t_m]-\xi}_{\event_2}}{\xi}\end{eqnarray}
For $\xi\neq0$, the trival solution is $\Sha^n_{t,k} \ast \Sha^m_{t,l}=0$. The only nontrivial solution of the last equation is for $\xi=0$: We get $\integrateL{-\infty}{\infty}{ \diracDelta{0}_{\event_1}}{\xi} = 1$ and therefore
\begin{equation}
\integrateL{-\infty}{\infty}{ \diracDelta{0}_{\event_1} }{\xi} \cdot \diracDelta{t-[t_n + t_m]}_{\event_2} = \diracDelta{t-[t_n+t_m]}_\eventSpectrumSym	
\end{equation} 
Now let us resubstitute $t_n$ and $t_m$:
\begin{eqnarray}
\Sha^n_{t,k} \ast \Sha^m_{t,l}    
	&=& \suml{n=0}{k-1}{\suml{m=0}{l-1}{\diracDelta{t-[\minDistance{\event_1}+np_{\event_1}+\minDistance{\event_2}+mp_{\event_2}]}}} \\
    &=& \suml{n=0}{k-1}{\suml{m=0}{l-1}{\diracDelta{t-[\minDistance{\event_1} +\minDistance{\event_2}+np_{\event_1}+mp_{\event_2}]}}}
\end{eqnarray}
In other words, the product of the Dirac delta becomes zero, exactly if $\tau-t_n = 0$ and if $\tau+t_m = t$. Therefore the theorem holds. 
\end{proof}\qed

\begin{definition} [Event spectrum] As a result from the previous theorem the following 3-tuple describes hierarchical and synchronized event densities: 
\begin{equation}
\event = <\minDistance{\event_o} + \minDistance{\event_i}, p_{\event_o}, p_{\event_i}>_{k,l}
\end{equation}
with the hierarchical event density or event spectrum
\begin{equation}
\label{eq:synchronSpectrum}
\eventSpectrum	= \suml{n=0}{k-1}{\suml{m=0}{l-1}{\diracDelta{t-[\minDistance{\event_o} +\minDistance{\event_i}+np_{\event_o}+mp_{\event_i}]}}} 
\end{equation}
\end{definition}

The event spectrum is the most general form of an event model. An event spectrum can express all other known event models. According to lemma \ref{theo:DiracJob}, the event bound is calculated only by integrating the event spectrum density. Additionally, theorem \ref{theo:eventSpectrum} gives us the possibility to compute composite event models during analysis. To best of our knowledge, no previous work in any known real-time analysis technique covers this aspect.

\subsection{Task model: the request bound}
\label{scrivauto:158}

The previous presented mathematical framework allows the formulation of advanced analysis techniques. Next, we discuss how to integrate the generalized multi-frame model and how easily interfering request bounds can be constructed to describe different scheduling policies.

\subsubsection{Generalized request bound}
\label{scrivauto:160}

The new approach to describe events with Dirac deltas is compelling: The advantage compared to established techniques is that the Dirac comb of definition \ref{def:eventStream} addresses each event separately, and therefore, each event may have different properties. As a result, the model allows assigning different execution times to different events without any additional effort. In the established analysis, the request bound is given by a multiplication of the event bound and the worst-case execution time of the task. However, because it is easy to address each event separately by the unified event bound, the multiframe- \cite{Mok:1997} and the generalized multiframe model \cite{Baruah:1999} integrates easily into the new approach. Formulating the event- and the request bound unified allows addressing each job with separate execution time. Therefore, it is possible to model task sets with complex execution time behaviour. However, often it is not necessary to assign an own execution time to each job. In this case, the execution time vector contains fewer elements as events occur by a task. Then the execution time can be addressed by restricted access to the given vector: The length of the vector then bounds the access as it could be described by $n \mod \abs{\compLoadVectorSym^n_{\task,\event}}$ as also given in the multiframe model:

\begin{definition} [Execution time vector] The execution time vector introduced by \cite{Mok:1997} of k different execution times of a task is given by  
\begin{equation}
\compLoadVector{\task}{\event}{n} = \compLoadVectorSym^n_{\task,\event} = [\wcet{1}, \hdots, \wcet{k}]
\end{equation}
\end{definition}

Note the style of the notation: The idea is to address each component of the vector by $n$. If we like to address each event separately, it is not possible anymore to use the notation given in related work by defining request and demand bound functions. Addressing different events and jobs in one equation require to write an integral and two sum symbols every time. Therefore, it is necessary to introduce a short-form notation to simplify the writing and reading of event- and request bounds. Based on Einstein's well-known shorthand notation \cite{einstein1916}\footnote{A detailed description gives the appendix  \ref{scrivauto:234}.}, it is possible to define a shorthand notation for an event- and request bound that allows us to address each event or job of a given task separately:

\begin{definition} [Short form notation for request bounds] Assuming $\einsteinRequestSum{t'}{\task}{n}$ is a short-hand notation for $\eventStreamSum{t}{\compLoadVector{\task}{\event}{n}}$ and $\compLoadVectorSym$ is a vector that contains different execution times for different jobs, it is possible to write
\begin{eqnarray*}
&&  \integrateL{0}{t}{\eventStreamSum{t'} \cdot \compLoadVector{\task}{\event}{n} \cdot \heavisideU{t'-t_a} \cdot \heavisideD{t_b-t'}}{t'} \\
&=& \integrateL{0}{t}{\einsteinRequestSum{t'}{\task}{n} \cdot \heavisideU{t'-t_a} \cdot \heavisideD{t_b-t'}}{t'} \\
&=& \integrateL{t_a}{[t,t_b)}{\einsteinRequestSum{t'}{\task}{n} }{t'} \\
&=& \eEbf{\task,\event}{n \leq k}{t}{\mskInterval{a}{b}} \cdot \compLoadVectorSym^n_{\task,\event} =
\eRbf{\task,\event}{n \leq k}{t}{\mskInterval{a}{b}}
\end{eqnarray*}
\end{definition}

To complete the integration of the multiframe model first introduced by  \cite{Baruah:1999} in this work, we need to redefine the concept of deadlines:

\begin{definition} [Deadline Vector] The deadline vector is given by  
\begin{equation}
\deadlineVector{\task}{\event}{n} = \deadlineVectorSym^n_{\task,\event} =[\relativeDeadline{1}, \hdots, \relativeDeadline{k}]
\end{equation}
\end{definition}

\subsubsection{The request bound of interfering jobs }
\label{scrivauto:167}

The request bound function, as defined in general, does not distinguish between task priorities. Therefore it sums the requested execution times of all tasks. It is necessary to compute the interference of jobs to differentiate between the request of higher prior jobs that interrupt and interfere with a given job and other jobs that will have no impact on the final response. The following section will consider static as dynamic priorities as well. We look at how the same approach can solve both problems. Additionally, we find a unified solution of hierarchical scheduling of both algorithms which can be used in general to describe one of the two algorithms as well as a combination of them. First, we formulate an abstract interfering request bound which can easily be adapted to different scheduling criteria:

\begin{theorem} [Interference request bound] Assume any criteria $\square$ and $\blacksquare \geq \square$ has a higher or equal priority and any job of $\task'_\blacksquare$ interfere with $\task_\square$. The interfering jobs execution time is selected by masking the request bound: 
\label{theo:interferenceRequestBound}
\begin{equation}
\eRbf{\task,\event}{\blacksquare \geq \square}{t}{\mskInterval{a}{b}}
= \eEbf{\task',\event}{n \leq k}{t}{\mskInterval{a}{b}} \cdot \compLoadVectorSym^n_{\task',\event} \cdot \heavisideU{\blacksquare_{\task'} - \square_{\task}}
\end{equation}
\end{theorem}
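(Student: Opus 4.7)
The plan is to reduce this to showing that the upper Heaviside factor is exactly the characteristic function of the higher-or-equal priority set, so that summing over all tasks with this mask coincides with summing only over interfering tasks. Concretely, I would start by writing out what the left-hand side means: the interfering request bound is the sum of the per-task request bounds $\eRbf{\task',\event}{n \leq k}{t}{\mskInterval{a}{b}}$ over exactly those $\task' \in \taskSet$ whose priority criterion value $\blacksquare_{\task'}$ satisfies $\blacksquare_{\task'} \geq \square_{\task}$. This set is the generalisation of $\taskSetHigher$ from Definition \ref{StaticPriority}, where $\square$ and $\blacksquare$ stand for whatever priority criterion is in use (static priority number, relative deadline, absolute deadline, etc.).

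Next I would invoke Definition \ref{def:upperHeavisideFunction}: by construction $\heavisideU{x} = 1$ for $x \geq 0$ and $\heavisideU{x} = 0$ for $x < 0$. Consequently
\begin{equation}
\heavisideU{\blacksquare_{\task'} - \square_{\task}} =
\begin{cases}
1 & \text{if } \blacksquare_{\task'} \geq \square_{\task},\\
0 & \text{otherwise,}
\end{cases}
\end{equation}
which is precisely the indicator of the interfering set. Multiplying the per-task request bound $\eEbf{\task',\event}{n \leq k}{t}{\mskInterval{a}{b}} \cdot \compLoadVectorSym^n_{\task',\event}$ by this factor therefore zeroes out the contribution of every non-interfering task and leaves the contribution of every interfering task untouched, so summing the masked expression over all $\task' \in \taskSet$ yields the same value as summing the unmasked expression over $\{\task' : \blacksquare_{\task'} \geq \square_{\task}\}$.

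The final step is bookkeeping: I would combine this observation with Theorem \ref{theo:UnifiedEventBoundFunction} and the short-form convention so that the integration over $t'$, the sum over events $\event$ via the Dirac comb, the sum over job indices $n$ via $\compLoadVectorSym^n_{\task',\event}$, and the sum over tasks $\task'$ all collapse into the compact shorthand on the right-hand side. Because the Heaviside factor depends neither on $t'$ nor on $n$, it commutes with the integral and both sums, so it can be pulled inside freely without changing the value. This yields the claimed identity.

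The main obstacle is conceptual rather than technical: one must be comfortable with the idea that $\square$ and $\blacksquare$ are placeholders for a priority criterion that will later be instantiated (static priority for DMS, absolute deadline for EDF, and the hybrid criterion for mixed schedulers). The algebraic manipulation is essentially a one-line relabelling once the Heaviside-as-indicator observation is made, but the payoff is that the same formula subsequently drives both the static and dynamic analyses in the sections to come, so the care lies in stating the convention unambiguously rather than in the calculation itself.
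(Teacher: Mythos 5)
Your proposal is correct and follows essentially the same route as the paper: both arguments rest on the single observation that $\heavisideU{\blacksquare_{\task'} - \square_{\task}}$ is the indicator of the condition $\blacksquare_{\task'} \geq \square_{\task}$, so multiplying the request bound by it retains exactly the interfering tasks and annihilates the rest. Your additional remark that the mask is independent of $t'$ and $n$ and hence commutes with the integral and the sums is a small piece of bookkeeping the paper leaves implicit, but it does not change the argument.
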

\begin{proof} The Heaviside function as given by definition \ref{def:upperHeavisideFunction} returns $1$ if $\blacksquare_{\task'} - \square_{\task} \geq 0$ therefore 
\begin{equation}
\integrateL{-\infty}{\infty}{\eventStreamSum{t} \cdot \compLoadVector{\task}{\event}{n} \cdot \heavisideU{\blacksquare_{\task'} - \square_{\task}}}{t}
\end{equation} 
If a priority criterium of task $\task'$ is higher than the criterium of task $\task'$ then the task $\task'$ interrupts $\task$, the Heaviside function becomes $1$, and the request of the higher priority task is added to the request bound. If the criterium of task $\task'$ is smaller than the one of task $\task$ the Heaviside function is equal to $0$ modelling no interrupt.
\end{proof}\qed

The idea to describe interference of jobs is generalized to a bunch of different relations $\heavisideSym : \fieldR \to \{0,1\}$ mapping any difference of real- or integer numbers to boolean values:

\begin{definition} [Heaviside relation] Again, assume any criteria $\square$, the main relations can be computed by the following heaviside functions:
\begin{eqnarray*}
\blacksquare = \square &:=& \heavisideU{\blacksquare - \square} \cdot \heavisideU{\square - \blacksquare} = \kroneckerDelta{\blacksquare}{\square}\\
\blacksquare \leq \square &:=& \heavisideU{\square - \blacksquare}\\
\blacksquare \geq \square &:=& \heavisideU{\blacksquare - \square}\\
\blacksquare < \square &:=& \heavisideD{\square - \blacksquare}\\
\blacksquare > \square &:=& \heavisideD{\blacksquare - \square}\\
\end{eqnarray*}
The Kronecker delta $\kroneckerDelta{\blacksquare}{\square}$ is a well known short-form writing if criteria are equal.
\end{definition}

\begin{definition} [Task scheduler] Any boolean equation of Heaviside relations models a scheduler in real-time analysis because it defines whether two tasks interfere or not. Assume any Heaviside relation $\xi \in \{\heavisideSym_{\blacksquare = \square}, \heavisideSym_{\blacksquare \leq \square}, \heavisideSym_{\blacksquare \geq \square}, \heavisideSym_{\blacksquare < \square}, \heavisideSym_{\blacksquare > \square}\}$ the function $\SchedulerSym: \taskSet^2 \to \{0,1\}$ represents a task scheduler which describes the interference of two tasks:
\begin{equation} 
\Scheduler{\square}{\task}{\task'} := \max\limits_{} \{\min\limits{}\{\blacksquare \xi \square\}\}
\end{equation}
\end{definition}
Note, the operation $\max\limits_{}$ and $\min\limits_{}$ represents $or$ and $and$ on integers.

\begin{example} [Static task scheduler] Assume static priorities as given in definition \ref{StaticPriority}. Two jobs interfere if
\label{StaticTaskScheduler}
\begin{equation}
\Scheduler{\pi}{\task}{\task'}:=\max\limits_{} \{\heavisideSym_{\priority{\task} < \priority{\task'}}, \min\limits_{} \{
\heavisideSym_{\priority{\task'} = \priority{\task}},\heavisideSym_{t_{\task'}^r < t_{\task}^r}\}\}
\end{equation}
In deadline monotonic scheduling the priority is not needed, it is possible to write directly
\begin{equation}
\Scheduler{d}{\task}{\task'}:=\max\limits_{} \{\heavisideSym_{\relativeDeadline{\task'} < \relativeDeadline{\task}}, \min\limits_{} \{
\heavisideSym_{\relativeDeadline{\task'} = \relativeDeadline{\task}},\heavisideSym_{t_{\task'}^r < t_{\task}^r}\}\}
\end{equation}
$\heavisideSym_{\relativeDeadline{\task'} < \relativeDeadline{\task}}$ and $\heavisideSym_{\relativeDeadline{\task'} = \relativeDeadline{\task}}$ are disjunct, therefore 
\begin{equation}
\Scheduler{d}{\task}{\task'}:= \heavisideSym_{\priority{\task'} < \priority{\task}} +  \heavisideSym_{\priority{\task'} = \priority{\task}} \cdot \heavisideSym_{t_{\task'}^r < t_{\task}^r}
\end{equation}
\begin{equation}
\Scheduler{d}{\task}{\task'}:= \heavisideSym_{\relativeDeadline{\task'} < \relativeDeadline{\task}} +  \heavisideSym_{\relativeDeadline{\task'} = \relativeDeadline{\task}} \cdot \heavisideSym_{t_{\task'}^r < t_{\task}^r}
\end{equation}
\end{example}

\begin{example} [Dynamic task scheduler] In dynamic scheduling the job with the earliest absolute deadline is scheduled. Therefore we have only to change the relative deadline to the absolute deadline in definition \ref{StaticTaskScheduler}. In this case, the consideration of the request time is mandatory because system designer and programmers can not guarantee different absolute deadlines if the specified relative deadlines are different.
\label{DynamicTaskScheduler}
\begin{equation}
\Scheduler{D^n}{\task}{\task'}:=\max\limits_{} \{\heavisideSym_{\absoluteDeadline{\task'} < \absoluteDeadline{\task}}, \min\limits_{} \{
\heavisideSym_{\absoluteDeadline{\task'} = \absoluteDeadline{\task}},\heavisideSym_{t_{\task'}^r < t_{\task}^r}\}\}
\end{equation}
Note, that indifference to the static task scheduler the absolute deadline of each job must considered\footnote{In scheduling theory, we assume that any job could be executed if absolute deadlines are equal. It can be described again by the upper Heaviside function without any assumption about request times. Again, arbitrary deadlines could be modeled easily, considering request times.}. $\heavisideSym_{\absoluteDeadline{\task'} < \absoluteDeadline{\task}}$ and $\heavisideSym_{\absoluteDeadline{\task'} = \absoluteDeadline{\task}}$ are disjunct, therefore
\begin{equation}
\Scheduler{D^n}{\task}{\task'}:= \heavisideSym_{\absoluteDeadline{\task'} < \absoluteDeadline{\task}} + \heavisideSym_{\absoluteDeadline{\task'} = \absoluteDeadline{\task}} \cdot \heavisideSym_{t_{\task'}^r < t_{\task}^r}
\end{equation}
\end{example}

The first step in the discussion is the formulation of the interfering request bound for static schedulers and task priorities specified by fixed numbers:

\begin{corollary} [Interference request bound in static scheduling] Assume any static scheduler with a priority $\priority{\task}$ assigned to each task. If task $\task' $ has a higher priority than task $\task$ and a higher number of $\priority{\task'} > \priority{\task}$ specifies this behaviour, then the interference request bound $\rbfSym^{\priority{\task'} \geq \priority{\task}}: \taskSet^2 \times \fieldR^2 \to \fieldR$ is given by
\label{theo:interferenceRequestBoundStatic}
\begin{equation}
\label{eq:interferenceRequestBoundStatic1}
\eRbf{\task, \task'}{\priority{\task'} \geq \priority{\task}}{t}{\mskInterval{a}{b}} = \eEbf{\task',\event}{n \leq k}{t}{\mskInterval{a}{b}} \cdot \compLoadVectorSym^n_{\task',\event} \cdot [\,\heavisideD{\priority{\task'} - \priority{\task}} + \kroneckerDelta{\priority{\task'}}{\priority{\task}} \cdot \heavisideU{\requestTime{\task} - \requestTime{\task'}}\,]
\end{equation}
Contrarily, if task $\task'$ has a higher priority than task $\task$ and a lower number of $\priority{\task'} < \priority{\task'}$ specifies this behaviour, then the priority difference in the equation changes. If we assume deadline monotone scheduling the interfering request bound can express this directly:
\begin{equation}
\label{eq:interferenceRequestBoundStatic2}
\eRbf{\task, \task'}{\relativeDeadline{\task'} \leq \relativeDeadline{\task}}{t}{\mskInterval{a}{b}} = \eEbf{\task',\event}{n \leq k}{t}{\mskInterval{a}{b}} \cdot \compLoadVectorSym^n_{\task',\event} \cdot [\,\heavisideD{\relativeDeadline{\task} - \relativeDeadline{\task'}} + \kroneckerDelta{\relativeDeadline{\task}}{\relativeDeadline{\task'}} \cdot \heavisideU{\requestTime{\task} - \requestTime{\task'}}\,]
\end{equation}
\end{corollary}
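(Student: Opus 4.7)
The plan is to derive this corollary as a direct specialization of Theorem \ref{theo:interferenceRequestBound} by substituting the concrete static task schedulers introduced in Example \ref{StaticTaskScheduler}. Theorem \ref{theo:interferenceRequestBound} already supplies the skeleton
\[
\eRbf{\task,\event}{\blacksquare \geq \square}{t}{\mskInterval{a}{b}} = \eEbf{\task',\event}{n \leq k}{t}{\mskInterval{a}{b}} \cdot \compLoadVectorSym^n_{\task',\event} \cdot \heavisideU{\blacksquare_{\task'}-\square_{\task}},
\]
so the only remaining work is to show that the Heaviside mask on the right-hand side coincides, for the two assumed priority conventions, with the bracket that appears in \eqref{eq:interferenceRequestBoundStatic1} and \eqref{eq:interferenceRequestBoundStatic2}. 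In the more general setting of Example \ref{StaticTaskScheduler} this mask is not a single Heaviside but the scheduler $\Scheduler{\pi}{\task}{\task'}$, which combines the strict priority comparison with an equal-priority tie-breaker based on request times.

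First I would rewrite $\Scheduler{\pi}{\task}{\task'}$ from Example \ref{StaticTaskScheduler} using the Heaviside-relation dictionary: the strict ordering $\priority{\task}<\priority{\task'}$ becomes $\heavisideD{\priority{\task'}-\priority{\task}}$, the equality becomes the Kronecker delta $\kroneckerDelta{\priority{\task'}}{\priority{\task}}$, and the request-time tie-breaker becomes $\heavisideU{\requestTime{\task}-\requestTime{\task'}}$. The key observation is that the two Boolean events $\{\priority{\task'}>\priority{\task}\}$ and $\{\priority{\task'}=\priority{\task}\}$ are disjoint, so at most one of the two arguments of the outer $\max$ is nonzero; consequently $\max$ over $\{0,1\}$-valued expressions collapses to ordinary addition and the inner $\min$ to multiplication. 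Plugging the resulting expression into Theorem \ref{theo:interferenceRequestBound} yields precisely \eqref{eq:interferenceRequestBoundStatic1}. For \eqref{eq:interferenceRequestBoundStatic2} I would repeat the argument with deadlines in place of priorities and the inequality reversed, reflecting that smaller deadlines encode higher priority in DMS; the disjointness and the $\max/\min \to +/\cdot$ collapse are unchanged.

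I expect the most delicate part to be the sign and direction bookkeeping: the paper uses both the upper and the lower Heaviside variants, and priority numbers encode priority in the opposite sense to deadlines, so one has to be careful to map each inequality in Example \ref{StaticTaskScheduler} to the correct Heaviside variant before invoking Theorem \ref{theo:interferenceRequestBound}. Beyond that, the proof is a mechanical substitution: the whole statement reduces to the single observation that, on disjoint Boolean events, the lattice operations $\max,\min$ on $\{0,1\}$ agree with the ring operations $+,\cdot$, so no additional analytic machinery is needed.
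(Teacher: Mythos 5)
Your proposal is correct and follows essentially the same route as the paper: the paper's own proof also just specializes Theorem \ref{theo:interferenceRequestBound}, checking that $\heavisideD{\priority{\task'}-\priority{\task}}=1$ for strictly higher priority and that the Kronecker-delta term with the request-time comparison handles ties, declaring the DMS variant obvious. Your added detour through the scheduler of Example \ref{StaticTaskScheduler} and the explicit observation that disjointness of the events $\{\priority{\task'}>\priority{\task}\}$ and $\{\priority{\task'}=\priority{\task}\}$ collapses $\max/\min$ to $+/\cdot$ merely makes explicit what the paper already records in that example, so nothing is missing.
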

\begin{proof} Consider theorem \ref{theo:interferenceRequestBound}: For $\priority{\task'} > \priority{\task}$ the Heaviside function $\heavisideD{\priority{\task'} - \priority{\task}}=1$, and the execution request of task $\task'$ is added to the interference task set of $\task$. If two priorities are equal the job with the earliest request is scheduled. The interference mask become one if $\requestTime{\task} - \requestTime{\task'} \geq 0$. 
The proof of the other DMS equation is obvious.
\end{proof}\qed

According to this well-known definition of absolute deadlines, the interfering request bound in dynamic scheduling can be formulated by:

\begin{corollary} [Interference request bound in dynamic scheduling] The interfering request bound $\rbfSym^{\absoluteDeadline{\task'} \geq \absoluteDeadline{\task}}: \taskSet^2 \times \fieldR^2 \to \fieldR$ of higher priority tasks in dynamic scheduling is \label{theo:interferenceRequestBoundDynamic}
\label{cor:InterferenceRequestBoundDynamicScheduling}
\begin{equation}
\eRbf{\task, \task'}{\absoluteDeadline{\task'} \leq \absoluteDeadline{\task}}{t}{\mskInterval{a}{b}} = \eEbf{\task',\event}{n \leq k}{t}{\mskInterval{a}{b}} \cdot \compLoadVectorSym^n_{\task',\event} \cdot [\,\heavisideD{\absoluteDeadline{\task} - \absoluteDeadline{\task'}} + \kroneckerDelta{\absoluteDeadline{\task'}}{\absoluteDeadline{\task}} \cdot \heavisideU{\requestTime{\task} - \requestTime{\task'}}\,]
\end{equation}  
\end{corollary}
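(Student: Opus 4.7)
The plan is to apply Theorem~\ref{theo:interferenceRequestBound} with the abstract priority criterion $\square$ instantiated as the absolute deadline $\absoluteDeadline{\task}$, thereby specializing the general interference request bound to EDF-style dynamic scheduling. The structure is entirely parallel to the static case treated in Corollary~\ref{theo:interferenceRequestBoundStatic}; the only conceptual change is that the static priority $\priority{\task}$, which is fixed per task, is replaced by a quantity $\absoluteDeadline{\task,\event}$ that depends on the job index $\event$. Consequently, the masking Heaviside relations now act on job-level data rather than task-level data, which matches the remark in Example~\ref{DynamicTaskScheduler} that the absolute deadline of each individual job must be considered explicitly.

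First I would invoke the dynamic task scheduler from Example~\ref{DynamicTaskScheduler}, which already gives
\begin{equation*}
\Scheduler{D^n}{\task}{\task'} = \heavisideSym_{\absoluteDeadline{\task'} < \absoluteDeadline{\task}} + \heavisideSym_{\absoluteDeadline{\task'} = \absoluteDeadline{\task}} \cdot \heavisideSym_{t_{\task'}^r < t_{\task}^r}.
\end{equation*}
Then I would rewrite the two summands using the Heaviside relations: the strict inequality becomes $\heavisideD{\absoluteDeadline{\task} - \absoluteDeadline{\task'}}$, the equality becomes $\kroneckerDelta{\absoluteDeadline{\task'}}{\absoluteDeadline{\task}}$, and the tie-break on request times is captured by the upper Heaviside $\heavisideU{\requestTime{\task} - \requestTime{\task'}}$. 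Substituting this composite mask in place of the generic interference mask in Theorem~\ref{theo:interferenceRequestBound} yields precisely the bracketed expression in the claim, since the two summands add the contribution of strictly earlier-deadline jobs and of equal-deadline jobs with earlier request time.

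The only non-routine point is the disjointness of the two Heaviside contributions, which is what legitimises the plus sign and prevents a double count. This follows because $\kroneckerDelta{\absoluteDeadline{\task'}}{\absoluteDeadline{\task}}$ enforces $\absoluteDeadline{\task'} = \absoluteDeadline{\task}$, whereas $\heavisideD{\absoluteDeadline{\task} - \absoluteDeadline{\task'}}$ requires the strict inequality $\absoluteDeadline{\task'} < \absoluteDeadline{\task}$, and these two conditions are mutually exclusive; hence the sum is an honest 0/1 indicator identifying exactly those jobs of $\task'$ that interfere with a given job of $\task$ under EDF. Once this disjointness is recorded the result follows by direct specialization of Theorem~\ref{theo:interferenceRequestBound}, so the main obstacle is not algebraic but definitional, namely choosing the correct scheduler that captures the time-varying nature of EDF priority through the job-indexed deadline $\absoluteDeadline{\task,\event}$ rather than through a task-level constant.
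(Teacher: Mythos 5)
Your proposal is correct and follows essentially the same route as the paper: both specialize the abstract interference mask of Theorem~\ref{theo:interferenceRequestBound} to the absolute-deadline criterion, verify that the lower Heaviside selects strictly earlier deadlines, and handle ties via the Kronecker delta combined with the request-time Heaviside, with the disjointness of the two summands (already noted for the scheduler in Example~\ref{DynamicTaskScheduler}) justifying the sum. The only cosmetic difference is that you route the argument through the scheduler definition while the paper evaluates the Heaviside mask directly, which does not change the substance.
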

\begin{proof} Assume dynamic scheduling and a given job $\task_{\event,n}$. The request bound of this job is the sum of all execution times of job's $\task'_{\event,n}$ with an absolute deadline shorter than the job's $\task_{\event,n}$ deadline. According to theorem \ref{theo:interferenceRequestBound}, the subtraction $\absoluteDeadline{\task} - \absoluteDeadline{\task'}$ is positive if $\absoluteDeadline{\task'} \leq \absoluteDeadline{\task}$, therefore in this case $\heavisideD{\absoluteDeadline{\task} - \absoluteDeadline{\task'}}=1$. If $\absoluteDeadline{\task'} > \absoluteDeadline{\task}$ the inequality $\heavisideD{\absoluteDeadline{\task} - \absoluteDeadline{\task'}}=0$. This means $\heavisideD{\absoluteDeadline{\task} - \absoluteDeadline{\task'}}$ selects the higher priority jobs in dynamic scheduling. This approach models scheduling were any task instance with an absolute deadline smaller than the absolute deadline of the considered task instance interfere in the considered task. However, what happens if two instances have the same absolute deadline? In this case, the instance with the smaller request is scheduled to avoid scheduling overhead\footnote{This assumption does not hold in general. However, if any job is scheduled if deadlines are equal only the Heaviside mask should be modified to model such a scheduling behavior. But this leads to an analysis over approximation.}. This behaviour is modelled by $\kroneckerDelta{\absoluteDeadline{\task'}}{\absoluteDeadline{\task}} \cdot \heavisideU{\requestTime{\task} - \requestTime{\task'}}$.    
\end{proof}\qed

\begin{example} [Interfering request bound] Again consider the task set given in example \ref{Ex:exampleTaskSet}. The interfering request bounds in static and dynamic scheduling are given in figure \ref{fig:InteferingRequestBound}. In this example, we consider the interfering request bounds of the two jobs $\task_{1,2}$ and task $\task_{2,2}$. Because only two jobs during the hyper-period occur from task $\task_3$, we consider the interfering request bound of the second job $\task_{3,1}$. Note that it is possible to compute the interference of each job of each task. However, we chose the example jobs because the difference between static and dynamic scheduling is easily seen. The following CAS input produces the resulting graphs of \ref{fig:InteferingRequestBound}:

\noindent
\includegraphics[width=1\textwidth,left]{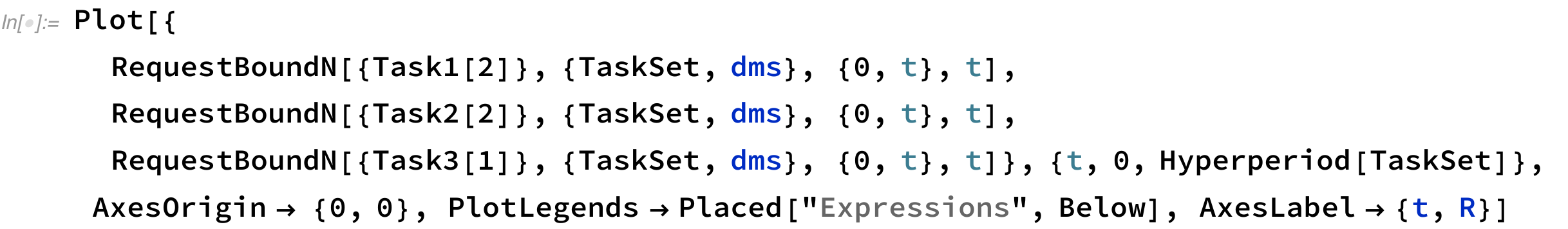}
\noindent
\includegraphics[width=1\textwidth,left]{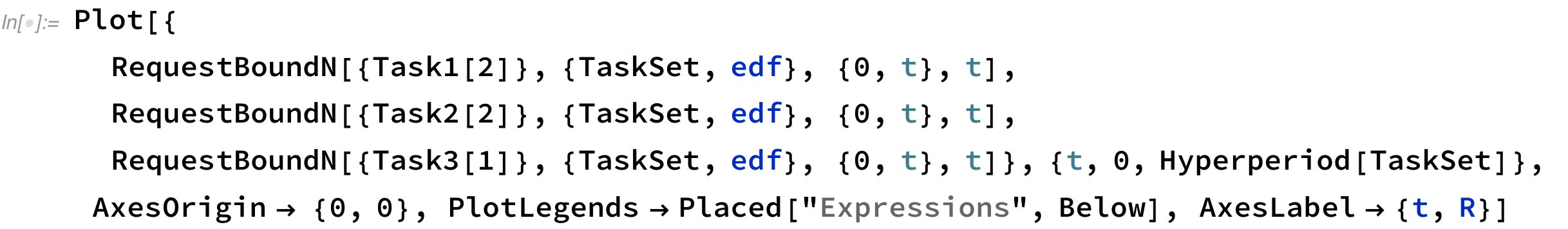}
\end{example}

\begin{figure}
\centering
\begin{minipage}{.5\textwidth}
  \centering
  \includegraphics[width=1\linewidth]{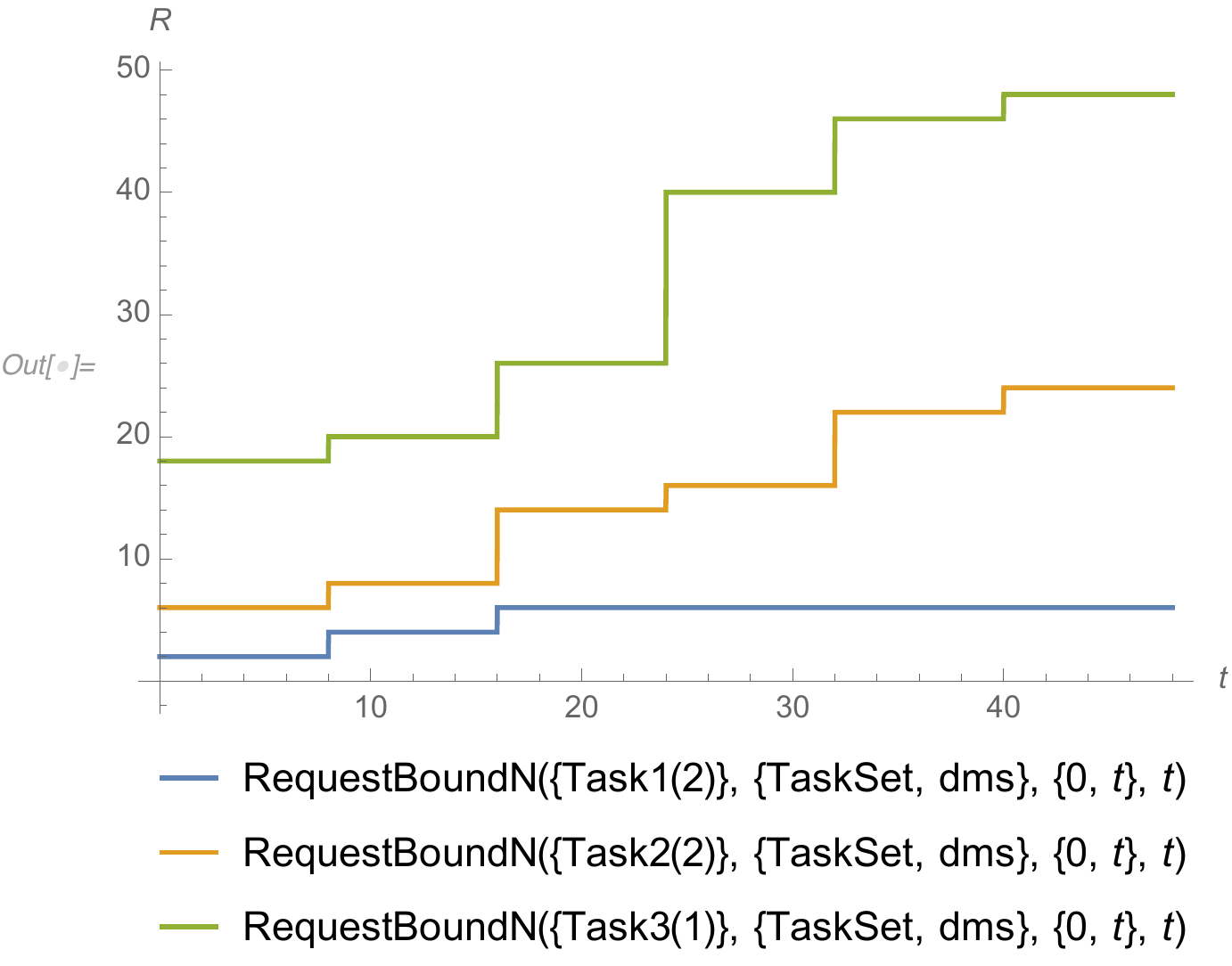}
\end{minipage}%
\begin{minipage}{.5\textwidth}
  \centering
  \includegraphics[width=1\linewidth]{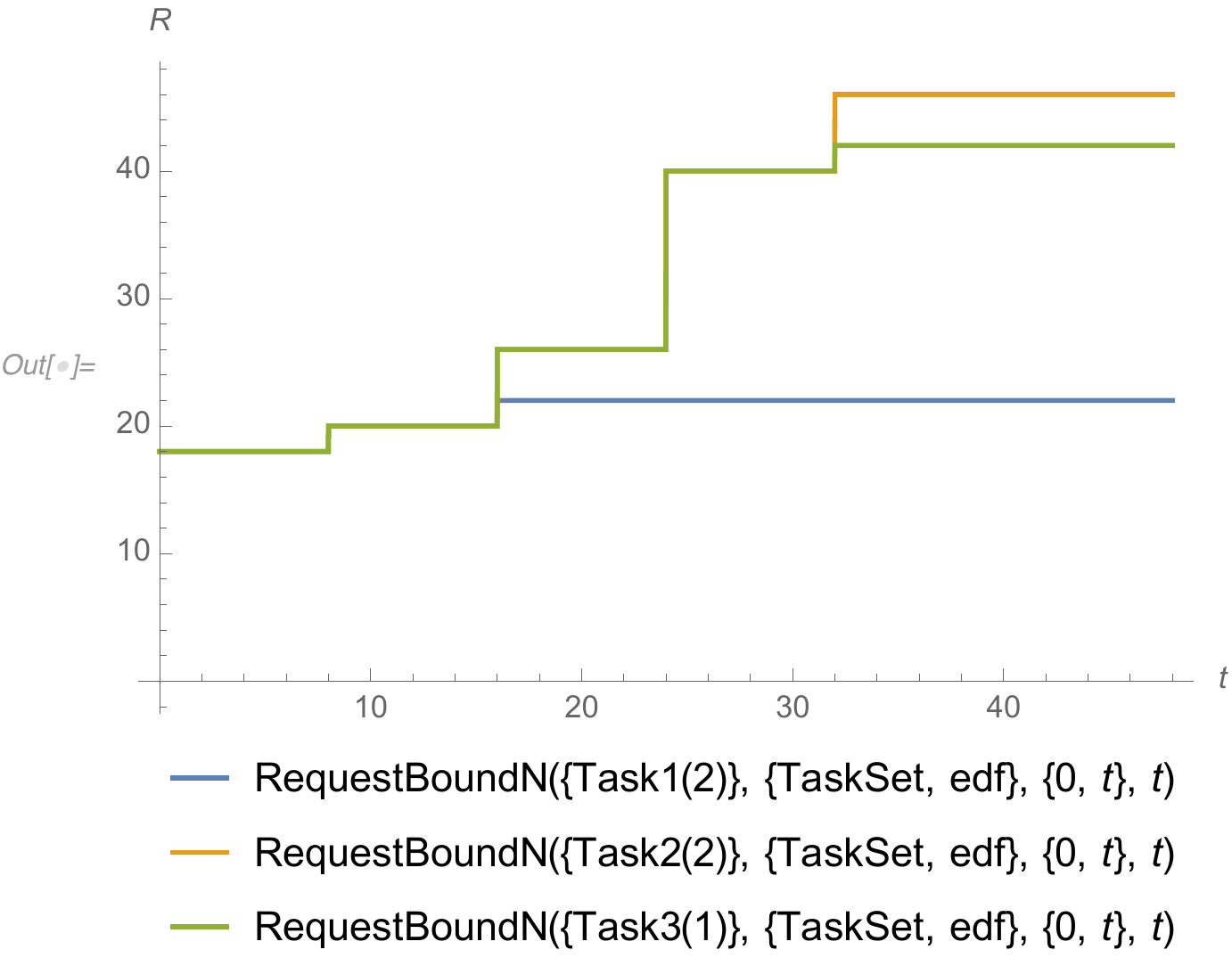}
\end{minipage}
  \caption{Intefering request bound of the example task set on selected jobs in static and dynamic scheduling.}
  \label{fig:InteferingRequestBound}
\end{figure}

In real-time scheduling theory, the structure of equations changes on any new problem. As we demonstrated the unified theory, the request bound of interfering jobs can be expressed by just one single equation choosing the correct parameters of the Heaviside mask.

\begin{theorem} [Interference request bound in hierarchical scheduling]  Assume a task set and assign a static priority to each task. Then tasks with different priorities schedule by static scheduling and tasks with equal priority schedule according to their deadlines dynamically. In the case of such  hierarchical scheduling, the interfering request bound $\rbfSym^{\task' \geq \task} : \taskSet^2 \times \fieldR^2 \to \fieldR$ is
\label{theo:interferenceRequestBoundHierarchical}
\begin{multline}
\eRbf{\task, \task'}{\task' \geq \task}{t}{\mskInterval{a}{b}}
= \eEbf{\task,\event}{n \leq k}{t}{\mskInterval{a}{b}} \cdot \compLoadVectorSym^n_{\task,\event} \cdot \\
\max(\;\heavisideD{\priority{\task'} - \priority{\task}}, \kroneckerDelta{\task}{\task'} \cdot [\;\heavisideD{\absoluteDeadline{\task} - \absoluteDeadline{\task'}} + \kroneckerDelta{\absoluteDeadline{\task'}}{\absoluteDeadline{\task}} \cdot \heavisideU{\requestTime{\task'} - \requestTime{\task}}\;]\;)
\end{multline}  

\newcommand{\InterferenceRequestBoundHierarchicalScheduling}{
\begin{multline}
\eRbf{\task, \task'}{\task' \geq \task}{t}{\mskInterval{a}{b}}
= \eEbf{\task,\event}{n \leq k}{t}{\mskInterval{a}{b}} \cdot \compLoadVectorSym^n_{\task,\event} \cdot \\
\max(\;\heavisideD{\priority{\task'} - \priority{\task}}, \kroneckerDelta{\task}{\task'} \cdot [\;\heavisideD{\absoluteDeadline{\task} - \absoluteDeadline{\task'}} + \kroneckerDelta{\absoluteDeadline{\task'}}{\absoluteDeadline{\task}} \cdot \heavisideU{\requestTime{\task'} - \requestTime{\task}}\;]\;)
\end{multline}  
} 

\end{theorem}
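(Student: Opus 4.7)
The plan is to reduce the hierarchical case to a direct application of Theorem~\ref{theo:interferenceRequestBound}, by assembling the correct Heaviside mask from the two already-established special cases: Corollary~\ref{theo:interferenceRequestBoundStatic} for pure static scheduling and Corollary~\ref{cor:InterferenceRequestBoundDynamicScheduling} for pure dynamic scheduling. The structural observation is that, under the hierarchical policy, every interference decision is made by exactly one of the two rules, depending on how the static priorities of the two tasks compare.

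First, I would split the hierarchical scheduling condition into two mutually exclusive sub-conditions dictated by the policy. A job of $\task'$ interferes with $\task$ exactly when either (i)~$\priority{\task'} > \priority{\task}$, in which case the static rule selects $\task'$ irrespective of deadlines, or (ii)~$\priority{\task'} = \priority{\task}$ and, restricted to that priority class, $\task'$ is chosen by EDF with the usual request-time tie-break. These two sub-conditions are disjoint by construction, which is what allows them to be glued together without double-counting.

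Second, I would translate each sub-condition into the Heaviside machinery developed earlier. Sub-condition~(i) is encoded by $\heavisideD{\priority{\task'} - \priority{\task}}$, exactly as in equation~\eqref{eq:interferenceRequestBoundStatic1} of Corollary~\ref{theo:interferenceRequestBoundStatic}. Sub-condition~(ii) is encoded by first gating on ``same priority class'' via the Kronecker factor $\kroneckerDelta{\task}{\task'}$ (read, per the paper's convention, as equality of the scheduling attribute that groups tasks into priority classes) and then applying the EDF indicator $\heavisideD{\absoluteDeadline{\task} - \absoluteDeadline{\task'}} + \kroneckerDelta{\absoluteDeadline{\task'}}{\absoluteDeadline{\task}} \cdot \heavisideU{\requestTime{\task'} - \requestTime{\task}}$ inherited from Corollary~\ref{cor:InterferenceRequestBoundDynamicScheduling}. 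Because (i) and (ii) are disjoint, combining them by $\max$ (which, for $\{0,1\}$-valued indicators, agrees with the sum) faithfully captures the hierarchical scheduler. Plugging this combined indicator as the Heaviside mask into the generic form of Theorem~\ref{theo:interferenceRequestBound} and pulling the event bound times the execution-time vector $\compLoadVectorSym^{n}_{\task,\event}$ in front then yields the claimed formula.

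The main obstacle, as I see it, is not the algebraic derivation but pinning down the \emph{semantics} of the Kronecker factor $\kroneckerDelta{\task}{\task'}$ in the gating term: the proof only goes through if one reads it as ``$\task$ and $\task'$ belong to the same static priority level'' rather than literally ``$\task = \task'$''. Once this convention is explicitly stated, the remainder of the argument is a routine case split with two disjoint indicators, each handled by one of the preceding corollaries, so the work is concentrated entirely in justifying the combined mask rather than in any new integration or limit argument.
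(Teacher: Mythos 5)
Your proposal is correct and follows essentially the same route as the paper's own proof: identify the static term $\heavisideD{\priority{\task'}-\priority{\task}}$ and the Kronecker-gated EDF term as the two disjoint interference cases, and observe that $\max$ acts as a logical \emph{or} on the $\{0,1\}$-valued indicators within the generic mask of Theorem~\ref{theo:interferenceRequestBound}. Your remark that $\kroneckerDelta{\task}{\task'}$ must be read as ``same static priority level'' rather than literal task equality is well taken --- the paper's own proof silently relies on exactly that reading --- but it does not change the substance of the argument.
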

\begin{proof} If a task has a higher priority than the considered task, the function $\heavisideD{\priority{\task'} - \priority{\task}}=1$ else it is $0$. If the priority of the tasks is equal and the absolute deadline of the interfering task is smaller than the absolute deadline of the considered task the scheduling is described by 
$\kroneckerDelta{\task}{\task'} \cdot [\;\heavisideD{\absoluteDeadline{\task} - \absoluteDeadline{\task'}} + \kroneckerDelta{\absoluteDeadline{\task'}}{\absoluteDeadline{\task}} \cdot \heavisideD{\requestTime{\task'} - \requestTime{\task}}\;]$, as we already know from corollary \ref{theo:interferenceRequestBoundDynamic}.
This function is only $1$ if the absolute deadline of a potential interfering task $\task'$ is shorter than the deadline of the considered task $\task$ and the request time of the interfering task $\task'$ is earlier than the request time of the considered task $\task$. Therefore, the selecting criteria to identify an interfering task leads to $0$ or $1$ dependently on the tasks priority or absolute deadline. The function $\max(\heavisideD{\priority{\task'} - \priority{\task}}), \heavisideU{\absoluteDeadline{\task} - \absoluteDeadline{\task'}})=1$ if $\max(0,1$), $\max(0,1$) or $\max(1,1)$. This implements an $or$-operation between static and dynamic scheduling.
\end{proof}\qed

\subsection{Analysis preliminaries}
\label{scrivauto:184}

We need some additional assumptions to derive feasibility tests or a response time analysis based on the interfering request bound. In this section, we will introduce the concept of the remaining load to compute the backlog, which is not proceeded by a processor during a given time interval. Besides, we will give some useful definitions related to a generalized analysis framework.

\begin{theorem} [Remaining load] The remaining load of a job interfered with other jobs is the computational demand of a given time interval $[0,t)$ which cannot be computed by the processor during this time interval. Assume that the timing interval $\mskInterval{0}{t}=t$, then the remaining load $\rlSym^{\task' \geq \task} : \taskSet^2 \times \fieldR \to \fieldR$ is:     

\begin{equation} 
\eRlf{\task,\task'}{t}{\task' \geq \task}
= \max\limits_{0 \leq s \leq t}\{ 
\eRbf{\task, \task'}{\task' \geq \task}{t}{t}-
\eRbf{\task, \task'}{\task' \geq \task}{s}{t}-t \}
\end{equation}
\end{theorem}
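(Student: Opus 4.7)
The plan is to prove the identity by a classical backlog argument adapted to the unified request-bound formalism of this paper. I view the processor as a work-conserving server of unit capacity that consumes the aggregated interfering workload described by $\eRbf{\task,\task'}{\task' \geq \task}{\cdot}{t}$, and I interpret the remaining load as the gap between cumulative requested work and cumulative served work at time $t$.

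First, I would introduce an auxiliary cumulative work-done function $D(s)$ measuring how much of the relevant workload has been processed on $[0,s)$. Since the server processes at most one unit per unit time, $D(s_2)-D(s_1)\leq s_2-s_1$ for every $s_1\leq s_2$, and by the definition of the remaining load, $\eRlf{\task,\task'}{t}{\task' \geq \task}$ coincides with $\eRbf{\task,\task'}{\task' \geq \task}{t}{t}-D(t)$. Second, I would identify the start $s^{\star}$ of the current busy period as the latest instant $s\in[0,t]$ at which the backlog on the priority class of $\task$ vanishes (taking $s^{\star}=0$ if the server has been continuously busy on that class since time zero). On $[s^{\star},t)$ the server runs at full capacity on the considered workload, so $D(t)-D(s^{\star})=t-s^{\star}$, and substituting this into the identity above yields $\eRbf{\task,\task'}{\task' \geq \task}{t}{t}-\eRbf{\task,\task'}{\task' \geq \task}{s^{\star}}{t}-(t-s^{\star})$ as the candidate value for the maximum. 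For every other $s\in[0,t]$, combining $D(t)-D(s)\leq t-s$ with the nonnegativity of the backlog at $s$ produces the reverse inequality, so the supremum over $s$ is attained at $s^{\star}$ and equals the remaining load.

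The hard part will be threefold. First, the formula as displayed subtracts $t$ rather than $t-s$; I would need to reconcile this either by reading the $-s$ contribution as absorbed into one of the masking arguments of the unified request bound, or by restating the identity in the standard Reich form before proving it, and then showing the two forms agree. Second, under hierarchical or dynamic scheduling the processor may be executing jobs that do not interfere with $\task$ at moments when, from the perspective of $\task$'s priority class, the server is effectively idle; this forces me to define the busy period via the interference masks of Theorem~\ref{theo:interferenceRequestBound} and Corollary~\ref{cor:InterferenceRequestBoundDynamicScheduling} rather than via the raw processor-busy predicate, so that $D$ really tracks the workload appearing inside $\eRbfSym^{\task' \geq \task}$. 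Third, the argument must ultimately be carried out inside the paper's integral-of-Dirac-combs formalism, so the decisive technical step is translating the Reich-style reasoning into an algebraic manipulation of Heaviside-masked integrals over $\eventStreamSum{t'}$, in a way that remains invariant under the different priority criteria $\square$ and scheduler functions $\SchedulerSym$ introduced earlier.
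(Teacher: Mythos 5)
Your proposal is correct in substance but takes a genuinely different route from the paper. You reconstruct the identity as Reich's backlog formula: introduce a cumulative departure process $D$, bound its increments by the unit service rate, locate the start $s^{\star}$ of the busy period of the relevant priority class, show the server works at full rate on $[s^{\star},t)$ so the candidate value is attained there, and obtain the reverse inequality for all other $s$ from $D(t)-D(s)\le t-s$ and nonnegativity of the backlog. The paper instead argues algebraically in the style of the network-calculus backlog bound (citing Le Boudec): it assumes the remaining load is nonnegative and equals the request bound minus $t$, imports a sub-additivity property of the remaining load, rearranges to show the supremum is a lower bound, and then converts $\sup$ to $\max$ by a compactness argument on the bounded (hyper-period) analysis domain. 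Your derivation is the more self-contained and constructive of the two, and it surfaces two issues the paper glosses over: first, that the ``server'' seen by $\task$'s priority class must be defined through the interference masks rather than the raw processor-busy predicate (the paper never says this explicitly); second, the $-t$ versus $-(t-s)$ discrepancy. The classical formula subtracts the elapsed service $t-s$, whereas the paper's statement and its proof carry a bare $-t$ throughout without reconciling the two, which for $s>0$ strictly underestimates the backlog unless the maximizer is $s=0$. Your plan to restate the identity in standard Reich form and then show (or refute) agreement with the displayed formula is exactly the right move; on this point your blind reconstruction is more careful than the source, whose chain of inequalities also mixes signs on the remaining-load terms in a way that does not obviously follow from the stated assumptions.
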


\begin{proof}
If $\forall t \in \fieldR: \eRlf{\task,\task'}{t}{\task' \geq \task} \geq 0$ und $\eRlf{\task,\task'}{t}{\task' \geq \task} = \eRbf{\task, \task'}{\task' \geq \task}{t}{t}-t$, then
\begin{eqnarray} 
\eRbf{\task, \task'}{\task' \geq \task}{t}{t}- \eRlf{\task,\task'}{t}{\task' \geq \task}  - t \leq 0 \\
\eRbf{\task, \task'}{\task' \geq \task}{t-s}{t}- \eRlf{\task,\task'}{t-s}{\task' \geq \task}  - t \leq 0 \\
\eRbf{\task, \task'}{\task' \geq \task}{t}{t}-
\eRbf{\task, \task'}{\task' \geq \task}{s}{t}-
\eRlf{\task,\task'}{t}{\task' \geq \task}-
\eRlf{\task,\task'}{s}{\task' \geq \task}- t 
\leq 0
\end{eqnarray} 
Because $\eRlf{\task,\task'}{t}{\task' \geq \task}-\eRlf{\task,\task'}{s}{\task' \geq \task} \leq \eRlf{\task,\task'}{t-s}{\task' \geq \task}$ (\cite{LeBoudec:1998}, p. 7) the following in equation holds:
\begin{eqnarray}
\eRlf{\task,\task'}{t}{\task' \geq \task}
&\geq&
\eRbf{\task, \task'}{\task' \geq \task}{t}{t}-
\eRbf{\task, \task'}{\task' \geq \task}{s}{t}-t \\
\eRlf{\task,\task'}{t}{\task' \geq \task}
&=& \sup_{0 \leq s \leq t}\{ 
\eRbf{\task, \task'}{\task' \geq \task}{t}{t}-
\eRbf{\task, \task'}{\task' \geq \task}{s}{t}-t \}
\end{eqnarray} 

At this point, we can carefully review the properties of the unified event bound as given by theorem \ref{theo:UnifiedEventBoundFunction}: Note that the domain of this function is a compact space: Because we defined the limits of the integral as an open interval, the domain $t$ is compact. If we define any analysis in a bounded domain, then the unified event bound and therefore, the request and demand bound are compact. Bounding periodic task sets to their hyper-period bounds the timing interval as well. Therefore, the supremum of the function is equal to its maximum: $sup = max$. Then

\begin{equation} 
\eRlf{\task,\task'}{t}{\task' \geq \task}
= \max\limits_{0 \leq s \leq t}\{ 
\eRbf{\task, \task'}{\task' \geq \task}{t}{t}-
\eRbf{\task, \task'}{\task' \geq \task}{s}{t}-t \}
\end{equation} 
\end{proof}\qed

The proof builds on the leaky bucket algorithm. In the case no load is requested to a processor, the remaining load is equal to $0$. Only if service is requested, the processor executes it with the rate of $t$. The above proof is directly adapted from the network calculus as given by \cite{LeBoudec:1998}(p.10, f.). Because we defined a compact unified event-, request- and demand bound by using an integral and we model its limits by a Heaviside function, we can now combine the result of the network respective the real-time calculus with the work done in established scheduling theory. If supremum and infimum become maximum and minimum in all cases, we can further use effective maximization and minimization techniques supported by numerical mathematics and therefore it is easy to apply this theory to computer algebra systems or numerical math tools.

\begin{example} [Remaining load on the example task set]Consider again example \ref{Ex:exampleTaskSet}. Let us compute the remaining load of job 3 for a static and dynamic scheduler by the following CAS input:

\noindent
\includegraphics[width=1\textwidth,left]{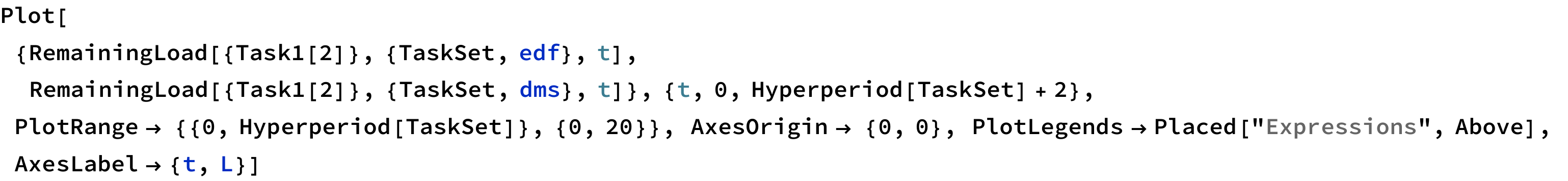}

The resulting plot is shown on the left hand in figure \ref{fig:remainingLoadPlot}. Additionally, we consider as an example, the remaining load of task $\task_3$, job 1. Figure \ref{fig:remainingLoadPlot} shows the result for static and dynamic scheduling.

\noindent
\includegraphics[width=1\textwidth,left]{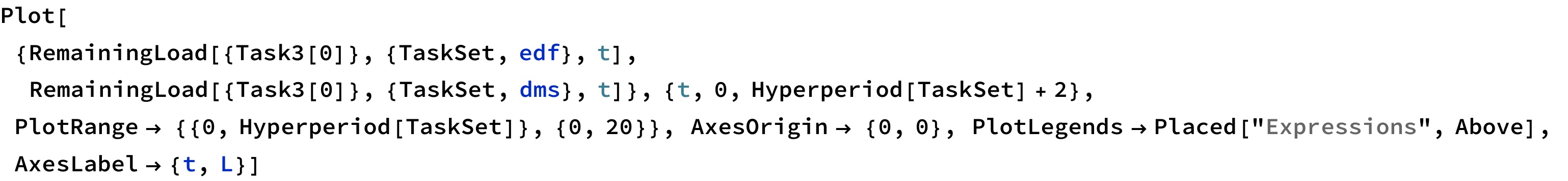}

\end{example}

\begin{figure}
\centering
\begin{minipage}{.5\textwidth}
  \centering
  \includegraphics[width=1\linewidth]{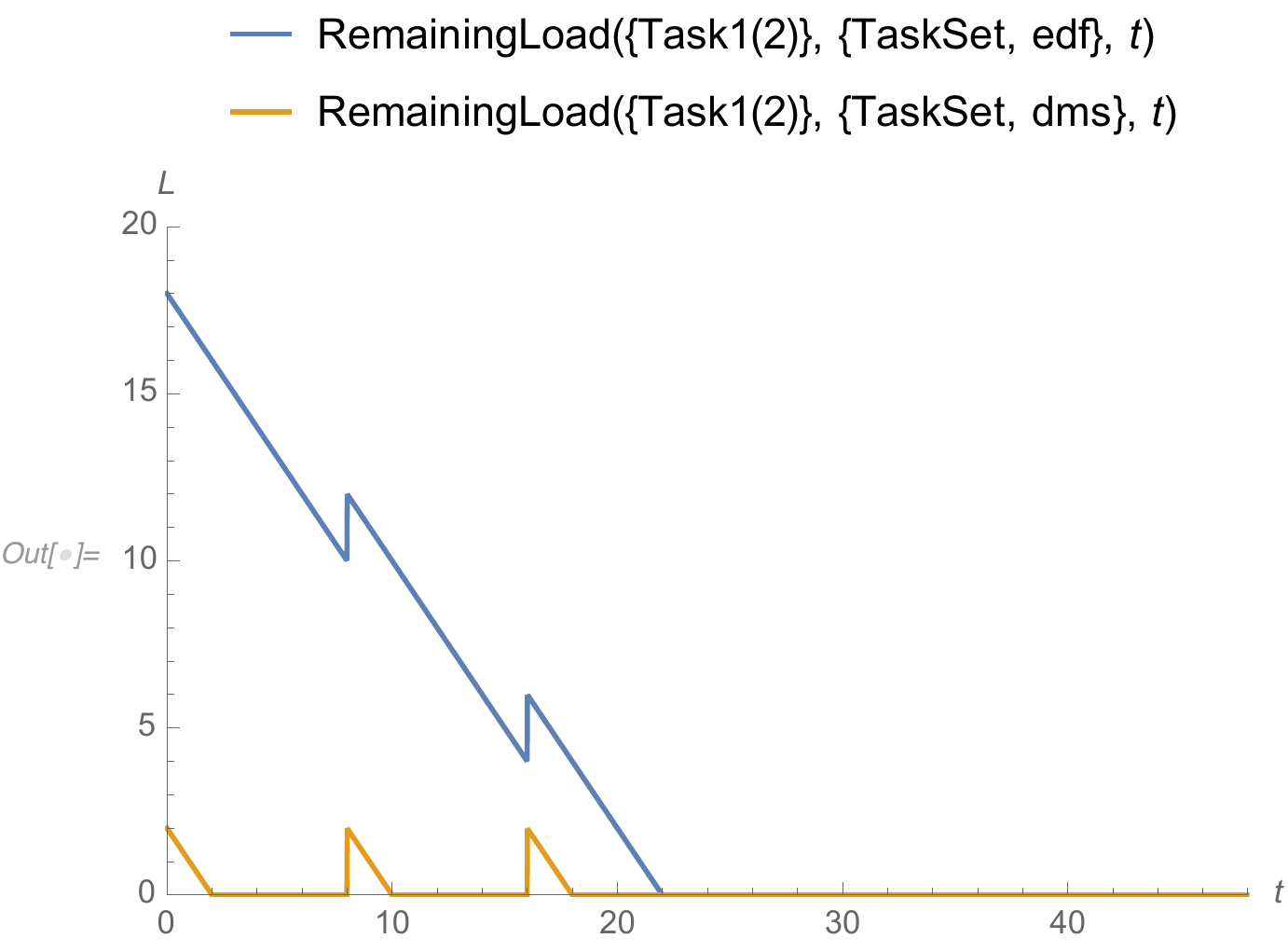}
\end{minipage}%
\begin{minipage}{.5\textwidth}
  \centering
  \includegraphics[width=1\linewidth]{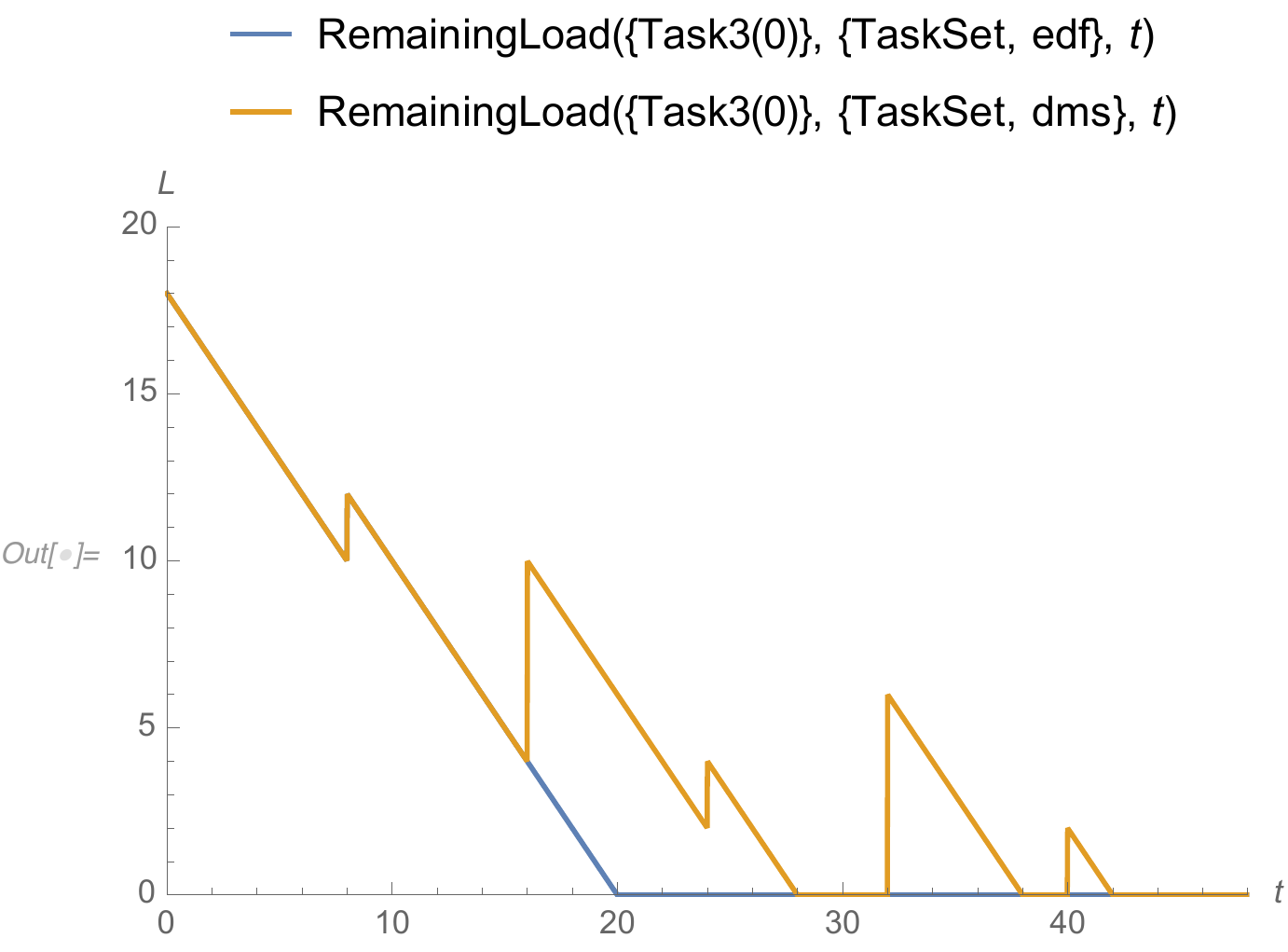}
\end{minipage}
   \caption{Remaining load of the third job $\task_{1,2}$ and the first job $\task_{3,0}$.}
  \label{fig:remainingLoadPlot}
\end{figure}

\begin{definition} [Average load] Given any time interval $[a,b]$\footnote{We assume only to count events in $[0,t)$ as given by our unified request bound. However, to define a utilization interval a closed interval is needed.}, the average load $\mskInterval{0}{t}=t$, the remaining load $\utilitationSym : \taskSet \times \fieldR \to \fieldR$ is the mean value of the requested load related to the interval. The average load of a task set is the sum of the average loads of each task.  
\begin{equation}
\averageLoad{\taskSet}{\mskInterval{a}{b}} = \summ{\task \in \taskSet}{\averageLoad{\task}{\mskInterval{a}{b}}} = \frac{1}{b-a}\cdot \left(\eRlf{\taskSet}{a}{} + \integrateL{a}{b}{\; \summ{\task \in \taskSet}{\left(\eventStreamSym^k_{\task} \cdot \wcet{\task} \right)}}{t'}\right)
\end{equation}
\end{definition}

\begin{lemma} [Average load by the unified request bound] The average load in any time interval $[a,b)$ of a task set on a processor is given by
\begin{equation}
\averageLoad{\taskSet}{\mskInterval{a}{b}}	 = \frac{\eRlf{\taskSet}{t}{} + \eRbf{\taskSet}{}{\mskInterval{a}{b}}{\mskInterval{a}{b}}}{b-a}
\end{equation}
and in the special case in the interval $[0,t)$
\begin{equation}
\averageLoad{\taskSet}{\mskInterval{0}{t}}	= \averageLoad{\taskSet}{t} = \frac{\eRbf{\taskSet}{}{t}{t}}{t}
\end{equation}
Now only tasks with the same or a higher priority should be considered, we use
\begin{equation}
\eAverageLoad{\task,\task'}{\task' \geq \task}{\mskInterval{a}{b}}	 = \frac{\eRlf{\task,\task'}{t}{\task' \geq \task} + \eRbf{\task,\task'}{\task' \geq \task}{\mskInterval{a}{b}}{\mskInterval{a}{b}}}{b-a}
\end{equation}
and
\begin{equation}
\eAverageLoad{\task,\task'}{\task' \geq \task}{t}	 = \frac{\eRbf{\task,\task'}{\task' \geq \task}{t}{t}}{t}
\end{equation}
for the interval $[0,t)$.
\end{lemma}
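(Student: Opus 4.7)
The plan is to unfold the definition of the average load and recognize that the integrand is, by construction, precisely the density of the unified request bound over the interval in question. Concretely, I would start from
\[
\averageLoad{\taskSet}{\mskInterval{a}{b}} = \frac{1}{b-a}\left(\eRlf{\taskSet}{a}{} + \integrateL{a}{b}{\summ{\task \in \taskSet}{\eventStreamSym^k_{\task}(t')\cdot\wcet{\task}}}{t'}\right)
\]
and argue that the integral on the right coincides with $\eRbf{\taskSet}{}{\mskInterval{a}{b}}{\mskInterval{a}{b}}$. The key observation is that by Theorem \ref{theo:UnifiedEventBoundFunction}, the unified event bound on the interval $\mskInterval{a}{b}$ admits the representation
\[
\ebf{\task}{b}{\mskInterval{a}{b}}_k = \integrateL{a}{b}{\eventStreamSym^k_{\task}(t')\cdot\heavisideD{b-t'}}{t'},
\]
and since $\heavisideD{b-t'}=1$ for all $t'\in[a,b)$, the Heaviside factor only removes the single point $t'=b$, which has Lebesgue measure zero under the Dirac density unless an event lies exactly at $b$. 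The short-form notation $\eRbf{\taskSet}{}{\mskInterval{a}{b}}{\mskInterval{a}{b}}$ introduced for the generalized request bound then gives precisely the weighted sum over tasks.

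Second, I would handle the specialization to $[0,t)$ by noting that at $t=0$ no job has yet been requested, so the backlog vanishes: $\eRlf{\taskSet}{0}{}=0$. Substituting this into the first formula collapses it to
\[
\averageLoad{\taskSet}{t} = \frac{\eRbf{\taskSet}{}{t}{t}}{t}.
\]
Third, the priority-restricted variants require no new work: the restriction to higher-or-equal priority tasks is already encoded inside the interfering request bound by the Heaviside relation factor of Theorem \ref{theo:interferenceRequestBound}. Applying the same density-to-integral identification with the interference mask in place yields the two remaining equations.

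The main obstacle, which is more of a bookkeeping issue than a technical one, is the apparent notational mismatch between $\eRlf{\taskSet}{a}{}$ in the definition of average load and $\eRlf{\taskSet}{t}{}$ in the lemma statement. I would treat both as denoting the backlog inherited at the left endpoint of the considered interval, justified by the leaky-bucket interpretation used in the proof of the remaining load theorem: the remaining load at the start of $\mskInterval{a}{b}$ is whatever work the processor could not finish by that time. With this convention the two expressions are the same quantity, and the rewriting is simply the substitution of the unified request bound for the Dirac-sum integral. No further calculus is required, since the argument reduces to reading off Theorem \ref{theo:UnifiedEventBoundFunction} and the definition of the generalized (possibly interfering) request bound.
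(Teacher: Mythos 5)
Your proposal is correct and follows essentially the same route as the paper: unfold the definition of the average load, identify the Dirac-density integral $\integrateL{a}{b}{\sum_{\task}\eventStreamSym^k_{\task}\cdot\wcet{\task}}{t'}$ with the unified request bound over $\mskInterval{a}{b}$, drop the backlog term for the interval starting at $0$, and let the interference mask carry the priority-restricted cases for free. Your explicit handling of the $\eRlf{\taskSet}{a}{}$ versus $\eRlf{\taskSet}{t}{}$ mismatch and of the vanishing backlog at $t=0$ is in fact more careful than the paper's own proof, which silently writes $a$ in place of $t$ and simply omits the remaining-load term in the second computation.
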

\begin{proof} The average load of the requested jobs in any interval related to the duration of this interval. In some cases, there is some load left from previous intervals. That remains in an additional load:
\begin{eqnarray*}
\averageLoad{\taskSet}{\mskInterval{a}{b}} 
&=&\frac{1}{\mskInterval{a}{b}} \cdot \left( \eRlf{\taskSet}{a}{} + \integrate{\mskInterval{a}{b}}{\summ{\task \in \taskSet}{\eventStreamSym^k_{\task} \cdot \wcet{\task} }}{t}\right)\\
&=&\frac{1}{b-a} \cdot \left( \eRlf{\taskSet}{a}{} + \integrateL{a}{b}{\summ{\task \in \taskSet}{\eventStreamSym^k_{\task} \cdot \wcet{\task} }}{t}\right)\\
&=&\frac{\eRlf{\taskSet}{a}{} + \eRbf{\taskSet}{}{\mskInterval{a}{b}}{\mskInterval{a}{b}}}{b-a}
\end{eqnarray*}
Note that the computation of the utilization requires a summation during $[a,b)$. Therefore, the request bound is given by $\eRbf{\taskSet}{}{\mskInterval{a}{b}}{\mskInterval{a}{b}}$. Let us now consider the utilization in the interval $[0,t)$:
\begin{eqnarray*}
\averageLoad{\taskSet}{t} 
&=& \summ{\task \in \taskSet}{\averageLoad{\task}{t}} \\
&=& \summ{\task \in \taskSet}{\; \left( \frac{1}{t}\cdot \integrateL{0}{t}{\eventStreamSym^k_{\task} \cdot \wcet{\task}}{t'}\right)} \\
&=& \frac{1}{t}\cdot \integrateL{0}{t}{\; \summ{\task \in \taskSet}{\left(\eventStreamSym^k_{\task} \cdot \wcet{\task}\right)}}{t'} \\
&=& \frac{\eRbf{\taskSet}{}{t}{t}}{t}
\end{eqnarray*}
\end{proof}\qed

\begin{example} [Average load of the example task set by different scheduling strategies] The average load of job 1 and job 2 of task $\task{3}$ are given in figure \ref{fig:InteferingRequestBound}. Note that this diagram clearly shows the busy window of both jobs. The following CAS input produces the plots:

\noindent
\includegraphics[width=1.0\textwidth]{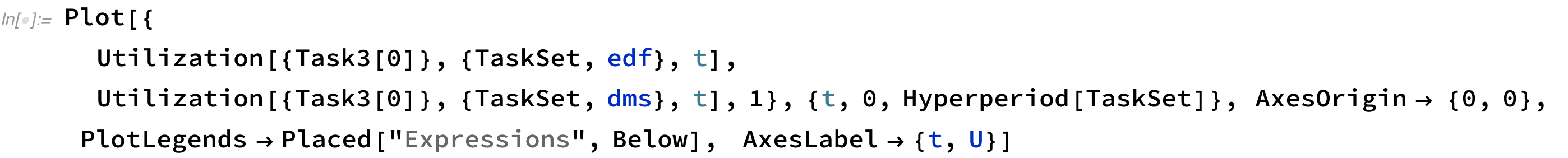}
\noindent
\includegraphics[width=1.0\textwidth]{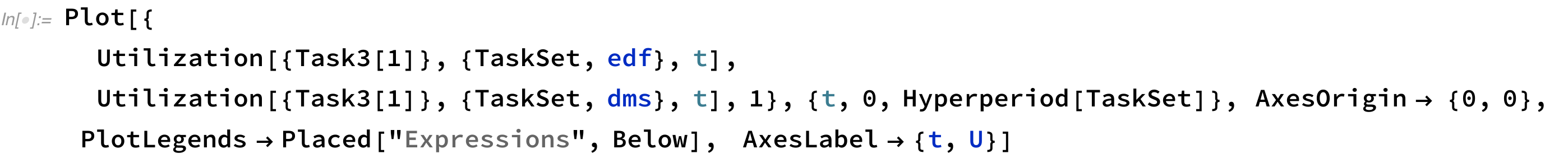}

\end{example}

\begin{figure}
\centering
\begin{minipage}{.5\textwidth}
  \centering
  \includegraphics[width=1\linewidth]{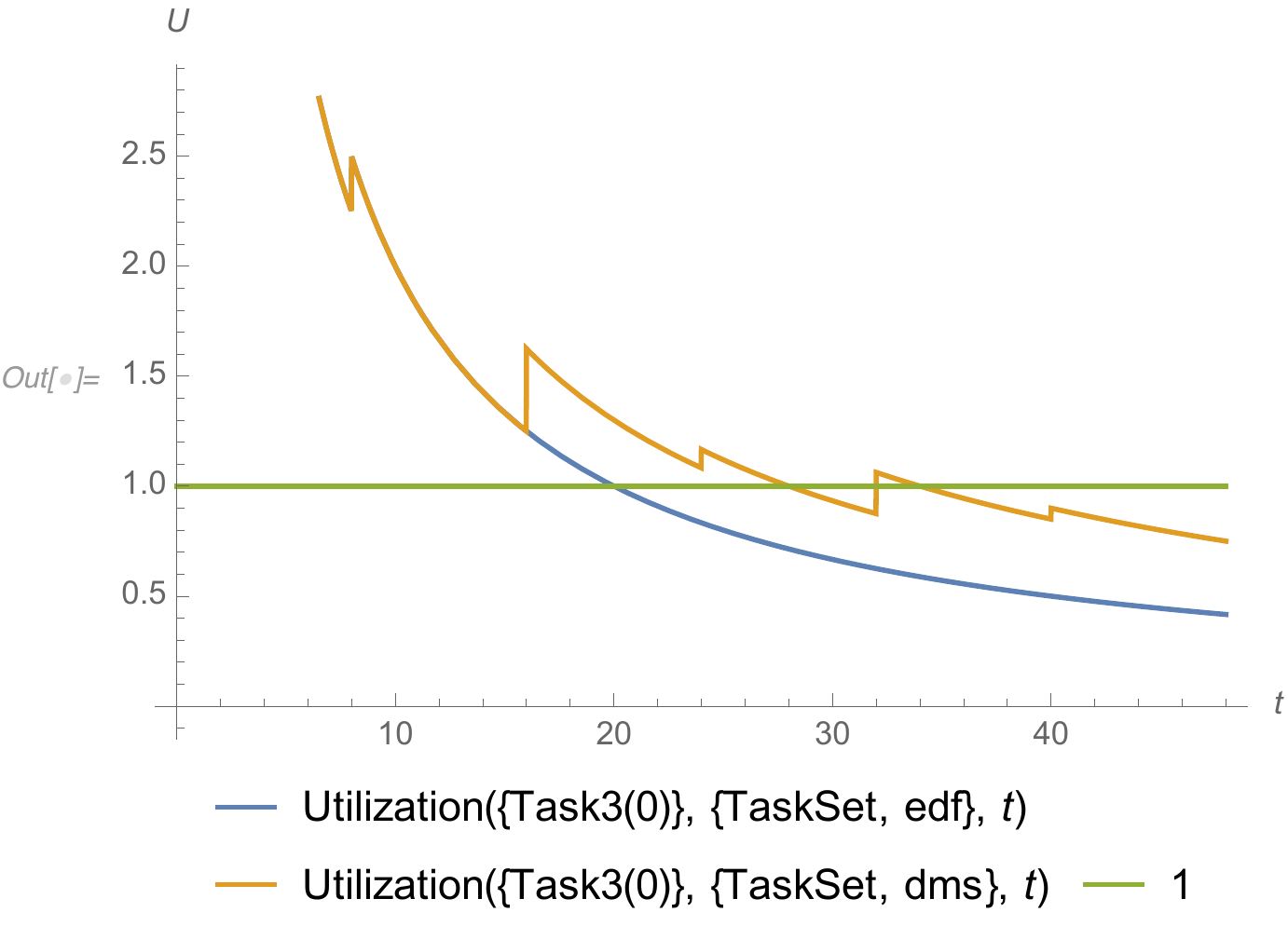}
\end{minipage}%
\begin{minipage}{.5\textwidth}
  \centering
  \includegraphics[width=1\linewidth]{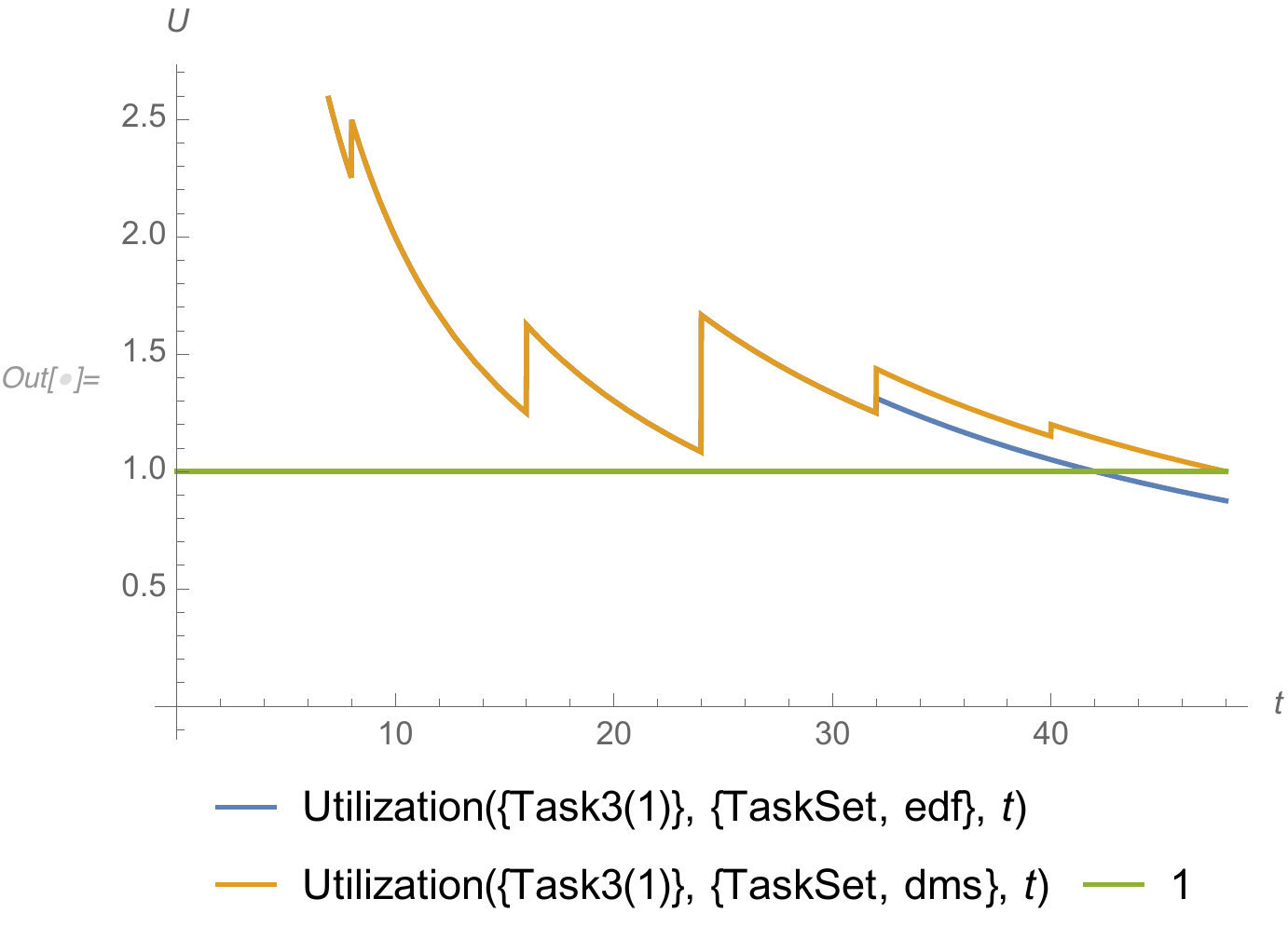}
\end{minipage}
   \caption{Average load of first job $\task_{3,0}$ and the second job $\task_{3,1}$. It is easily seen when the utilization condition fulfills.}
  \label{fig:averageLoad}
\end{figure}

\subsection{Unified feasibility analysis}
\label{scrivauto:194}

Feasibility tests build on utilization bounds. Therefore we have to check whether the utilization of a task set is always smaller than $1$ or $100\%$. A utilization bound given for any time interval based on the interference request bound, and the average load allows feasibility tests for static, dynamic, and hierarchical scheduling.

\begin{theorem} [Feasibility analysis for static and dynamic scheduling] Assume a task set and a hierarchical scheduler. If a task has a higher priority than another task, it executes first, and if two tasks have the same priority, they are scheduling under earliest deadline first. The feasibility of a given independent task set executed by one computing resource with a hierarchical static, and a dynamic scheduler can then be guaranteed, if and only if  
\begin{equation}
\forall t \in \hyperPeriod{\taskSet}: \hspace{2mm} \pmskRbf{\priority{\task'} \geq \priority{\task}}{\task, \task'}{t}{\hyperPeriod{\taskSet}} + \pmskRbf{\priority{\task'} \geq \priority{\task}}{\task, \task'}{t}{\hyperPeriod{\taskSet}}	 \leq t 
\end{equation}
\end{theorem}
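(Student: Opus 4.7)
The plan is to prove the two directions separately, leveraging the unified interference request bound developed in Theorem~\ref{theo:interferenceRequestBoundHierarchical} and the remaining load / average load machinery just established.

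\textbf{Necessity.} I would argue the contrapositive. Suppose there exists $t^{\ast}\in\hyperPeriodSym(\taskSet)$ such that the summed interference request bound exceeds $t^{\ast}$. Then by the definition of the remaining load, $\eRlf{\task,\task'}{t^{\ast}}{\task' \geq \task} > 0$, i.e.\ the processor carries strictly positive backlog at time $t^{\ast}$ relative only to jobs of priority at least that of some task~$\task$. Since the scheduler is work-conserving and serves exactly those higher-or-equal-priority jobs before lower ones, this backlog cannot be absorbed before the next release of $\task$, so at least one deadline of $\task$ (or of an interfering job) must be missed. Hence the task set is infeasible, contradicting feasibility.

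\textbf{Sufficiency.} Assume the condition holds for every $t\in\hyperPeriodSym(\taskSet)$. First I would observe that under the hierarchical scheduler of Theorem~\ref{theo:interferenceRequestBoundHierarchical}, the total work the scheduler must execute in $[0,t)$ in order to meet every deadline that falls inside $[0,t)$ is exactly captured by the interference request bound, since by construction the Heaviside mask selects precisely the jobs that must be completed before the considered job's deadline. The hypothesis therefore reads $\eRbf{\task,\task'}{\task' \geq \task}{t}{\hyperPeriodSym(\taskSet)}\le t$, which via the average load lemma is equivalent to $\eAverageLoad{\task,\task'}{\task' \geq \task}{t}\le 1$. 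The remaining load theorem then gives $\eRlf{\task,\task'}{t}{\task' \geq \task}=0$ for all $t$ in the hyper-period, so every job completes before its deadline.

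\textbf{Extending from the hyper-period to all time.} The step I expect to be the main obstacle is justifying that checking only $t\in\hyperPeriodSym(\taskSet)$ is sufficient, rather than all $t\in\fieldRpz$. My plan is to use the periodicity of the event densities $\eventStream{k}{t}{\task}$: because every task's Dirac comb is periodic with period dividing $\hyperPeriodSym(\taskSet)$, the interference request bound satisfies a shift relation $\eRbf{\task,\task'}{\task' \geq \task}{t+\hyperPeriodSym(\taskSet)}{\cdot}= \eRbf{\task,\task'}{\task' \geq \task}{t}{\cdot} + \eRbf{\task,\task'}{\task' \geq \task}{\hyperPeriodSym(\taskSet)}{\cdot}$. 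Combined with the boundary case $\eRbf{\task,\task'}{\task' \geq \task}{\hyperPeriodSym(\taskSet)}{\hyperPeriodSym(\taskSet)}\le \hyperPeriodSym(\taskSet)$, an induction over hyper-periods shows that a violation at some $t>\hyperPeriodSym(\taskSet)$ would imply a violation at some $t'\in\hyperPeriodSym(\taskSet)$, closing the argument.

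The subtle point I want to be careful about is the treatment of the equal-priority case: the $\kroneckerDelta{}{}$ term and the request-time Heaviside introduced in Corollary~\ref{cor:InterferenceRequestBoundDynamicScheduling} must break ties consistently with the assumed EDF tiebreaking, otherwise the sufficiency direction loses one unit of work at the boundary. I would verify this by checking that the same tie-breaking convention is used both in the interference request bound and in the schedule realizing the worst case.
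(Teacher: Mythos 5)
Your proposal is structured as a genuine two‑direction schedulability proof (necessity via a backlog argument, sufficiency via the average‑load/remaining‑load machinery, plus an induction over hyper‑periods), whereas the paper's proof is really a derivation: it takes the axiom ``feasible iff the utilization $u_{\task}(t)\le 1$ for all $t$ in the hyper‑period'' as the starting point, sets $\rlSym(0)=0$, splits the interference mask into the equal‑priority part (Kronecker delta) and the strictly‑higher‑priority part, and shows that the two resulting terms are exactly the known processor‑demand tests for dynamic and static scheduling. So the two routes are different, and yours is the more ambitious one --- but as sketched it has a genuine gap.

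The missing ingredient is the deadline shift. In the paper's case analysis the equal‑priority term is evaluated at $t-\relativeDeadline{\task'}$, i.e.\ it is a \emph{demand} bound that counts only those jobs whose absolute deadlines fall inside $[0,t)$; that shift is what makes ``demand $\le t$'' equivalent to ``no deadline miss''. Your argument never invokes it, and both directions fail without it. For necessity: a strictly positive remaining load at $t^{\ast}$ does \emph{not} imply a deadline miss --- the backlog may consist of work whose deadlines lie well beyond $t^{\ast}$ and which is absorbed later; only if the unfinished work has a deadline at or before $t^{\ast}$ (which is exactly what the $t-\relativeDeadline{\task'}$ shift guarantees) does infeasibility follow. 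For sufficiency: ``remaining load $=0$ for all $t$'' says the processor eventually catches up, not that each job finishes \emph{before its deadline}; a job can complete late while the backlog still returns to zero afterwards. Your parenthetical claim that the Heaviside mask ``selects precisely the jobs that must be completed before the considered job's deadline'' is not what the interference mask does --- it selects by priority/deadline \emph{ordering} between tasks, not by whether a deadline falls inside $[0,t)$. On the positive side, your concern about restricting the quantifier to the hyper‑period is well placed: the paper simply asserts this restriction without proof, and your proposed shift relation (valid only once one knows the backlog at the end of each hyper‑period is zero, which needs the $U\le 1$ boundary case you cite) is a reasonable way to close that hole; likewise your worry about consistent tie‑breaking between the Kronecker‑delta term and the realized schedule is legitimate and is handled in the paper only implicitly through Corollary~\ref{cor:InterferenceRequestBoundDynamicScheduling}.
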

\begin{proof} The feasibility test can be derived from the utilization bound\footnote{To keep the proof simple we do not consider the request times.}
\begin{equation}
U_{\taskSet}(\mskInterval{a}{b})	 = \frac{\eRlf{\taskSet}{t}{} + \eRbf{\taskSet}{}{\mskInterval{a}{b}}{\mskInterval{a}{b}}}{t_b-t_a}
\end{equation}
and in the special case in the interval $[0,t)$
\begin{equation}
U_{\taskSet}(\mskInterval{0}{t})	=U_{\taskSet}(t)= \frac{\eRbf{\taskSet}{}{t}{t}}{t}
\end{equation}
A task set is feasible if $\forall t \in [0,\hyperPeriod{\taskSet}]$ for any task the utilization $u_{\task}(t) \leq 1$. 
If $t_a=t_0=0$ then $\rlSym(0) = 0$ and therefore
\begin{equation}
\forall t \in \hyperPeriod{\taskSet}: \hspace{0.5cm} \frac{\pmskRbf{\priority{\task'} \geq \priority{\task}}{\task, \taskSet}{t}{\hyperPeriod{\taskSet}}}{t} \leq 1
\end{equation} 
Now we separate all higher priority tasks from tasks with the same priority and multiply by $t$:
\begin{equation}
\forall t \in \hyperPeriod{\taskSet}: \hspace{0.5cm} \pmskRbf{\priority{\task'} = \priority{\task}}{\task, \taskSet}{t}{\hyperPeriod{\taskSet}} + \pmskRbf{\priority{\task'} > \priority{\task}}{\task, \taskSet}{t}{t}  \leq t
\end{equation} 
or 
\begin{eqnarray}
\forall t \in \hyperPeriod{\taskSet}, \forall \task \in \taskSet: \hspace{0.5cm}
\summ{\task' \in \taskSet} \rbf{\task'}
                               {t-\relativeDeadline{\task'}}
                               {\hyperPeriod{\taskSet}} 
                               \cdot \kroneckerDelta{\priority{\task'}}{\priority{\task}} \\ + 
\summ{\task' \in \taskSet} \rbf{\task'}
                               {t}
                               {t} 
                               \cdot \heavisideD{\priority{\task'}-\priority{\task}} & \leq & t
\end{eqnarray}
Note that we consider tasks with the same priority and tasks with higher priorities in independent terms because we want to derive a feasibility test for static and dynamic scheduling as well. Therefore we have to consider two cases:
\begin{itemize}
\item[A] If $\priority{\task'}=\priority{\task}$ then $\kroneckerDelta{\priority{\task'}}{\priority{\task}}=1$ and $\heavisideD{\priority{\task'}-\priority{\task}}=0$. The feasibility test for dynamic scheduling is then given by 
\begin{equation}
\forall t \in \hyperPeriod{\taskSet}, \forall \task \in \taskSet: \hspace{0.5cm}
\summ{\task' \in \taskSet} \ebf{\task'}
                               {t-\relativeDeadline{\task'}}
                               {\hyperPeriod{\taskSet}} 
                               \cdot \wcet{\task'} =
\summ{\task' \in \taskSet} \rbf{\task'}
                               {t-\relativeDeadline{\task'}}
                               {\hyperPeriod{\taskSet}} 
                               \leq  t 
\end{equation}
In this case $\task' = \task$ and therefore
\begin{equation}
\forall t \in \hyperPeriod{\taskSet}: \summ{\task \in \taskSet} \ebf{\task}
                               {t-\relativeDeadline{\task}}
                               {\hyperPeriod{\taskSet}}
                               \cdot \wcet{\task'} =
\summ{\task \in \taskSet} \rbf{\task}
                               {t-\relativeDeadline{\task}}
                               {\hyperPeriod{\taskSet}} 
                               \leq t 
\end{equation} 
which is equal to the processor demand test or problem \ref{postulatedDemandBoundTest}. 
\item[B] Consider $\priority{\task'} \neq \priority{\task}$: Now $\kroneckerDelta{\priority{\task'}}{\priority{\task}}=0$ only for the considered task $\task$ and in all other cases $\kroneckerDelta{\priority{\task'}}{\priority{\task}}=0$ and $\heavisideD{\priority{\task'}-\priority{\task}}=1$ if a task $\task'$ has a higher priority than the considered task $\task$. Then we get  
\begin{equation}
\forall t \in \hyperPeriod{\taskSet}, \forall \task \in \taskSet:\hspace{0.5cm} 
                                  \ebf{\task}
                                  {t-\relativeDeadline{\task}}
                                  {\hyperPeriod{\taskSet}} 
                                  \cdot \wcet{\task} + 
\summ{\task' \in \taskSetHigher}  \ebf{\task'}
                                  {t}
                                  {t} 
                                  \cdot \wcet{\task}
                                  \leq t 
\end{equation}
or
\begin{equation}
\forall t \in \hyperPeriod{\taskSet}, \forall \task \in \taskSet:\hspace{0.5cm} 
                                  \rbf{\task}
                                  {t-\relativeDeadline{\task}}
                                  {\hyperPeriod{\taskSet}}  + 
\summ{\task' \in \taskSetHigher}  \rbf{\task'}
                                  {t}
                                  {t} 
                                  \leq t 
\end{equation}
where $\task$ is the considered task and $\task'$ are all higher priority tasks. This result is equivalent to
\begin{equation}
\forall t \in \hyperPeriod{\taskSet}, \forall \task \in \taskSet:\hspace{0.5cm} 
                                  \rbf{\task}
                                  {t-\relativeDeadline{\task}}
                                  {\hyperPeriod{\taskSet}} \leq t - 
\summ{\task' \in \taskSetHigher}  \rbf{\task'}
                                  {t}
                                  {t} 
\end{equation}
which is equal to the processor demand test for static scheduling originally given by \cite{Baruah:2003}.  
\end{itemize}
\end{proof}\qed

\subsection{Unified response time analysis}
\label{scrivauto:197}

In this section, we will derive a response time analysis based on the average load and the unified event bound to simplify the mathematical framework as given by related work. As a result of the section, we will see that the unified event bound solves problem \ref{prob:postulatedResponseTimeAnalysis}.

\begin{theorem} [Unified response time analysis] If a job is scheduled by any scheduling algorithm assuming priorities given by any relation between two different variables $\blacksquare \geq \square$, the request time of the job  is $t_{\task,\event}$. Let  the response time $\responseTime{\task,\event} = \finishingTime - \requestTime{\task,\event}$ be the difference of the finishing time $\finishingTime$ and the request time. The response time $r_{\task,\event}$ is bounded by  
\label{theo:ResponseTimeAnalysisWithUnifiedEventBound}  
\begin{equation}
\forall \task_\event \in \taskSet: \hspace{2mm}
\pmskRlf{\blacksquare \geq \square}{\task, \task'}{\requestTime{\task,\event}} + \pmskRbf{\blacksquare \geq \square}{\task, \task'}{\responseTime{\task,\event}}{\responseTime{\task,\event}}-\responseTime{\task,\event}=0
\end{equation}    
\end{theorem}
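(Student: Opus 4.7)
The plan is to interpret the equation as a work-conservation identity at the instant the considered job completes. By definition of the response time, the processor runs continuously on jobs satisfying the priority relation $\blacksquare \geq \square$ throughout the interval $[\requestTime{\task,\event}, \finishingTime)$; otherwise $\task_\event$ would already have finished earlier. Since the processor delivers service at unit rate, the cumulative service performed on this priority class in that window equals exactly $\responseTime{\task,\event}$, which accounts for the $-\responseTime{\task,\event}$ term in the claimed equation.

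Next, I would match this service against the demand that must be cleared in the same window. By Theorem~\ref{theo:interferenceRequestBound}, the quantity $\pmskRbf{\blacksquare \geq \square}{\task, \task'}{\responseTime{\task,\event}}{\responseTime{\task,\event}}$ sums the execution times of every interfering job released in the window $[0,\responseTime{\task,\event})$ relative to $\requestTime{\task,\event}$; because the relation $\task' \geq \task$ covers the tie $\task'=\task$ through the Kronecker-delta Heaviside masks, the considered job $\task_\event$ itself contributes to this sum, as do all strictly higher-priority jobs. Demand inherited from the past is captured by $\pmskRlf{\blacksquare \geq \square}{\task, \task'}{\requestTime{\task,\event}}$, which by the remaining-load theorem is precisely the backlog of interfering work already pending at the request instant.

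Equating demand and service then yields the claim: since $\task_\event$ has just finished at $\finishingTime$, no interfering backlog for the class $\blacksquare \geq \square$ remains, so the inherited backlog plus the newly released interfering demand must have been cleared exactly by the $\responseTime{\task,\event}$ units of available processor time. This gives
\begin{equation*}
\pmskRlf{\blacksquare \geq \square}{\task, \task'}{\requestTime{\task,\event}} + \pmskRbf{\blacksquare \geq \square}{\task, \task'}{\responseTime{\task,\event}}{\responseTime{\task,\event}} = \responseTime{\task,\event},
\end{equation*}
which rearranges directly to the root-finding form in the statement and recovers the fixed-point iteration of Problem~\ref{prob:postulatedResponseTimeAnalysis} when one iterates on $\responseTime{\task,\event}$.

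The hard part will be justifying rigorously that the masks constructed from the Heaviside relations pick out exactly the jobs the underlying scheduler serves before $\task_\event$, neither more nor fewer, across arbitrary schedulers parametrised by $\blacksquare \geq \square$. I would dispatch this by reducing to the case analyses already carried out for static (Corollary~\ref{theo:interferenceRequestBoundStatic}), dynamic (Corollary~\ref{cor:InterferenceRequestBoundDynamicScheduling}), and hierarchical (Theorem~\ref{theo:interferenceRequestBoundHierarchical}) scheduling, and by invoking the sub-additive bound in the remaining-load theorem to ensure the entire pre-request history is faithfully folded into the single scalar $\pmskRlf{\blacksquare \geq \square}{\task, \task'}{\requestTime{\task,\event}}$ without double-counting against the subsequent request-bound integral.
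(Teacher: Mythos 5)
Your proof is correct and follows essentially the same route as the paper's: the paper starts from the average-load lemma, restricts it to the window of length $r_{\tau,e}$ between request and finishing time, and observes that the busy window ends exactly when the average load of the interfering class equals one --- which, after multiplying through by $r_{\tau,e}$, is precisely your demand-equals-service identity with the remaining load carrying the pre-request backlog. The only cosmetic difference is that you balance cumulative demand against unit-rate service directly rather than phrasing the same balance as a utilization condition.
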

\begin{proof} Again, we start with the average load. The average load of any given time interval $[t_a,t_b)$ is given by:
\begin{equation}
\averageLoad{\taskSet}{\mskInterval{a}{b}} = 
\frac{\eRlf{\taskSet}{a}{} + \eRbf{\taskSet}{}{\mskInterval{a}{b}}{\mskInterval{a}{b}}}{\mskInterval{a}{b}} \leq 1
\end{equation}
Now, we consider the interval $\mskInterval{a}{b} = \finishingTime - \requestTime{\task,\event} = \responseTime{\task,\event}$ for each job of all tasks, therefore
\begin{equation}
\forall \task_\event \in \taskSet: \hspace{2mm} \frac{\eRlf{\taskSet}{\requestTime{\task,\event}}{} + \eRbf{\taskSet}{}{\responseTime{\task,\event}}{\responseTime{\task,\event}}}{\responseTime{\task,\event}} \leq 1
\end{equation}
We only have to consider all tasks of the same or a higher priority than the considered task's priority:
\begin{equation}
\forall \task_\event \in \taskSet: \frac{\pmskRlf{\blacksquare \geq \square}{\task, \task'}{\requestTime{\task,\event}} + \pmskRbf{\blacksquare \geq \square}{\task, \task'}{\responseTime{\task,\event}}{\responseTime{\task,\event} }}{\responseTime{\task,\event}} \leq 1
\end{equation}
During a busy period, the average load is positive and more significant than $100\%$ because the requested execution demand is higher than the elapsed processor time. The average load will be smaller than $100\%$ if the requested demand in a time interval is smaller than the processing time interval. In this case, the processor is idle. Therefore, the end of the busy period is exact if the average load is equal to $100\%$:
\begin{eqnarray*}
\forall \task_\event \in \taskSet&:& \hspace{2mm} 
   \frac{\pmskRlf{\blacksquare \geq \square}{\task, \task'}{\requestTime{\task,\event}} + \pmskRbf{\blacksquare \geq \square}{\task, \task'}{\responseTime{\task,\event}}{\responseTime{\task,\event} }}{\responseTime{\task,\event}} = 1 \\
 \forall \task_\event \in \taskSet &:& \hspace{2mm}
   \pmskRlf{\blacksquare \geq \square}{\task, \task'}{\requestTime{\task,\event}} + \pmskRbf{\blacksquare \geq \square}{\task, \task'}{\responseTime{\task,\event}}{\responseTime{\task,\event} } = \responseTime{\task,\event} \\
 \forall \task_\event \in \taskSet &:& \hspace{2mm}
   \pmskRlf{\blacksquare \geq \square}{\task, \task'}{\requestTime{\task,\event}} + \pmskRbf{\blacksquare \geq \square}{\task, \task'}{\responseTime{\task,\event}}{\responseTime{\task,\event} } - \responseTime{\task,\event} = 0
\end{eqnarray*}
As we want to compute the response time $\responseTime{\task,\event}$, we also except task sets with an average load equal to $100\%$, which means after a job has finished the next higher priority job starts immediately, and the processor does not idle. As a result, we have to end the summation of task requests exactly at $\responseTime{\task,\event}$ and we get
\begin{equation}
\forall \task_\event \in \taskSet: \hspace{2mm} \pmskRlf{\blacksquare \geq \square}{\task, \task'}{\requestTime{\task,\event}} + \pmskRbf{\blacksquare \geq \square}{\task, \task'}{\responseTime{\task,\event}}{\responseTime{\task,\event}}-\requestTime{\task,\event}=0
\end{equation}
Note that in the worst-case in static scheduling we only have to consider the first job of each task. The remaining load then is $\pmskRlf{\prioritySym' \geq \prioritySym}{\task, \task'}{0}=0$ and the worst-case response time becomes 
\begin{equation}
\forall \task \in \taskSet: \hspace{2mm}: \pmskRbf{\blacksquare \geq \square}{\task, \task'}{\maxResponseTime{\task,\event}}{\maxResponseTime{\task,\event}}-\maxResponseTime{\task,\event}=0
\end{equation} 
\end{proof}\qed

Theorem \ref{theo:ResponseTimeAnalysisWithUnifiedEventBound} describes an unified abstract form of the busy window approach. In real-time scheduling theory, static and dynamic scheduling are major scheduling algorithms. Therefore, the unified approach has to be adapted to static as well as to dynamic scheduling \footnote{Including other scheduling schemes should be future work.}:

\begin{corollary} [Static response time analysis] Assume a given task set with static priorities. The abstract given relation $\blacksquare \geq \square$ is then replaced by $\priority{\task'} \geq \priority{\task}$ formulating the static priority scheme. The response time then becomes 
\label{Cor:staticResponseTimeAnalysis}
\begin{equation}
\forall \task_\event \in \taskSet: \hspace{2mm}
\pmskRlf{\priority{\task'} \geq \priority{\task}}{\task, \task'}{t_{\task,\event}} + \pmskRbf{\priority{\task'} \geq \priority{\task}}{\task, \task'}{\responseTime{\task,\event}}{\responseTime{\task,\event}}-\responseTime{\task,\event}=0
\end{equation}  
\end{corollary}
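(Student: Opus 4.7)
The plan is to obtain the static version by specializing Theorem \ref{theo:ResponseTimeAnalysisWithUnifiedEventBound} to the relation that encodes static priorities, so no new analytic work is required. First I would take the general identity
\begin{equation*}
\pmskRlf{\blacksquare \geq \square}{\task, \task'}{\requestTime{\task,\event}} + \pmskRbf{\blacksquare \geq \square}{\task, \task'}{\responseTime{\task,\event}}{\responseTime{\task,\event}} - \responseTime{\task,\event} = 0
\end{equation*}
and substitute the concrete priority criterion $\blacksquare := \priority{\task'}$ and $\square := \priority{\task}$. To make that substitution legitimate I would invoke Corollary \ref{theo:interferenceRequestBoundStatic}, which shows that the Heaviside mask associated with the relation $\priority{\task'} \geq \priority{\task}$ is exactly the indicator selecting jobs of $\task'$ that statically interfere with $\task$ (strictly higher priority, or equal priority with earlier request time). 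Hence the interfering request bound $\pmskRbf{\priority{\task'} \geq \priority{\task}}{\task, \task'}{\cdot}{\cdot}$ and the corresponding remaining load $\pmskRlf{\priority{\task'} \geq \priority{\task}}{\task, \task'}{\cdot}$ are admissible instances of the abstract quantities appearing in the theorem.

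Second, I would observe that the proof of Theorem \ref{theo:ResponseTimeAnalysisWithUnifiedEventBound} uses the abstract relation only as a selector for the interfering task set, i.e.\ purely through the Heaviside mask that defines $\rbfSym^{\blacksquare \geq \square}$ and $\rlSym^{\blacksquare \geq \square}$. All downstream steps, namely the expression of the average load as a quotient of residual plus requested demand by the interval length, the identification of the end of the busy period with the instant at which the average load first equals $1$, and the resulting root equation, are insensitive to which admissible relation is chosen. Specializing to $\priority{\task'} \geq \priority{\task}$ therefore yields the claimed equation verbatim.

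Third, I would note that the response-time interpretation carries over cleanly: for each job $\task_\event$, the backlog $\pmskRlf{\priority{\task'} \geq \priority{\task}}{\task, \task'}{\requestTime{\task,\event}}$ correctly accounts for the remaining higher- or equal-priority work accumulated up to the release $\requestTime{\task,\event}$, and the request bound $\pmskRbf{\priority{\task'} \geq \priority{\task}}{\task, \task'}{\responseTime{\task,\event}}{\responseTime{\task,\event}}$ counts precisely the interference issued during the busy window of length $\responseTime{\task,\event}$; by theorem \ref{theo:ResponseTimeAnalysisWithUnifiedEventBound} their sum equals $\responseTime{\task,\event}$ exactly at completion.

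The only subtle point, and the closest thing to an obstacle, is verifying that the tie-breaking term $\kroneckerDelta{\priority{\task'}}{\priority{\task}} \cdot \heavisideU{\requestTime{\task} - \requestTime{\task'}}$ present in Corollary \ref{theo:interferenceRequestBoundStatic} does not introduce self-interference of $\task$ with itself beyond what the general theorem allows: since the Kronecker term restricted to $\task' = \task$ contributes the own execution demand of $\task_\event$ and earlier instances, this is exactly the contribution the busy-window formulation expects, so the specialization is consistent and the corollary follows without further work.
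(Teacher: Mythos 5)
Your proposal matches the paper's own argument: the paper likewise proves this corollary by substituting $\priority{\task'} \geq \priority{\task}$ for $\blacksquare \geq \square$ and citing Corollary \ref{theo:interferenceRequestBoundStatic} together with Theorem \ref{theo:ResponseTimeAnalysisWithUnifiedEventBound}. Your additional remarks on the tie-breaking Kronecker term and on why the general theorem's proof only uses the relation through the interference mask are consistent elaborations of that same route, not a different approach.
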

\begin{proof} Replacing $\blacksquare \geq \square$ with $\priority{\task'} \geq \priority{\task}$, the proof follows directly from theorem \ref{theo:interferenceRequestBoundStatic} and theorem \ref{theo:ResponseTimeAnalysisWithUnifiedEventBound}.
\end{proof}\qed
Corollary \ref{Cor:staticResponseTimeAnalysis} solves problem \ref{prob:postulatedResponseTimeAnalysis}.

\begin{corollary} [Dynamic response time analysis] Assume a given task set with dynamic priorities. The abstract given relation $\blacksquare \geq \square$ is then replaced by $\absoluteDeadline{\task'} \leq \absoluteDeadline{\task}$, formulating the dynamic priority scheme corresponding to EDF scheduling. The response time  then becomes
\label{Cor:dynamicResponseTimeAnalysis}  
\begin{equation}
\forall \task_\event \in \taskSet: \hspace{2mm}
\pmskRlf{\absoluteDeadline{\task'} \leq \absoluteDeadline{\task}}{\task, \task'}{t_{\task,\event}} + \pmskRbf{\absoluteDeadline{\task'} \leq \absoluteDeadline{\task}}{\task, \task'}{\responseTime{\task,\event}}{\responseTime{\task,\event}}-\responseTime{\task,\event}=0
\end{equation}
\end{corollary}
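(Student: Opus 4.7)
The plan is to mirror the proof of Corollary~\ref{Cor:staticResponseTimeAnalysis} and reduce everything to Theorem~\ref{theo:ResponseTimeAnalysisWithUnifiedEventBound} by substituting the correct priority predicate. Theorem~\ref{theo:ResponseTimeAnalysisWithUnifiedEventBound} is already stated for an arbitrary relation $\blacksquare \geq \square$ between two job attributes and the whole derivation from the average-load axiom was carried out in that generality. So the work of the corollary is essentially a bookkeeping exercise: identify the EDF priority predicate and check that the two terms of the unified equation retain their intended meaning under the substitution.

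First I would instantiate $\blacksquare \geq \square$ by the EDF predicate $\absoluteDeadline{\task'} \leq \absoluteDeadline{\task}$ (reversal of inequality is intentional because a smaller absolute deadline means a higher EDF priority). The interfering request bound then takes the concrete shape given by Corollary~\ref{cor:InterferenceRequestBoundDynamicScheduling}, namely the integral of the Dirac comb of task $\task'$ masked by $[\,\heavisideD{\absoluteDeadline{\task} - \absoluteDeadline{\task'}} + \kroneckerDelta{\absoluteDeadline{\task'}}{\absoluteDeadline{\task}} \cdot \heavisideU{\requestTime{\task} - \requestTime{\task'}}\,]$, so the Heaviside mask selects exactly those jobs that are scheduled ahead of $\task_\event$ under EDF with the standard tie-break on request time. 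This gives $\pmskRbf{\absoluteDeadline{\task'} \leq \absoluteDeadline{\task}}{\task,\task'}{t}{\interval}$ the intended semantics and, via the definition of the remaining load from the same substitution, gives $\pmskRlf{\absoluteDeadline{\task'} \leq \absoluteDeadline{\task}}{\task,\task'}{\requestTime{\task,\event}}$ the correct interpretation as the EDF backlog accumulated up to the release of $\task_\event$.

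With these two terms in place, the busy-window identity of Theorem~\ref{theo:ResponseTimeAnalysisWithUnifiedEventBound}, applied on the interval $[\requestTime{\task,\event}, \requestTime{\task,\event} + \responseTime{\task,\event})$, states exactly that the average load of the jobs with EDF priority at least that of $\task_\event$, augmented by the carried-over backlog, equals one at the moment the processor first becomes available to execute $\task_\event$'s last unit of work. Clearing the denominator $\responseTime{\task,\event}$ and rearranging yields the displayed fixed-point equation verbatim.

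The main obstacle, and the only point where the dynamic case behaves differently from the static one, is that in EDF the set of interfering jobs depends on the particular instance $\task_\event$ through its absolute deadline, so one cannot reduce the analysis to the first job of each task under a critical-instant argument, and the remaining-load term $\pmskRlf{\absoluteDeadline{\task'} \leq \absoluteDeadline{\task}}{\task,\task'}{\requestTime{\task,\event}}$ is in general nonzero. I would therefore emphasise that the quantifier $\forall \task_\event \in \taskSet$ is essential and that the remaining load must not be dropped, in contrast with the simplified form at the end of the proof of Theorem~\ref{theo:ResponseTimeAnalysisWithUnifiedEventBound}; apart from this, the computation is a direct specialisation and no new algebra is required.
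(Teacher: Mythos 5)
Your proposal is correct and follows essentially the same route as the paper: instantiate the abstract relation $\blacksquare \geq \square$ with $\absoluteDeadline{\task'} \leq \absoluteDeadline{\task}$, identify the resulting interference mask with Corollary~\ref{cor:InterferenceRequestBoundDynamicScheduling}, and conclude by Theorem~\ref{theo:ResponseTimeAnalysisWithUnifiedEventBound}. Your added remark that the remaining-load term cannot be dropped in the dynamic case (unlike the critical-instant simplification available for static priorities) is a useful clarification but does not change the argument.
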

\begin{proof}  Replacing $\blacksquare \geq \square$ with $\absoluteDeadline{\task'} \leq \absoluteDeadline{\task}$, the proof follows directly from theorem \ref{theo:interferenceRequestBoundDynamic} and theorem \ref{theo:ResponseTimeAnalysisWithUnifiedEventBound}.  
\end{proof}\qed

\begin{example} [Response time analysis] Consider again the example task set given in table \ref{Tab:exampleTaskSet}. The response time analysis implemented in the CAS supports static and dynamic scheduling.  The CAS gives the following output for the example task set. The output is printed as a list with the following format: $\{\requestTime{\task,\event}, \eRlf{\task,\task'}{t}{\task' \geq \task},\responseTime{\task,\event}, \relativeDeadline{\task}\}$. In the following each of this lists represent a task and the analysis provides the response times of each job.

\noindent
\includegraphics[width=1.0\textwidth,left]{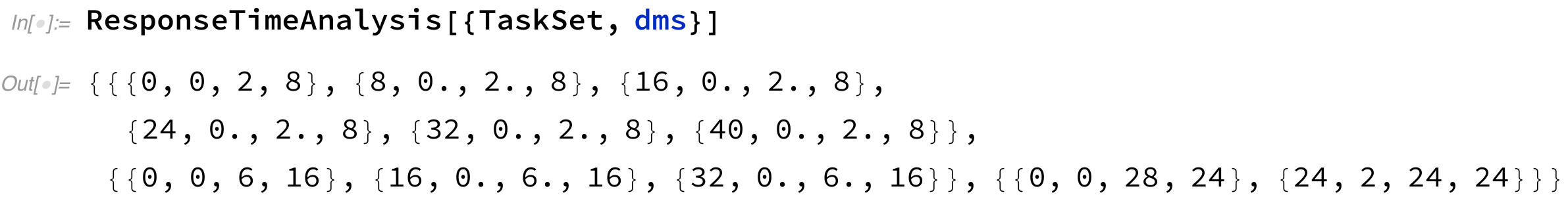}

\noindent
\includegraphics[width=1.0\textwidth,left]{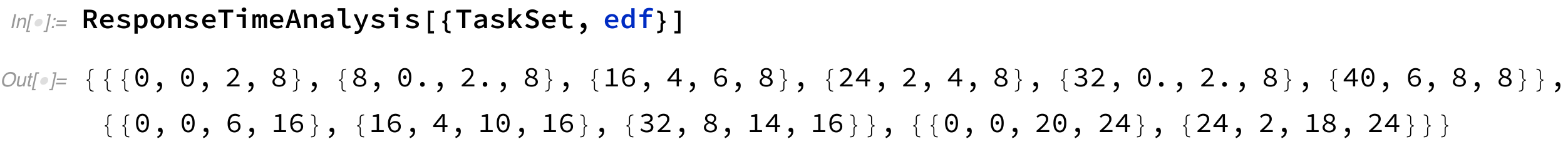}

Based on this output, it is possible to build an intuitive plot showing the response times of all tasks instances or jobs as bars on their release time. 
In such a diagram, an orange plot bar indicates the computed response time at the specified release time. Negative blue bars give the remaining load at the release time as well. Additionally, in the plots given in figure \ref{CAS-ResponseTimeAnalysisPlotDMS} and \ref{CAS-ResponseTimeAnalysisPlotEDF} the relative deadline is given as a lightweight orange colour in the background\footnote{The plots shown are originally given from the CAS.}. Note that the results are the same as expected from the schedules given in figure \ref{TaskSet}.
\end{example}

\figureLarge[Response time analysis plot of the example task set scheduled under DMS]{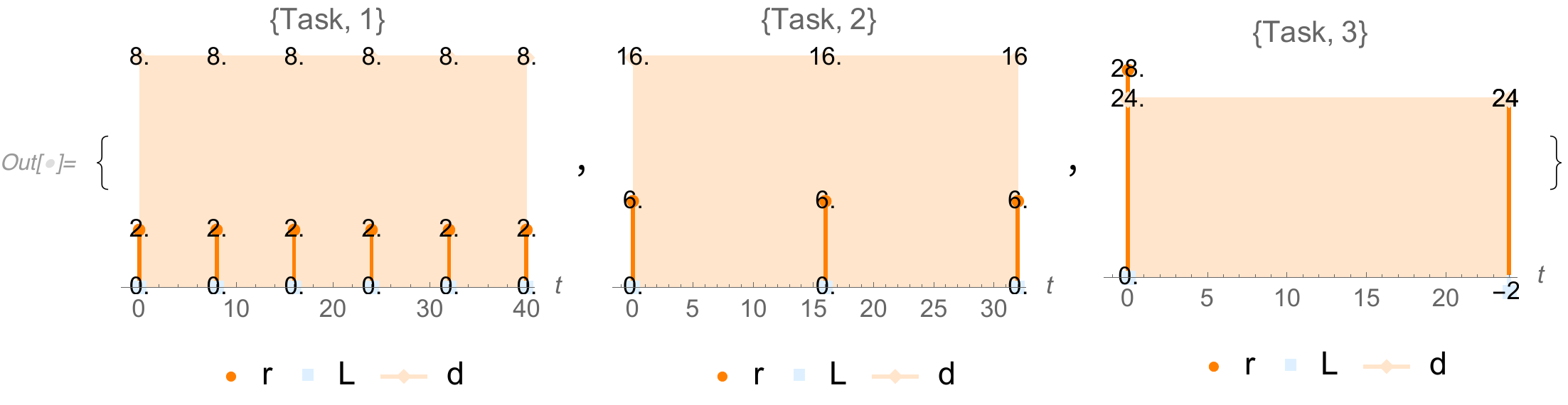}

\figureLarge[Response time analysis plot of the example task set scheduled under EDF]{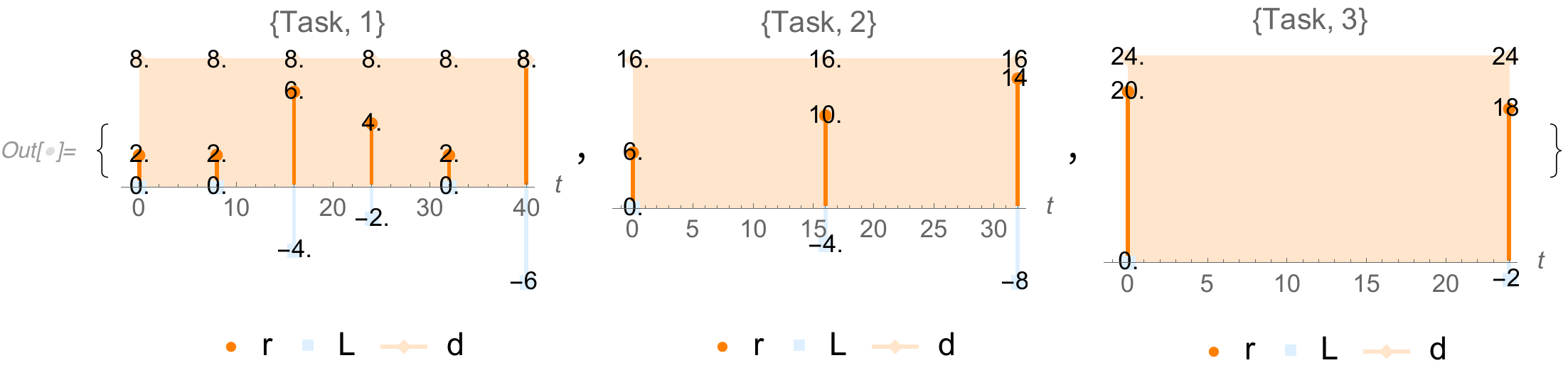}

A logical combination of the two selecting Heaviside functions describes a hierarchical scheduler, as shown in theorem \ref{theo:interferenceRequestBoundHierarchical}.

\begin{corollary} [Hierarchical response time analysis] Assume a scheduler which scheduled tasks by their given priorities and all tasks with the same priority by their deadline. A hierarchical busy window response time analysis is given by  
\begin{equation}
\forall \task_\event \in \taskSet: \hspace{2mm}
\pmskRlf{\leq}{\task, \task'}{t_{\task,\event}} + \pmskRbf{\leq}{\task, \task'}{\responseTime{\task,\event}}{\responseTime{\task,\event}}-\responseTime{\task,\event}=0
\end{equation}
\end{corollary}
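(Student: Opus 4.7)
The plan is to instantiate the abstract unified response time analysis of Theorem \ref{theo:ResponseTimeAnalysisWithUnifiedEventBound} with the hierarchical interference criterion already established in Theorem \ref{theo:interferenceRequestBoundHierarchical}, exactly as Corollary \ref{Cor:staticResponseTimeAnalysis} and Corollary \ref{Cor:dynamicResponseTimeAnalysis} were obtained by specialising to $\priority{\task'}\geq\priority{\task}$ and $\absoluteDeadline{\task'}\leq\absoluteDeadline{\task}$, respectively. The abstract relation $\blacksquare\geq\square$ of the unified theorem is to be identified with the composite Heaviside selector used in Theorem \ref{theo:interferenceRequestBoundHierarchical}, for which the corollary uses the shorthand $\leq$.

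First I would observe that the derivation of Theorem \ref{theo:ResponseTimeAnalysisWithUnifiedEventBound} never exploits the internal form of $\blacksquare\geq\square$: it only uses that $\pmskRbf{\blacksquare\geq\square}{\task,\task'}{t}{t}$ accumulates exactly the execution demand that must be processed no later than $\task_\event$, together with the fact that the end of the busy interval is characterised by the average load being equal to $1$. Both ingredients are relation-agnostic, so the fixed-point identity transfers verbatim as soon as a consistent selector mask is supplied.

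Second, Theorem \ref{theo:interferenceRequestBoundHierarchical} has already verified that the composite mask
\[
\max\bigl(\heavisideD{\priority{\task'}-\priority{\task}},\ \kroneckerDelta{\task}{\task'}\cdot[\,\heavisideD{\absoluteDeadline{\task}-\absoluteDeadline{\task'}}+\kroneckerDelta{\absoluteDeadline{\task'}}{\absoluteDeadline{\task}}\cdot\heavisideU{\requestTime{\task'}-\requestTime{\task}}\,]\bigr)
\]
selects exactly the jobs that are scheduled no later than $\task_\event$ under the hierarchical policy: the left branch captures strictly higher static-priority tasks, while the right branch picks, among tasks of equal static priority, those of smaller absolute deadline, with ties broken by earlier request time. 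Hence $\pmskRbf{\leq}{\task,\task'}{\responseTime{\task,\event}}{\responseTime{\task,\event}}$ really is the cumulative interference seen by $\task_\event$ in a hierarchical scheduler, and $\pmskRlf{\leq}{\task,\task'}{t_{\task,\event}}$ is its associated remaining backlog at the release instant.

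Plugging this selector into the unified fixed-point equation of Theorem \ref{theo:ResponseTimeAnalysisWithUnifiedEventBound} immediately yields the claimed identity. The only potential obstacle is to be sure that combining the two independent Heaviside branches with $\max$ still defines a consistent scheduler, i.e., that the same pair of jobs cannot be chosen as interferers of each other; but that consistency is exactly what the tie-breaking Kronecker factor in Theorem \ref{theo:interferenceRequestBoundHierarchical} enforces. Consequently the corollary reduces to a direct substitution, and no new convergence, monotonicity, or busy-window argument is required beyond what was already used for the static and dynamic cases.
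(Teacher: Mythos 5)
Your proposal matches the paper's own proof, which simply states that the corollary follows directly from Theorem \ref{theo:interferenceRequestBoundHierarchical} and Theorem \ref{theo:ResponseTimeAnalysisWithUnifiedEventBound}; you substitute the hierarchical Heaviside selector for the abstract relation $\blacksquare \geq \square$ exactly as the static and dynamic corollaries do. Your additional remarks on the relation-agnostic nature of the unified fixed-point derivation and the consistency of the $\max$-combined mask are a sound (and slightly more explicit) elaboration of the same argument.
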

\begin{proof} The analysis directly follows from theorem \ref{theo:interferenceRequestBoundHierarchical} and theorem \ref{theo:ResponseTimeAnalysisWithUnifiedEventBound}. 
\end{proof}\qed

\begin{example} [Response time analysis with hierarchical scheduler] To consider hierarchical scheduling, the task set given in table \ref{Tab:exampleTaskSet} is modified. To highlight the effect of hierarchical scheduling, we add a few tasks and to decrease the utilization of the original task set. Some other parameters are changed as well. We therefore use the following task set: 

\noindent
\includegraphics[width=0.7\textwidth, left]{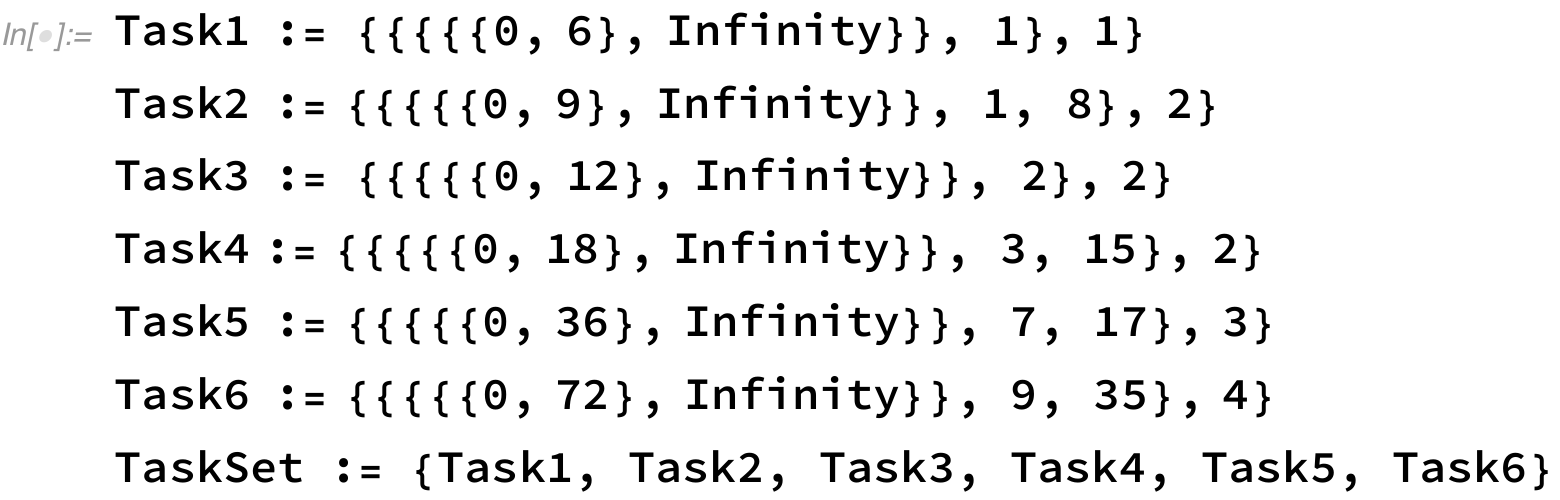}

In this task set, the last number denotes the priority level, ignored under dynamic scheduling. In static scheduling, a task with the lowest number has the highest priority. The CAS computes the following output, plotted in figure \ref{CAS-HierarchicalSchedulingPlotEDF} and figure \ref{CAS-HierarchicalSchedulingPlotDMS}:

\noindent
\includegraphics[width=1.0\textwidth,left]{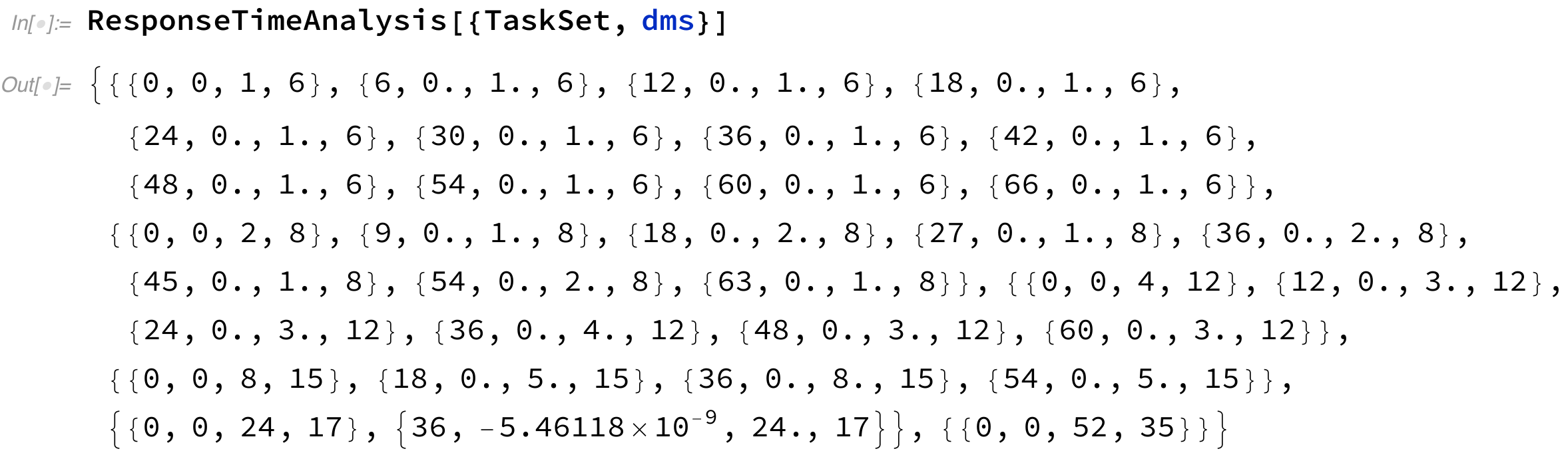}

\noindent
\includegraphics[width=1.0\textwidth,left]{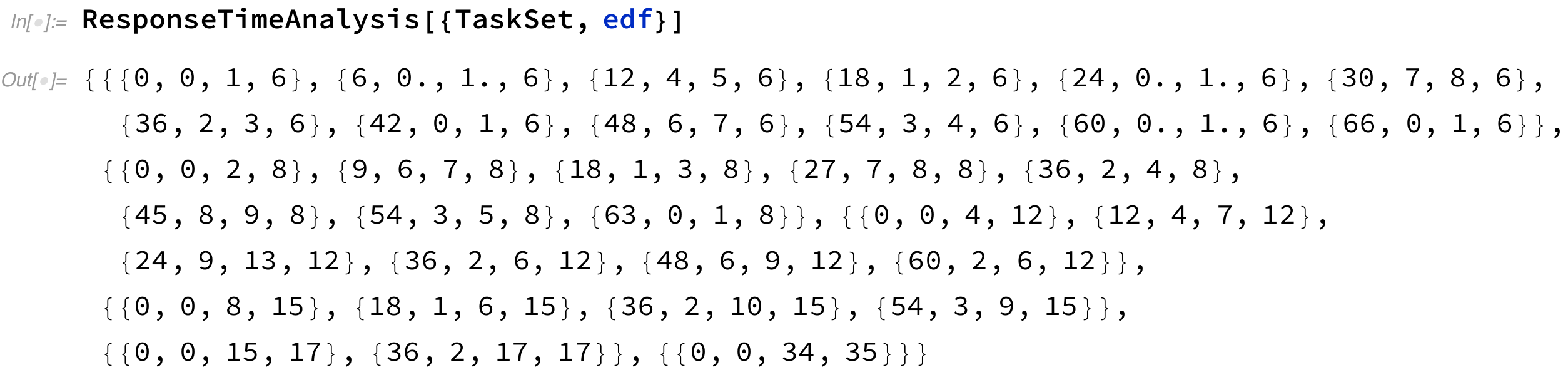}

As a result, the schedule of the second task set is feasible if all tasks scheduled dynamically. If a hierarchical scheduler is used then the worst-case response time of all jobs of task $\task_1$ is decreased because of its high priority, while task $\task_6$ does not hold its deadline anymore, because of its low priority. All other tasks except task $\task{5}$ have the same priority an, therefore, are scheduled dynamically. However, task $\task_2$ and task $\task_3$ will miss their deadlines because of the high priority of task $\task_1$.

\end{example}

\figureLarge[Response Time Analysis plot of the second task set scheduled hierarchical by DMS and EDF]{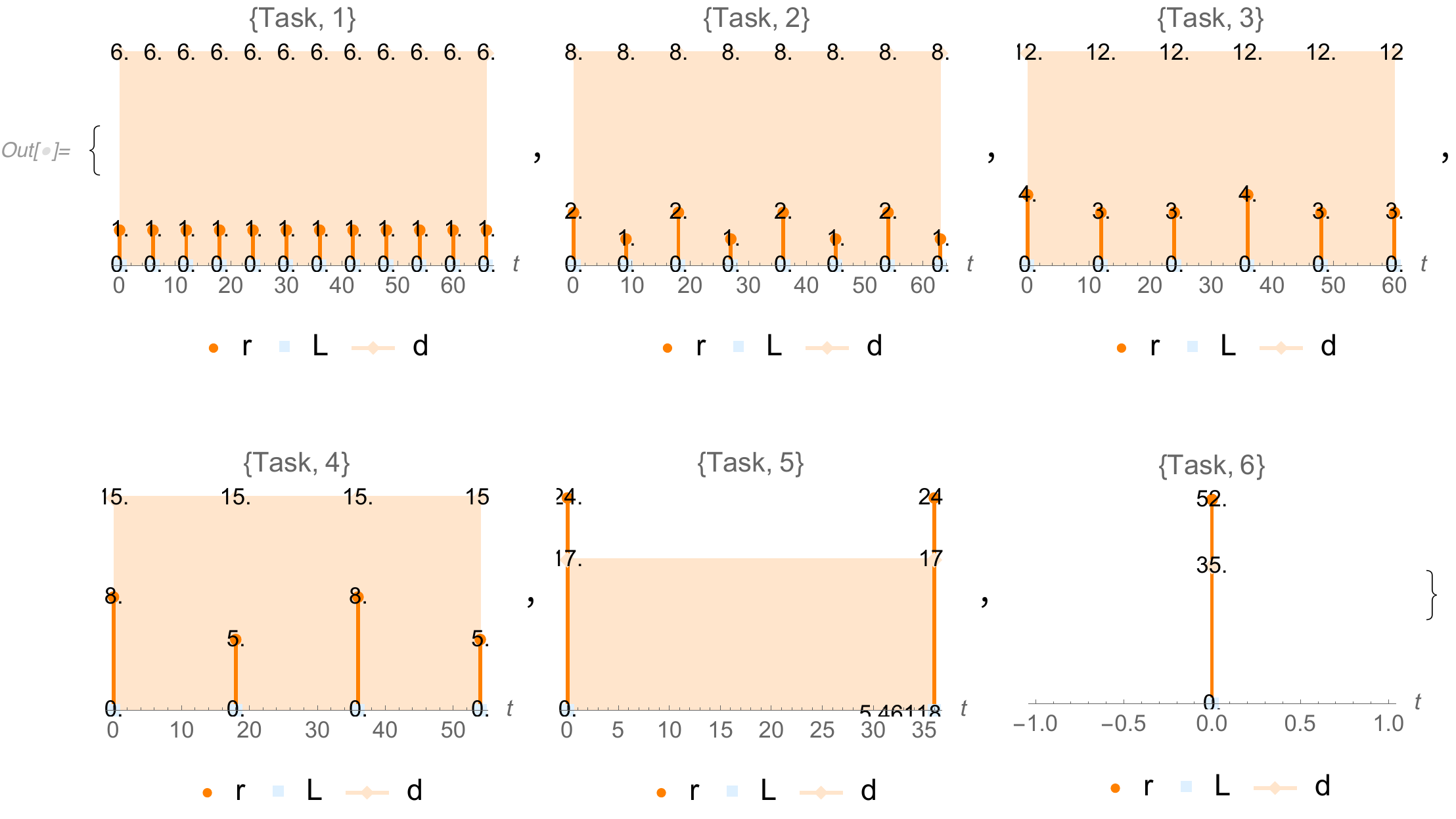}

\figureLarge[Response Time Analysis plot of the second task set scheduled only by EDF]{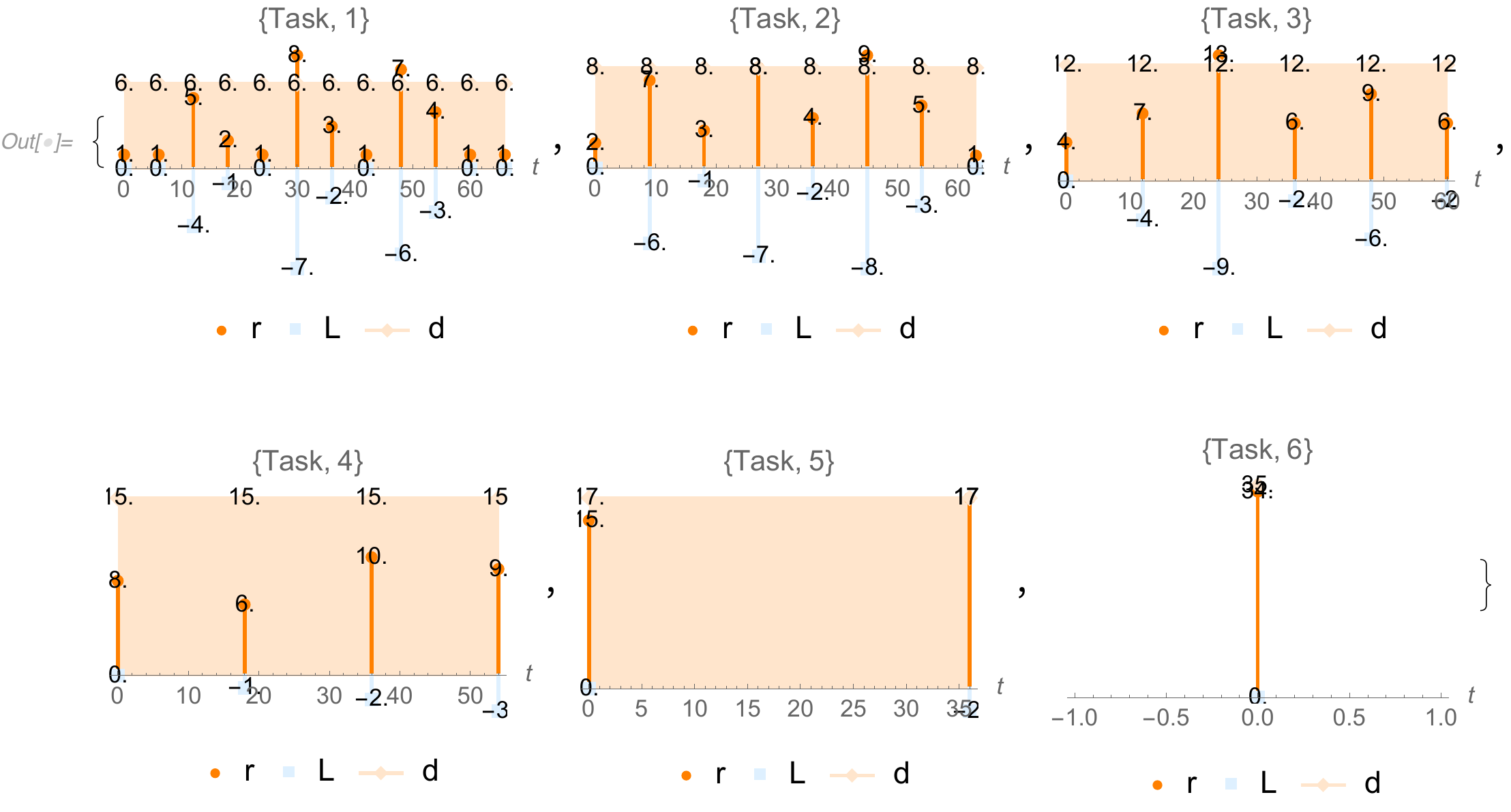}

\subsubsection{A tighter response time analysis in dynamic scheduling}
\label{scrivauto:222}

The sporadic and arbitrary model developed by \cite{tindell1994extendible} was applied to dynamic scheduling by \cite{spuri1996analysis}. The work of \cite{Palencia:1998} generalizes \cite{spuri1996analysis} approach. However, if we analyze a fully utilized task set a general problem with the previous work arises.

\figureLarge[Schedules of the example tasks set: in static (a.) and dynamic scheduling (b.-d.)]{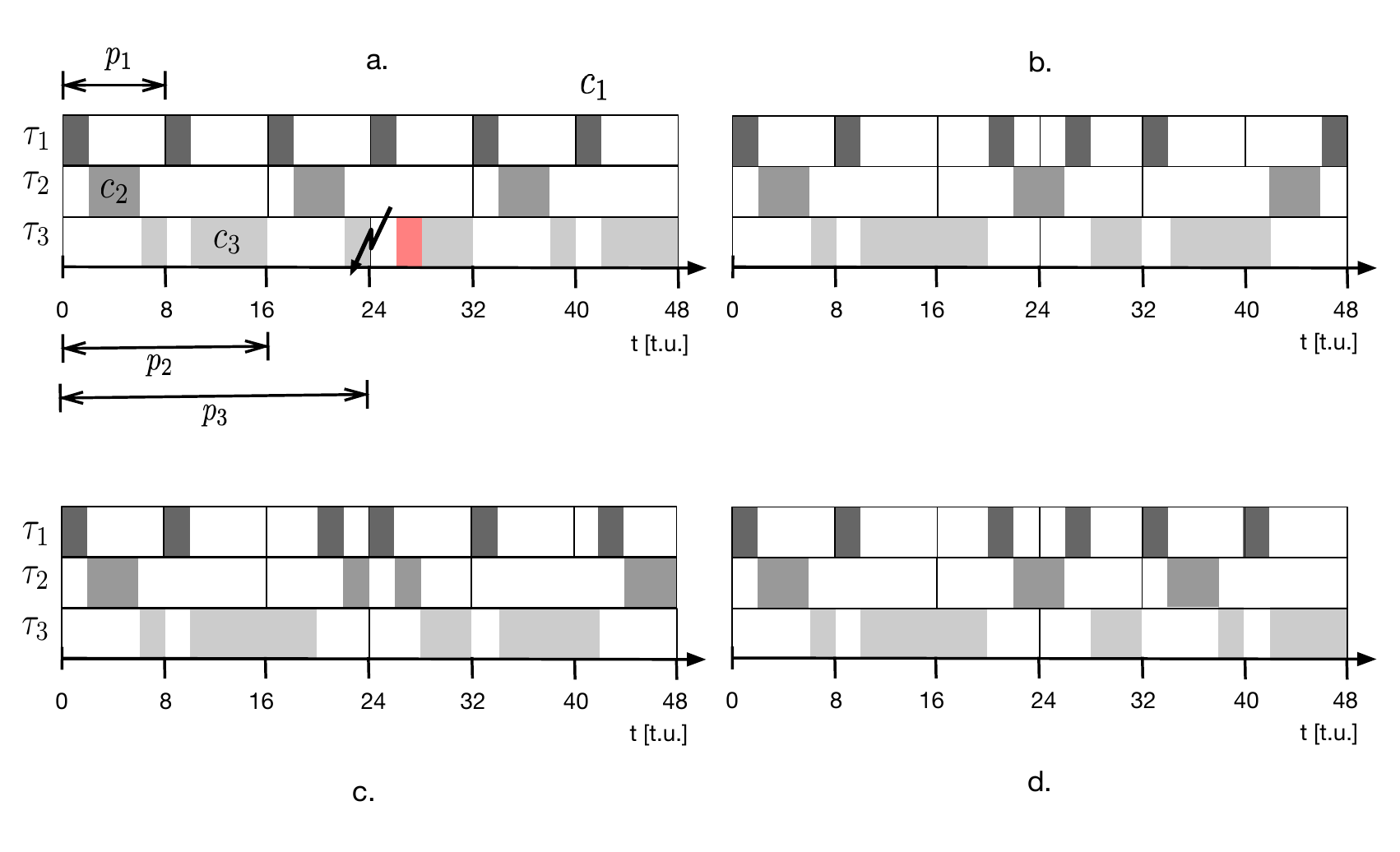}

\begin{example} [Spreading worst case in dynamic scheduling] Let us now consider the schedules of the example task set as given in figure \ref{Ex:exampleTaskSet} as given in detail: Figure \ref{TaskSet}a gives the static schedule and does not hold the deadline of task $\task_3$ because the task sets utilization is exact $\utilization_\taskSet = 1$. Therefore it exists only a dynamic schedule. Related work \cite{spuri1996analysis} and \cite{Guan:2014} assumes that, if the absolute deadlines of jobs are equal, any of these jobs are scheduled. This assumption leads to different schedules as shown in figure \ref{TaskSet}b., \ref{TaskSet}c. and \ref{TaskSet}d\footnote{This example shows that EDF is underspecified, which leads to non-deterministic behaviour. Maybe this should be the reason that EDF is not entirely accepted in industry applications.}. The worst-case in this scenario is that the worst-case response time of each task is equal to its relative deadline because of the chosen utilization. This leads to the worst case response times $\maxResponseTime{\task_1} = 8$, $\maxResponseTime{\task_2} = 16$ and 
$\maxResponseTime{\task_3} = 24$. However, as seen in figure \ref{TaskSet}, the worst case of different tasks occur in different schedules.
\label{ex:SpreadingWorstCaseDynamicScheduling}
\end{example}

The behaviour of a dynamic scheduler like EDF is non-deterministic. If no additional criterion is given, a dynamic scheduler may dispatch any of the tasks if deadlines are equal. However, this is true for static scheduling as well, and this case is prevented by giving different priorities to tasks. In the model of arbitrary deadlines, the original concept given by \cite{Liu:1973} has been expanded by \cite{tindell1994extendible}. In this model, any job of a task with the same priority is dispatched if its request time is shorter than any request time of other jobs with the same priority. If we add this simple criterion to dynamic scheduling, the schedule of all jobs becomes deterministic. Because the previous work only addresses busy windows, it is not possible to formulate additional scheduling criteria. However, this can be done with the new approach presented in this paper.

\begin{theorem} [Spreading worst case in dynamic scheduling] On the assumption that a dynamic scheduler is free do decide which job is scheduled if the absolute deadlines are equal the worst-case respond time of different tasks occur in different schedules and therefore the worst case response time is over estimated: 
\label{theo:SpreadingWorstCaseDynamicScheduling} 
\begin{equation}
\eEbf{\task,\event}{n \leq k}{t}{\mskInterval{a}{b}} \cdot \compLoadVectorSym^n_{\task,\event} \cdot \heavisideD{\absoluteDeadline{\task} - \absoluteDeadline{\task'}} + \kroneckerDelta{\absoluteDeadline{\task'}}{\absoluteDeadline{\task}} \cdot \heavisideU{\requestTime{\task'} - \requestTime{\task}} \leq \eEbf{\task,\event}{n \leq k}{t}{\mskInterval{a}{b}} \cdot \compLoadVectorSym^n_{\task,\event} \cdot \heavisideU{\absoluteDeadline{\task} - \absoluteDeadline{\task'}}
\end{equation}     
\end{theorem}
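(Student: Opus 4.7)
The plan is to reduce the claim to a pointwise inequality between the two Heaviside-based job-selection terms and then invoke the non-negativity of the remaining factor $\eEbf{\task,\event}{n \leq k}{t}{\mskInterval{a}{b}} \cdot \compLoadVectorSym^n_{\task,\event}$. Since this common factor is a count of Dirac pulses times a vector of worst-case execution times, it is always non-negative, so the inequality is monotone in whatever multiplicative mask we attach to the right. The essential content of the theorem is therefore a statement about the two masks: the LHS mask is a tightened, deterministic EDF tie-breaker, while the RHS mask is the loose selector $\heavisideU{\absoluteDeadline{\task} - \absoluteDeadline{\task'}}$ that counts every interfering job whose absolute deadline is at most $\absoluteDeadline{\task}$.

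The core step is a three-case analysis on the relation between $\absoluteDeadline{\task'}$ and $\absoluteDeadline{\task}$, using definitions \ref{def:upperHeavisideFunction} and \ref{def:lowerHeavisideFunction}. If $\absoluteDeadline{\task'} < \absoluteDeadline{\task}$, then $\heavisideD{\absoluteDeadline{\task} - \absoluteDeadline{\task'}} = 1$ and $\kroneckerDelta{\absoluteDeadline{\task'}}{\absoluteDeadline{\task}} = 0$, while $\heavisideU{\absoluteDeadline{\task} - \absoluteDeadline{\task'}} = 1$, so both sides of the mask coincide. If $\absoluteDeadline{\task'} > \absoluteDeadline{\task}$, all three Heaviside/Kronecker terms vanish, so the mask is $0$ on both sides. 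The only non-trivial case is $\absoluteDeadline{\task'} = \absoluteDeadline{\task}$, where the RHS mask is $\heavisideU{0} = 1$, but the LHS mask collapses to $\heavisideU{\requestTime{\task'} - \requestTime{\task}} \in \{0,1\}$, so the LHS mask is bounded above by $1$ and hence by the RHS mask. Combining the three cases yields the pointwise mask inequality, which, multiplied by the non-negative common factor, gives the claim.

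Conceptually, this step captures exactly the phenomenon from Example \ref{ex:SpreadingWorstCaseDynamicScheduling}: whenever two jobs share an absolute deadline, the loose selector treats both as interfering, while the tie-breaker of the LHS admits only the one selected by the request-time criterion $\heavisideU{\requestTime{\task'} - \requestTime{\task}}$. The difference between the two sides is therefore supported exactly on synchronous, equal-deadline instants, which is where the non-deterministic EDF spreading effect occurs.

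The main obstacle, and the only point that is not pure bookkeeping, will be checking that the tie-breaker really does give a \emph{deterministic} and well-defined scheduler in the sense of Definition (Task scheduler), so that the inequality is not only a mask comparison but also corresponds to a legitimate schedule that realises the tighter response time. This is essentially an appeal to the arbitrary-deadline discipline of \cite{tindell1994extendible} transposed from static to dynamic priorities, and the algebraic inequality above is what justifies that this tightened mask never over-counts interference relative to the conservative EDF bound of Corollary \ref{cor:InterferenceRequestBoundDynamicScheduling}.
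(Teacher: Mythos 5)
Your proposal is correct and takes essentially the same route as the paper: the paper's proof rests on the identity $\heavisideU{\absoluteDeadline{\task}-\absoluteDeadline{\task'}}=\heavisideD{\absoluteDeadline{\task}-\absoluteDeadline{\task'}}+\kroneckerDelta{\absoluteDeadline{\task'}}{\absoluteDeadline{\task}}$ together with the observation that multiplying the equal-deadline term by the $\{0,1\}$-valued tie-breaker $\heavisideU{\requestTime{\task'}-\requestTime{\task}}$ can only decrease it, which is exactly what your three-case analysis verifies pointwise before attaching the non-negative common factor $\eEbf{\task,\event}{n \leq k}{t}{\mskInterval{a}{b}} \cdot \compLoadVectorSym^n_{\task,\event}$. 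Your reading of the statement as a mask inequality scaled by that common factor (i.e., with the Kronecker term inside the bracket, as in Corollary \ref{cor:InterferenceRequestBoundDynamicScheduling}) is the intended one and matches the paper's argument.
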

\begin{proof} If we consider the interfering request bound the proof follows directly from corollary \ref{cor:InterferenceRequestBoundDynamicScheduling}:  
\begin{equation}
\eEbf{\task,\event}{n \leq k}{t}{\mskInterval{a}{b}} \cdot \compLoadVectorSym^n_{\task,\event} \cdot \heavisideU{\absoluteDeadline{\task} - \absoluteDeadline{\task'}} = \eEbf{\task,\event}{n \leq k}{t}{\mskInterval{a}{b}} \cdot \compLoadVectorSym^n_{\task,\event} \cdot \heavisideD{\absoluteDeadline{\task} - \absoluteDeadline{\task'}} + \kroneckerDelta{\absoluteDeadline{\task'}}{\absoluteDeadline{\task}} 
\end{equation}
Therefore, if we add any criteria in the case that the absolute deadlines are equal, e.g. the request time of a job like $\kroneckerDelta{\absoluteDeadline{\task'}}{\absoluteDeadline{\task}} \cdot \heavisideU{\requestTime{\task'} - \requestTime{\task}}$, than in some cases no interference will occur. If an interference will not occur in some cases the resulting request is lower:
\begin{equation}
\eEbf{\task,\event}{n \leq k}{t}{\mskInterval{a}{b}} \cdot \compLoadVectorSym^n_{\task,\event} \cdot \heavisideD{\absoluteDeadline{\task} - \absoluteDeadline{\task'}} + \kroneckerDelta{\absoluteDeadline{\task'}}{\absoluteDeadline{\task}} \cdot \heavisideU{\requestTime{\task'} - \requestTime{\task}} \leq \eEbf{\task,\event}{n \leq k}{t}{\mskInterval{a}{b}} \cdot \compLoadVectorSym^n_{\task,\event} \cdot \heavisideU{\absoluteDeadline{\task} - \absoluteDeadline{\task'}}
\end{equation}  
Because in periodically scheduling the critical jobs of $n-1$ task will have a request time shorter than the request time of one task, the worst case response time of all these jobs, except one, will be short compared to the situation the job can be free choose by the scheduler.
\end{proof}\qed

As we have seen in by the schedules of example \ref{Ex:exampleTaskSet} given in figure \ref{TaskSet}b., \ref{TaskSet}c. and \ref{TaskSet}d. the worst-case behaviour of dynamically scheduled tasks spreads over different schedules. Theorem \ref{theo:SpreadingWorstCaseDynamicScheduling} proofs that a task set of $n$ tasks will have shorter worst-case response bounds for $n-1$ tasks and one worst-case time equal to related work. Therefore, to the best of our knowledge, we found a closer worst-case response time estimation bound than any related work. If we only assume a dynamic scheduler schedules the task with the lowest request time first, if the absolute deadlines are equal, then the worst case response time is tighter\footnote{This assumption is well accepted in static scheduling, therefore it is not surprising to adapt it to dynamic scheduling.}. Consider example \ref{ex:SpreadingWorstCaseDynamicScheduling} again: If we add this additional criteria to the scheduler, the maximal response times become $\maxResponseTime{\task_1} = 8$, $\maxResponseTime{\task_2} = 10$ and 
$\maxResponseTime{\task_3} = 20$. However, because of the presented assumptions and theorems, we can always be sure to be equal or better than previous work \footnote{Note, if two jobs will have the same request time and the same absolute deadline the problem arises again and another criterion must be considered. However, it is easy to add any of this to the interference mask.}.

\section{Conclusion}
\label{scrivauto:231}

\renewcommand{\arraystretch}{2}
\begin{table}
     \centering
\begin{tabular}{|c|c|c|} \arrayrulecolor{black} \hline
\multicolumn{2}{|c|}{Request Bound for} \\
\hline
Static Scheduling & Dynamic Scheduling \\
 \hline 
$ \eRbf{\task,\event}{\relativeDeadline{\task'} \leq \relativeDeadline{\task}}{t}{\mskInterval{a}{b}} = $ & $\eRbf{\task,\event}{\absoluteDeadline{\task'} \leq \absoluteDeadline{\task}}{t}{\mskInterval{a}{b}} = $ \\

$ \eEbf{\task,\event}{n \leq k}{t}{\mskInterval{a}{b}} \cdot \compLoadVectorSym^n_{\task,\event} \cdot \heavisideD{\relativeDeadline{\task} - \relativeDeadline{\task'} } $ & $\eEbf{\task,\event}{n \leq k}{t}{\mskInterval{a}{b}} \cdot \compLoadVectorSym^n_{\task,\event} \cdot \heavisideD{\absoluteDeadline{\task} - \absoluteDeadline{\task'} } $ \\ 

$ + \kroneckerDelta{\relativeDeadline{\task'}}{\relativeDeadline{\task}} \cdot \heavisideU{\requestTime{\task} - \requestTime{\task'}}$ & $ + \kroneckerDelta{\absoluteDeadline{\task'}}{\absoluteDeadline{\task}} \cdot \heavisideU{\requestTime{\task} - \requestTime{\task'}}$ \\
\hline  \hline
 \multicolumn{2}{|c|}{Hierarchical Scheduling} \\
\hline
\multicolumn{2}{|c|}{
$ \eEbf{\task,\event}{n \leq k}{t}{\mskInterval{a}{b}} \cdot \compLoadVectorSym^n_{\task,\event} \cdot 
\max(\;\heavisideD{\priority{\task'} - \priority{\task}}, \kroneckerDelta{\task}{\task'} \cdot [\;\heavisideD{\absoluteDeadline{\task} - \absoluteDeadline{\task'}} + \kroneckerDelta{\absoluteDeadline{\task'}}{\absoluteDeadline{\task}} \cdot \heavisideU{\requestTime{\task}- \requestTime{\task'}}\;]\;) $ 
} \\
 \hline  \hline
 \multicolumn{2}{|c|}{Average Load} \\
 \hline
 \multicolumn{2}{|c|}{$\averageLoad{\taskSet}{\mskInterval{a}{b}}   = \averageLoad{\taskSet}{t} = \frac{\eRbf{\taskSet}{}{t}{\infty}}{t}$} \\
 \hline
 Utilization Analysis & Response Time Analysis \\
 \hline
 $\eRbf{\task,\event}{* \leq *}{t-\relativeDeadline{\task,\event}}{\hyperPeriodSym} \leq t$ &  $\eRlf{\task}{\mskInterval{0}{t}}{\infty}+\eRbf{\task,\event}{* \leq *}{t}{t}-t=0$ \\
 \hline
 \end{tabular}\\
     \caption{Unified real time scheduling analysis}
     \label{tab:unifiedRealTimeSchedulingAnalysis}
 \end{table}

This paper was motivated by the question whether it exists one unified event- or request bound for all kind of analysis purposes in real-time scheduling theory. Such a function was discovered by applying mathematical techniques from theoretical physics and digital signal processing to the real-time analysis problem. It could be shown that such a unified request bound, and the definition of an average load in real-time systems allows to derivate most of the established real-time analysis algorithms from only these two assumptions. This results in a utilization based analysis and task response time analysis by just one unified event bound. Additionally, static and dynamic scheduling is considered as well in just one equation. The new equation system also covers the analysis of bursty event sequences by introducing hierarchical event streams or event densities as a computation of the convolution of two independent event densities. As a beautiful result, the work allows easily defining hierarchical schedulers. Table \ref{tab:unifiedRealTimeSchedulingAnalysis} gives an overview of the concluding results of this work. We conclude that an interfering request bound for static and dynamic scheduling for the first time in real-time scheduling theory is described by using the same equation structure. Both aspects are covered if we use, for static scheduling, the relative deadlines, and for dynamic scheduling, the absolute deadlines in the equation of interference. It could easily be seen that a few equations with a general mathematical structure will cover the main aspects in preemptive static and dynamic scheduling in the bounded execution time programming of real-time systems. In addition to these results, we also noted that the well-known response time analysis in dynamic scheduling overestimates. In the context of our new mathematical model, we found a better limit for the response time in dynamic scheduling as given in related work. 

In future work, the new model is extended to the adaptive rate model. Because of the rich mathematical models are given in calculus it should be interesting to investigate the impact of the work to the real-time calculus to extend modular models as well as to develop new models for modern fieldbus devices. As the general approach of interfering request bounds built on an abstract criterion, it should be easy to extend the work to multicriticality systems as well as to other widely implemented scheduling algorithms such as time division multiplex (TDMA). 

In this paper, we have not discussed the computational complexity of the problem. The first goal was to develop a new toolbox for real-time scheduling analysis. However, the complexity of the problem is exponential. Therefore approximation techniques already discussed has to be integrated into future work.

\appendix

\section{Mathematical framework}
\label{scrivauto:234}

The main focus of the paper is the adaption of the mathematics used in theoretical physics and digital signal theory. In this appendix, we explain a few notations which are typically not well-known or widely used in the real-time systems community. Additionally, a list of symbols clarifies the notation. One of the goals of this paper is to formulate an easy to use mathematical theory of real-time systems with a clear focus on intuitively simple equations. Therefore a table which lists all symbols is given at the end of the appendix.

For intuitive reading we write $\period{\task}$. This means a function which gives the period of the specified task; The idea is a short notation for $\period{\task} = \period{}(\task)$. The next two definitions are from theoretical physics. In the real-time analysis, we often write summations, and in this paper, we get integrals over two summations. However, writing this in each equation brings a lot of overhead and redundant information. Therefore, we adopt an index based writing notation to the problem:

\begin{definition} [Einstein's notation]   
\begin{equation}
i \in \{1,...,n\}: \hspace{1cm} c_i x^i = c_1 x^1 + ... + c_3 x^3 = \suml{i=1}{n}{c_i x^i} 
\end{equation}
\end{definition}
The notation was introduced by \cite{einstein1916} to simplify multidimensional equations in gravity. However, it can be used to simplify the notations in real-time analysis as well. In this work we use two modified forms to reduce the complexity of equations:
\begin{definition}[Modified Einstein's Notation]   
\begin{equation}
c^k_i x_i = c_1 x_1 + \cdots + c_k x_k = \suml{i=1}{k}{c_i x_i}
\end{equation}
\begin{equation}
c^k_{i,j} x_{i,j} = \suml{i=1}{k}{\summ{j \in J }{k \cdot c_{i,j} x_{i,j}}}
\end{equation}
This idea can be adapted to the request bound:
\begin{eqnarray*}
&&  \integrateL{-\infty}{\infty}{\einsteinRequestSum{t'}{\task}{n \Mod \abs{\compLoadVectorSym}} \cdot \heavisideU{t'-t_a} \cdot \heavisideD{t_b-t'}}{t'} \\
&=& \integrate{[a,b)}{\einsteinRequestSum{t'}{\task}{n \Mod \abs{\compLoadVectorSym}} \cdot \Scheduler{\task,\event}{\task', \minDistance{\event}+n\period{\event}}}{t'} \\
&=& \eEbf{\task,\event}{n \leq k}{t}{\mskInterval{a}{b}} \cdot \compLoadVectorSym^n_{\task,\event} \cdot \Scheduler{n}{\task, \event}{\task'} =
\eRbf{\task,\event}{n \leq k}{t}{\mskInterval{a}{b}} \cdot \Scheduler{n}{\task, \event}{\task'}
\end{eqnarray*}
\end{definition}

An other useful symbol is the Kronecker delta. This function only returns $1$ if both arguments are equal. In the other case the result is $0$. It is used in physics as a short hand notation for matrices. In our case it is used to collect tasks of the same priority. It also can be used to collect jobs of the same tasks.

\begin{definition} [Kronecker Delta] The Kronecker delta is used to write matrices in a compact form. The function returns a $1$ if the two elements given to the function are equal. In all other cases it returns $0$. In real-time analysis this property can be used to identify tasks with the same priority:   
\begin{equation}
\kroneckerDelta{i}{j}=\piecewise{1 &  \hspace{5mm} i=j}{0  &  \hspace{5mm} i \neq j}
\end{equation}
\end{definition}

\pagebreak \newpage
The following table concludes all symbols used in the paper. The first part of the table gives the functions applied from theoretical physics. The second part list all time-related symbols, while the third part introduces event-related elements. The fourth part presents all parameters related to tasks, and the last part lists the symbols used in real-time analysis.

\renewcommand{\arraystretch}{1.3}
\begin{longtable}{|p{1.5cm}|p{9cm}|}
\hline
Symbol & Meaning \\
\hline
& \\
$\diracDelta{t}$            & Dirac delta   \\
$\kroneckerDelta{i}{j}$     & Kronecker delta \\
$\heaviside{t}$             & Heaviside function \\
$\heavisideU{t}$            & upper Heaviside function  \\
$\heavisideD{t}$            & lower Heaviside function \\
& \\
\hline
& \\
$\timepoint{a}$             & point in time  \\
$\interval$                 & time interval \\
$\maxValInterval$           & maximal time interval \\
$\minValInterval$           & minimal time interval \\
$\intervalO$                & interval related to time 0, equivalent to $\timepoint{0}$ \\
$\testInterval{a}{b}$       & interval between $\timepoint{a}$ and $\timepoint{b}$  \\
$\periodSym$                & period   \\
$\jitterSym$                & jitter \\
$\minDistanceSym$           & minimal distance, phase or offset   \\
$\hyperPeriodSym$              & hyper period, the least common multiplier of a set of periods   \\
& \\
\hline
& \\
$\event$                    & event  \\
$\eventListSym$             & event list  \\
$\eventStreamSym$           & event sequence \\
$\eventDensity$             & event stream or event density \\
$\eventDensityMax$          & maximal event density  \\
$\eventDensityMin$          & minimal event density  \\
& \\
\hline
\pagebreak \newpage
\hline
& \\
$\task, \task_n$            & task\\
$\job{n}{\event}$           & job\\
$\interferingTask$          & interfering task\\
$\interferingJob{}{}$       & interfering job\\
$\taskSet$                  & task set \\
$\prioritySym$              & priority   \\
$\taskSetHigher$            & higher priority task set of task $\task$ \\
$\relativeDeadlineSym$      & relative deadline  \\
$\absoluteDeadlineSym$      & absolute deadline $\absoluteDeadlineSym = \periodSym + \relativeDeadlineSym$ \\
$\compLoadSym$              & computational load, execution time  \\
$\bcetSym$                      & best case execution time\\
$\wcetSym$                     & worst case execution time\\
$\wcetVector{\task}{\event}$& a vector of different worst case execution times \\
$\bcetVector{\task}{\event}$& a vector of different best case execution times \\
$\deadlineVectorSym$& a vector of different relative deadlines \\
$\Scheduler{\square}{\task}{\task'} $& scheduler, schedules jobs of tasks $\task,\task'$ according to criterion $\square$ \\
& \\
\hline
& \\
$\ebfSym$                   & event bound function\\
$\rbfSym$                   & request bound function \\
$\dbfSym$                   & demand bound function \\
$\rlSym$                    & remaining load    \\
$\responseTime{\task}$      & response time of a task   \\
$\maxResponseTime{\task}$   & maximal response time of a task  \\
$\minResponseTime{\task}$   & minimal response time of a task  \\
$\utilization{\task}$   & utilization of a task  \\
$\utilization{\taskSet}$              &  utilization of a task set \\
& \\
\hline
\end{longtable}

\bibliographystyle{apalike} 
\bibliography{literatur}

\end{document}